\def\eps{\varepsilon}
\def\RR{\mathbb{R}} \def\NN{\mathbb{N}} \def\ZZ{\mathbb{Z}}
\def\EE{\mathbb{E}} \def\PP{\mathbb{P}}
\newcommand{\tr}{\mathrm{tr}}
\renewcommand{\det}{\operatorname{det}}
\newcommand{\argmax}{\operatornamewithlimits{argmax}}
\newcommand{\argmin}{\operatornamewithlimits{argmin}}
\theoremstyle{plain}
\newtheorem{theorem}{Theorem}[section]
\newtheorem{proposition}[theorem]{Proposition}
\newtheorem{lemma}[theorem]{Lemma}
\newtheorem{assumption}[theorem]{Assumption}
\theoremstyle{remark}
\newtheorem{remark}{Remark}[section]
\def\<{\langle} \def\>{\rangle}
\title[Sharp Asymptotic Estimates in Stochastic Systems with Small
  Noise]{Sharp Asymptotic Estimates for Expectations, Probabilities, and
  Mean First Passage Times in Stochastic Systems with Small
  Noise}
\author{Tobias Grafke$^1$}
\address{$^1$Mathematics Institute, University of Warwick, Coventry CV4 7AL, United Kingdom}
\author{Tobias Sch{\"a}fer$^2$}
\address {$^2$Department of Mathematics, College of Staten Island 1S-215, 2800 Victory Blvd., Staten Island, NY \& Physics Program, CUNY Graduate Center, 365 Fifth Avenue New York, NY}
\author{Eric Vanden-Eijnden$^3$}
\address{$^3$Courant Institute, New York University, 251 Mercer Street, New York, NY.}
\begin{document}
\numberwithin{equation}{section}

\begin{abstract}
  Freidlin-Wentzell theory of large deviations  can be used to compute
  the likelihood  of extreme  or rare  events in  stochastic dynamical
  systems via  the solution of  an optimization problem.  The approach
  gives  exponential  estimates that  often  need  to be  refined  via
  calculation of  a prefactor. Here it  is shown how to  perform these
  computations in  practice. Specifically, sharp  asymptotic estimates
  are derived for expectations,  probabilities, and mean first passage
  times  in a  form that  is geared  towards numerical  purposes: they
  require solving  well-posed matrix  Riccati equations  involving the
  minimizer of  the Freidlin-Wentzell action as  input, either forward
  or backward  in time  with appropriate  initial or  final conditions
  tailored to the estimate at hand.  The usefulness of our approach is
  illustrated on  several examples.  In particular,  invariant measure
  probabilities and mean first passage  times are calculated in models
  involving    stochastic    partial   differential    equations    of
  reaction-advection-diffusion type.
\end{abstract}

\maketitle

\tableofcontents

\section{Introduction}
\label{sec:intro}

Rare events in stochastic dynamical systems tend to cluster around
their most likely realization. As a result they have predictable
features that can be calculated via some optimization problem. This
profound observation has been made in numerous fields, and used
e.g. to explain phase transitions in statistical
mechanics~\cite{ellis2006entropy}, derive Arrhenius' law in chemical
kinetics~\cite{arrhenius:1889} , or use semiclassical trajectories in
quantum field
theory~\cite{vainshtein-zakharov-novikov-etal:1982}. Large deviation
theory (LDT)~\cite{varadhan:2008} gives a mathematical justification
to these results and provide us with an action, or rate function, to
minimize in order to calculate paths of maximum likelihood, also known
as \textit{instantons}. The theory also gives exponential asymptotic
estimates of rare event probabilities. While this information is
already useful in many cases, more refined estimates are often
desirable.  These `prefactor' calculations attempt to quantify the
effect of Gaussian fluctuations around the instanton, a notion that
has also been separately rediscovered in the literature through
various means~\cite{berglund:2013}. For example, in the context of
chemical reaction rates, next order refinements of the exponential
reaction rate are known as the Eyring-Kramers law~\cite{eyring:1935,
  kramers:1940}. Similarly in quantum field theory, perturbing around
the semiclassical trajectory, the second order variations leads to a
Gaussian path-integral, which ultimately results in an additional
contribution in the form of a ratio of functional
determinants~\cite{vainshtein-zakharov-novikov-etal:1982}.

Over the last two decades, several computational methods have been
developed to calculate instantons. Among others, we refer to the
string method in the context of gradient
flows~\cite{e-ren-vanden-eijnden:2002, e-ren-vanden-eijnden:2007}, the
minimum action
method~\cite{e-ren-vanden-eijnden:2004,fogedby-ren:2009}, the adaptive
minimum action method (aMAM)~\cite{zhou-ren-e:2008} and the geometric
minimum action method (gMAM)~\cite{heymann-vanden-eijnden:2008,
  heymann-vanden-eijnden:2008-a, vanden-eijnden-heymann:2008}. These
methods are now efficient enough to be used in the context
high-dimensional systems, including stochastically driven partial
differential equations arising in fluid
dynamics~\cite{grafke-grauer-schindel:2015,
  grafke-grauer-schaefer:2015}.

In contrast, surprisingly little work has been done on the numerical
side of prefactor calculations (see
however~\cite{schorlepp-grafke-grauer:2021}). The main objective of this
paper is to show how to extend methods such as MAM or gMAM to
efficiently estimate prefactors in the context of the calculations of
expectations, probabilities, and mean first passage times.

\subsection{Large Deviation Theory and instantons}
\label{sec:inst-large-devi}

Consider a family of stochastic differential equations (SDEs) for
$X^\epsilon_t\in\RR^n$, with drift vector field~$b: \RR^n\to\RR^n$ and
diffusion matrix $a = \sigma\sigma^\top$, where
$\sigma\in\RR^{n\times n}$,
\begin{equation}
  \label{eq:sde}
  dX^\epsilon_t = b(X^\epsilon_t)\,dt + \sqrt{\epsilon}\sigma \,dW_t\,.
\end{equation}
Here, $W_t$ is an $n$-dimensional Wiener process, and we have
introduced a small parameter $\epsilon>0$ to characterize the strength
of the noise.  For simplicity we will assume that the diffusion matrix
$a\in\RR^{n\times n}$ is positive-definite (hence invertible) and
constant (but not necessarily diagonal), i.e.~the case of additive
Gaussian noise---the generalization of the methods presented below to
a covariance matrix $a$ that depends on $x$, i.e.~multiplicative
Gaussian noise, is straightforward.  Large deviations
theory~\cite{varadhan:2008, freidlin-wentzell:2012} indicates that, in
the limit as~$\epsilon\to 0$, the solutions to~\eqref{eq:sde} that
contribute most to the probability of an event or the value of an
expectation are likely to be close to the minimizer of the
Freidlin-Wentzell rate function $S_T$ subject to appropriate boundary
conditions. This action functional is given by
\begin{equation}
\label{eq:Freidlin-Wentzell_action}
S_T(\phi) = \int_0^T\,L(\phi,\dot \phi)\,dt
\end{equation}
with the Lagrangian
\begin{equation}
  L(\phi,\dot \phi) = \tfrac{1}{2}\langle \dot \phi-b(\phi),a^{-1}(\dot
  \phi-b(\phi))\rangle\,
  \equiv \tfrac{1}{2}|\dot \phi -b(\phi)|_a^2.
\end{equation}
Here $\langle x,y\rangle$ stands for the Euclidean scalar product
between the vectors $x$ and $y$ and we introduced the norm induced
by~$a$,
$|x|^2_a = \langle x,a^{-1}x\rangle \equiv \sum_{i,j=1}^n x_i
a_{i,j}^{-1} x_j$. If the diffusion matrix $a$ is the identity, this
norm becomes simply the Euclidean length.

The minimizer of the action (\ref{eq:Freidlin-Wentzell_action}) is
referred to as path of maximum likelihood or \textit{instanton}, and
it can be found by solving the corresponding Euler-Lagrange equations
\begin{equation}
\label{eq:Euler-Lagrange_original}
\frac{d}{dt}\frac{\partial L}{\partial \dot \phi}=\frac{\partial L}{\partial \phi}
\end{equation}
with boundary conditions appropriate to the event under
consideration. Thus, in the context of large deviation theory, the
leading order estimation of probabilities or expectations can be
reduced to the solution of the deterministic
system~\eqref{eq:Euler-Lagrange_original}.

Alternatively, there is a Hamiltonian formulation to the
problem. Taking the Legendre transform of the Lagrangian and
introducing the momentum $\theta = \partial L/\partial \dot \phi$, we
obtain the Hamiltonian
\begin{equation}
  \label{eq:hamiltonian_diffusion}
  H(\phi,\theta) = \langle b(\phi), \theta\rangle+\tfrac{1}{2}\langle \theta, a \theta \rangle,
\end{equation}
and the Euler-Lagrange equations for the instanton become
\begin{equation}
\label{eq:hamilton_time_parametrization}
\dot \phi = b(\phi)+a \theta, \qquad \dot \theta = -(\nabla b(\phi))^\top \theta\,.
\end{equation}
\subsection{Prefactor estimates}
\label{sec:prefactor-estimates}

The instanton gives the leading contribution to the exponential decay
of the probability for observing a rare event. In order to obtain
sharp estimates, one needs to furthermore consider prefactor
contributions.

Intuitively these prefactors can be calculated by accounting for the
effects of the fluctuations around the instanton~$\phi$, which can be
done by linearizing the solution of the SDE~\eqref{eq:sde}
around~$\phi$ and considering 
\begin{equation}
  \label{eq:Zeq0}
  dZ_t = \nabla b(\phi(t)) Z_t\,dt + \sigma\,dW_t.
\end{equation}
The solution to this equation defines a Gaussian process, and the
prefactor contribution to expectations, probabilities, or mean first
passage times can be calculated as specific expectations over this
process. In turns, these expectations are ratios between determinants
of specific positive-definite matrices or operators that can be
expressed in terms of the solutions of deterministic Riccati
equations, as can be intuited by analogy with results from optimal
control theory. Our objectives here are to: (i) formulate these
Riccati equations, including their boundary conditions, in the
specific cases of the calculation of expectations, probabilities, and
mean first passage time; and (ii) develop efficient numerical methods
for their solution.

\subsection{Related works}
\label{sec:related}

From a theoretical point of view, our approach builds on a large
corpus of works dealing with expansion beyond the exponential estimate
of LDT and the evaluation of quadratic path integrals or Wiener
integrals, as initiated by Schilder~\cite{schilder:1966}. Results from
probability theory and stochastic analysis in this direction include
for example the pioneering works by Kifer~\cite{kifer1977asymptotics}
and Azencott~\cite{azencott:1985}. On the analytical side, formal
asymptotic expansions were used e.g.
in~\cite{matkowsky1977exit,maier-stein:1997}, the WKB expansion
in~\cite{fleming-james:1992}, optimal control theory
in~\cite{barles-perthame:1988}, and more recently potential theory
in~\cite{bovier-eckhoff-gayrard-etal:2004, berglund:2013}.  This last
approach allows one to establish rigorously the Eyring-Kramers law for
reversible system~\cite{bovier-hollander:2015}, and has also been
extended to other problems, such as lattice gas
models~\cite{hollander:2004}. Similarly, prefactor calculations
recently included non-reversible
systems~\cite{bouchet-reygner:2016}. From a mathematical perspective
it is much harder to treat the infinite dimensional case, even though
recent breakthroughs have been made at least for systems in detailed
balance~\cite{faris-jona-lasinio:1982, berglund-gentz:2013,
  barret:2015, berglund-di_gesu-weber:2016}.

From a computational viewpoint, none of the references above deal with
explicit calculation of the instanton or the prefactors, and the
equations they derive for these objects, while often very general, do
not lend themselves automatically to numerical implementation.
Explicit (especially numerical) computation of prefactors is usually
confined to quantum field theory, where the evaluation of Gaussian
path integrals is a classical result~\cite{montroll:1952,
  gelfand-yaglom:1960, levit-smilansky:1977, forman:1987}, see
also~\cite{dunne:2008} for a recent review. However, the Lagrangian in
quantum mechanics is usually assumed to take a special form with no
first order time derivative in the Euler-Lagrange equation
(corresponding to a stochastic process in detailed balance in the
stochastic interpretation). The processes we are interested in are
considerably more general from that perspective, and require the more
general methods we develop here.  These methods built on formulas
derived by asymptotic analysis of backward Kolmogorov equations, and
are complementary to those recently proposed
in~\cite{schorlepp-grafke-grauer:2021} using a path integral approach.

\subsection{Main contributions}
\label{sec:maincontrib}

Our main results can be summarized as follows: (1) We provide formal
but short and simple proofs of propositions establishing sharp
estimates for expectations, probability densities, probabilities, exit
probabilities, and mean first passage times. While most of these
results can be found in some form in the literature, they are
scattered in many different papers, and we believe it is useful to
collect and summarize them in one place. The methods used in these
proofs can also be extended to more general situations not covered
here. (2) We phrase each of our theoretical statements in a way geared
towards numerical implementation, unlike what is usually found in the
literature on the topic. (3) We use the geometric approach from gMAM
to provide statements that are valid at infinite time (i.e. on the
invariant measure of the process), by explicitly computing this
infinite time limit via mapping of $t\in (-\infty,0]$ onto the
normalized arclength $s\in(0,1]$ along the instantons involved. (4) We
formally generalize our results to the infinite-dimensional setup,
with applications to stochastic partial equations of
reaction-advection-diffusion type. (5) We also generalize our result
to examples driven by non-Gaussian noise, specifically Markov jump
processes in specific limits. (6) We illustrate the applicability of
our method through tests cases in finite and infinite dimension, in
which we discuss how to perform the numerical calculation involved.

In terms of limitations, our work focuses on situations where the
stochastic system at hand has a single attracting point in the limit
of vanishing noise. This setup is of interest in several situations,
but it excludes the important problem of analyzing rare transitions
between metastable states. Some of the tools we develop here, in
particular the Riccati equations whose solutions enter the expressions
for the prefactors, may be useful to analyze these transitions, but we
will not consider them here.

\subsection{Assumptions and organization}
\label{sec:style}

As stated before, we are interested in obtaining sharp asymptotic
estimates for expectations, probabilities, exit probabilities, and
mean first passage times. We will do so under the generic assumption
that the LDT optimization problem associated with each of these
questions, i.e. the minimization of the action
in~\eqref{eq:Freidlin-Wentzell_action} subject to appropriate
constraints and/or boundary conditions, is strictly convex. This
simplifying assumption guarantees that the solution of the equations
presented below exists and is unique, and therefore allows us to avoid
dealing with local minimizers, flat minima, etc. that may require to
generalize/amend some of the statements below. While this is not
necessary difficult to do, at least formally, it leads to a zoology of
subcases that we want to avoid listing.

To make \eqref{eq:sde} well-posed, we will also make:
\begin{assumption}
  \label{as:1}
  The vector field $b$ is $C^2(\RR^n)$ and such that:
  $$\exists \alpha,\beta>0\ :\ \< b(x),x\> \le \alpha- \beta|x|^2 \quad \forall x\in\RR^n;$$
  and the matrix $a$ is such that:
  $$\exists \gamma,\Gamma \ \ \text{with} \ \ 0< \gamma< \Gamma < \infty\  :\  \gamma|x|^2 \le
  \langle x, a
  x\rangle < \Gamma |x|^2\quad \forall x \in \RR^n.$$
\end{assumption}
\noindent
This assumption guarantees~\cite{mattingly2002ergodicity} that the
solution to the SDE~\eqref{eq:sde} exists for all times and is
ergodic with respect to a unique invariant measure with a probability
density function $\rho_\eps:\RR^n \to (0,\infty)$. This density is the
unique solution to
\begin{equation}
  \label{eq:9}
  0 = -\nabla \cdot \left (b\rho_\eps\right) + \tfrac12\eps a : \nabla \nabla 
  \rho_\eps, \qquad \rho_\eps \ge 0, \qquad \int_{\RR^n} \rho_\eps(x)
  dx = 1, 
\end{equation}
where here and below the colon denotes the trace, i.e. $a:\nabla\nabla
= \tr(a [\nabla \otimes \nabla])$.
We will also make a stronger assumption:
\begin{assumption}
  \label{as:2}
  The ODE $\dot x = b(x)$ has a single fixed point located at $x_*$
  (i.e.~$x_*$ is the only solution to $b(x)=0$), which is linearly
  stable locally (i.e.~the real part of all eigenvalues of the matrix
  $\nabla b(x_*)$ are strictly negative), and globally attracting (i.e.~any
  solution to $\dot x = b(x)$ approaches $x_*$ asymptotically).
\end{assumption}
\noindent
This assumption implies that $\rho_\eps$ becomes atomic on $x=x_*$
as~$\eps\to0$, a property we will need when looking at expectations or
probability on the invariant measure.

\medskip

The remainder of this paper is organized as follows: In
Sec.~\ref{sec:expectations} we will first consider the problem of
calculating sharp estimates of expectations, both at finite time
(Secs.~\ref{sec:ftexpedct} and~\ref{sec:expect-via-girs}) and on the
invariant measure of~\eqref{eq:sde} (Sec.~\ref{eq:IM}). In
Sec.~\ref{sec:pdf}, we will show how to calculate sharp estimates of
probabilities densities at finite (Sec.~\ref{sec:pdffinitet}) and
infinite times (Sec.~\ref{sec:prob-dens-invar}). In
Sec.~\ref{sec:prob} we then build on these results to calculate
probabilities at finite times (Sec.~\ref{sec:prob-finite-t}) and on
the invariant measure of the process
(Sec.~\ref{sec:prob-dens-invar}). Finally, in Sec.~\ref{sec:mfpt} we
consider the problem of mean exit time calculation. These results are
illustrated on finite dimensional examples throughout the paper, and
on infinite dimensional examples in Sec.~\ref{sec:infinited}, where we
consider linear and nonlinear reaction-advection-diffusion equations with random
forcing.  The results in this paper are mostly for diffusions, but
they can be generalized to other set-ups, in particular Markov jump
processes: this is discussed in Sec.~\ref{sec:nonGaussian}. We end the
paper with some conclusions in Sec.~\ref{sec:conclu} and defer some
technical results to Appendices.

\section{Expectations}
\label{sec:expectations}

\subsection{Finite time expectations}
\label{sec:ftexpedct}

Given the observable $f:\RR^n\to\RR$ with $f \in C^2(\RR^n)$, consider
\begin{equation}
  \label{eq:expect}
  A_\eps(T,x) = \EE^x \exp\left(\eps^{-1} f(X^\eps_T)\right)\,,
\end{equation}
where the expectation is taken over samples of the SDE~(\ref{eq:sde})
conditioned on $X^\eps_0 = x$ and evaluated at the final time $t=T<\infty$.
We have the following proposition:

\begin{proposition}
  \label{thm:expectation}
  Let $(\phi_x(t),\theta_x(t))$ solve the instanton equations
  \begin{equation}
    \label{eq:4}
    \begin{aligned}
      \dot \phi_x &= a \theta_x+b(\phi_x), \qquad &&\phi_x(0) = x,\\
      \dot \theta_x &= -(\nabla b(\phi_x))^\top \theta, &&\theta_x(T) = \nabla
      f(\phi_x(T)),
    \end{aligned}
  \end{equation}
  and $W_x(t)$ be the solution to the Riccati equation
  \begin{equation}
    \label{eq:W}
    \dot W_x = -\nabla\nabla\langle b(\phi_x),\theta_x\rangle
    - (\nabla b(\phi_x))^\top W_x
    - W_x (\nabla b(\phi_x)) - W_x a W_x\,,\qquad W_x(T) =
    \nabla\nabla f(\phi_x(T))\,,
  \end{equation}
  integrated backwards in time from $t=T$ to $t=0$ along the instanton
  $\phi_x(t)$. Then the expectation~(\ref{eq:expect}) satisfies
  \begin{equation}
    \label{eq:limexpect}
    \lim_{\eps\to0} \frac{A_\eps(T,x)}{\bar A_\eps(T,x)} = 1,
  \end{equation}
  where
  \begin{equation}
    \label{eq:3}
    \bar A_\eps (T,x) = R(T,x) \exp\left(\eps^{-1}\left(f(\phi_x(T))
        -\frac12 \int_0^T \langle \theta_x(t),a\theta_x(t)\rangle\,dt\right)\right)\,,
  \end{equation}
  with
  \begin{equation}
    \label{eq:30}
    R(T,x) = \exp\left(\frac12
      \int_0^T \tr(aW_x(t))\,dt\right).
  \end{equation}
\end{proposition}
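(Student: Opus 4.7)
The plan is to recognize $A_\eps(T,x)$ via Feynman--Kac as $u_\eps(0,x)$, where $u_\eps(t,x) = \EE[\exp(\eps^{-1}f(X^\eps_T)) \mid X^\eps_t = x]$ solves the backward Kolmogorov equation
\begin{equation*}
  \partial_t u_\eps + \langle b,\nabla u_\eps\rangle + \tfrac{\eps}{2}\, a\!:\!\nabla\nabla u_\eps = 0,
  \qquad u_\eps(T,x) = e^{f(x)/\eps}.
\end{equation*}
I would then insert the WKB ansatz $u_\eps(t,x) = R(t,x)\,e^{G(t,x)/\eps}$ and collect powers of $\eps$. The $O(\eps^{-1})$ balance gives the Hamilton--Jacobi equation $\partial_t G + H(x,\nabla G) = 0$ with $H$ as in \eqref{eq:hamiltonian_diffusion} and terminal datum $G(T,\cdot)=f$, while the $O(1)$ balance gives the linear transport equation
\begin{equation*}
  \partial_t R + \langle b + a\nabla G,\nabla R\rangle + \tfrac{1}{2}\tr\!\bigl(a\,\nabla\nabla G\bigr)R = 0,
\end{equation*}
with $R(T,\cdot)\equiv 1$.

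Next I would solve both PDEs by the method of characteristics. The bicharacteristics of the HJ equation are precisely Hamilton's equations for $H$, which coincide with the instanton system \eqref{eq:4} upon identifying $\theta=\nabla G$; the terminal condition $\theta_x(T)=\nabla f(\phi_x(T))$ comes from $\nabla G(T,\cdot) = \nabla f$. Along $\phi_x$ a direct computation gives $\tfrac{d}{dt}G(t,\phi_x(t)) = \langle\theta_x,\dot\phi_x\rangle - H = \tfrac12\langle\theta_x,a\theta_x\rangle$, so integrating backward from $G(T,\phi_x(T))=f(\phi_x(T))$ recovers the exponent appearing in \eqref{eq:3}. Along the same characteristic the transport equation reduces to $\tfrac{d}{dt}R(t,\phi_x(t)) = -\tfrac{1}{2}\tr\!\bigl(a\,W_x(t)\bigr)R$ with $W_x(t):=\nabla\nabla G(t,\phi_x(t))$, and integrating backward from $R(T,\phi_x(T))=1$ yields \eqref{eq:30}.

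It remains to derive \eqref{eq:W}. I would differentiate the HJ equation twice in $x$ and then restrict to the characteristic using the identity $\dot W_x = \partial_t W + (b+a\nabla G)\cdot\nabla W$ and the symmetry of third derivatives of $G$ to trade the $\nabla W$ terms for $\dot W_x$. The surviving contributions are $\nabla\nabla\langle b,\theta\rangle$ (differentiating in $\phi$ with $\theta=\nabla G$ held fixed) from the $\langle b,\nabla G\rangle$ piece, $(\nabla b)^\top W + W(\nabla b)$ from the cross terms, and $WaW$ from the quadratic term $\tfrac12\langle\nabla G,a\nabla G\rangle$, yielding exactly \eqref{eq:W}. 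The terminal value is $W_x(T)=\nabla\nabla f(\phi_x(T))$ since $G(T,\cdot)=f$.

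The main obstacle, in my view, is not any of these bookkeeping steps but rather justifying $\lim_{\eps\to 0}A_\eps/\bar A_\eps = 1$ with the prefactor captured solely by $R$. This demands controlling the next-order correction in the expansion $R = R_0 + \eps R_1 + \cdots$ and, more fundamentally, ensuring that $W_x(t)$ remains finite on $[0,T]$---a no-blowup / no-conjugate-point condition that the strict convexity assumption made in Section~\ref{sec:style} is designed to enforce. A cleaner alternative, which I would pursue in parallel, is a Girsanov change of measure centered on the instanton: this reduces the computation to a Gaussian expectation for the linearized process $Z_t$ of \eqref{eq:Zeq0}, and identifies $W_x$ as the natural object controlling that Gaussian's covariance, thus connecting the PDE viewpoint with a direct probabilistic derivation.
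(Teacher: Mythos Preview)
Your proposal is correct and follows essentially the same route as the paper: backward Kolmogorov equation, WKB ansatz, Hamilton--Jacobi equation for the exponent solved along bicharacteristics (the instanton), transport equation for the prefactor along the same characteristic, and the Riccati equation for $W_x=\nabla\nabla G$ obtained by twice differentiating the HJ equation. The paper's only cosmetic difference is that it writes $u_\eps = Z_\eps\,e^{S/\eps}$ keeping the $\eps$-dependence in the prefactor and then passes to the limit $Z_\eps\to Z$ formally, rather than expanding $R$ in powers of~$\eps$ from the outset; your remark about the Girsanov alternative is exactly what the paper develops next as Proposition~\ref{thm:expectationb}.
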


\begin{remark}
  The function $R(T,x)$ is typically referred to as the prefactor. LDT
  gives a rougher estimate
  \begin{equation}
    \label{eq:31}
    \lim_{\eps\to0} \frac{\log A_\eps (T,x)}{\log \bar A_\eps (T,x)} =1,
  \end{equation}
  which would be unaffected if we were to neglect the prefactor $R(T,x)$
  in $\bar A_\eps (T,x)$. Of course, this prefactor is key to get the
  more refined estimate in~\eqref{eq:limexpect}.
\end{remark}

\begin{remark}
  If the SDE~\eqref{eq:sde} is modified into
  \begin{equation}
    \label{eq:20}
    dX^\epsilon_t = b(X^\epsilon_t)\,dt + \eps \tilde
    b(X^\epsilon_t)\,dt  +
    \sqrt{\epsilon}\sigma \,dW_t\,.
  \end{equation}
  with $\tilde b:\RR^n \to\RR^n$ is $C^1(\RR^n)$ with bounded
  derivatives, Proposition~\ref{thm:expectation} can be generalized by
  replacing~\eqref{eq:30} with
  \begin{equation}
    \label{eq:limexpectadddrfift}
    R(T,x) = \exp\left(\frac12
      \int_0^T \tr(aW_x(t))\,dt+ \int_0^T \< \theta_x(t),\tilde
      b(\phi_x(t)\> dt\right),
  \end{equation}
  while leaving unchanged all the other equations in the
  proposition. The statements in the propositions below can similarly be
  straightforwardly amended to apply to~\eqref{eq:20}, but for the
  sake of brevity we will stick to~\eqref{eq:sde}.
\end{remark}

\begin{proof}[Proof of Proposition~\ref{thm:expectation}]
  Let
  \begin{equation}
    u_\eps(T-t,x) = {\EE}^{x}\exp\left(\eps^{-1} f(X^\eps_t) \right)
  \end{equation}
  so that
  \begin{equation}
    \label{eq:6}
    u_\eps(0,x) = A_\eps(T,x).
  \end{equation}
  It is well-known that $u_\eps$ satisfies the backward Kolmogorov equation
  (BKE)
  \begin{equation}
    \label{eq:1}
    \partial_t u_\eps+ L_\eps u_\eps=0, \qquad u_\eps(T,x) =  \exp\left(\eps^{-1}f(x)\right)\,,
  \end{equation}
  where $L_\eps$ is the generator of the process~\eqref{eq:sde}:
  \begin{equation}
    \label{eq:5}
    L_\eps = b(x) \cdot \nabla + \tfrac12 \eps a: \nabla \nabla 
  \end{equation}
  Look for a solution of~\eqref{eq:1} 
  of the form
  \begin{equation}
    \label{eq:2}
    u_\eps(t,x) = Z_\eps(t,x)\, \exp\left(\eps^{-1}S(t,x)\right),
  \end{equation}
  where $S(t,x)$ satisfies the Hamilton-Jacobi equation
  \begin{equation}
    \label{eq:hamilton_jacobi_w}
    \partial_t S + b(x)\cdot \nabla S + \tfrac12\langle \nabla S, a \nabla S \rangle
    = \partial_t S + H(x,\nabla S) = 0\,, \qquad S(T,x) = f(x).
  \end{equation}
  If $(\phi_x(t),\theta_x(t))$ solves the instanton equations
  in~\eqref{eq:4}, we have $\theta_x(t) = \nabla S(t,\phi_x(t))$ and a
  direct calculation shows that
  \begin{equation}
    \frac{d}{dt} S(t,\phi_x(t)) = \partial_t S + \dot \phi_x \cdot
    \nabla S
    = \tfrac{1}{2} \langle \theta_x(t),a\theta_x(t) \rangle\,,
  \end{equation}
  implying
  \begin{equation}
    S(T,\phi(T)) - S(0,\phi(0)) =
    \frac{1}{2} \int_0^T  \langle \theta,a\theta \rangle\,dt\,.
  \end{equation}
  Since $S(T,\phi(T)) = f(\phi(T))$, we have
  \begin{equation}
    \exp(\eps^{-1} S(0,\phi(0)))
    = \exp\left(\eps^{-1}\left(f(\phi(T))-
        \frac12\int_0^T\langle \theta,
        a\theta\rangle\,dt\right)\right).
  \end{equation}
  As a result, to show that~\eqref{eq:limexpect} holds, it remains to
  establish that the factor $Z_\eps(t,x)$ has a limit as $\eps\to0$
  with $\lim_{\eps\to0} Z_\eps(0,x)= R(T,x)$.  To this end, notice
  that $Z_\eps(t,x)$ satisfies
  \begin{equation}
    \label{eq:7}
    \partial_t Z_\eps + (b+a\nabla S) \cdot \nabla Z_\eps + \tfrac12
    Z_\eps a:\nabla\nabla S + \tfrac12\eps a : \nabla \nabla Z_\eps= 0,\qquad 
    Z_\eps(T,x) = 1. 
  \end{equation}
  Taking the limit as $\eps\to0$ on this equation, we formally deduce
  that $\lim_{\eps\to0} Z_\eps(t,x)= Z(t,x)$, where $Z(t,x)$ solves
  \begin{equation}
     \label{eq:7b}
    \partial_t Z + (b+a\nabla S) \cdot \nabla Z + \tfrac12  Z a:\nabla\nabla S = 0 \qquad 
    Z(T,x) = 1. 
  \end{equation}
  Setting $G_x(t) = Z(t,\phi_x(t))$ on the instanton path, so that $G_x(0) =
  Z(0,\phi_x(0)) = Z(0,x)$, and using the
  instanton equation $\dot \phi_x = b(\phi_x) + a\theta_x$, we find as evolution
  equation for $G_x$
  \begin{equation} \label{eq:find_G}
    \dot G _x = -\tfrac12 G_x\, {\mathrm{tr}}(aW_x) \,,\qquad G_x(T)=1\,,
  \end{equation}
  where the matrix $W_x(t)$ is  defined as the Hessian of $S(t,x)$
  evaluated along the characteristics (i.e.~the instanton path
  $\phi_x(t)$): $W_x(t)= \nabla \nabla S(t,\phi_x(t))$. In order to
  solve the equation (\ref{eq:find_G}) for $G_x(t)$, we need an
  equation for $W_x(t)$. Differentiating the Hamilton-Jacobi equation
  (\ref{eq:hamilton_jacobi_w}) twice with respect to $x$ and
  evaluating the result at $x=\phi_x(t)$, it is easy to show that
  $W_x$ solves the Riccati equation in~\eqref{eq:W}. Therefore, by
  integrating~(\ref{eq:find_G}), we deduce
  \begin{equation}
    G_x(0) = Z(0,x)
    = \exp\left(\frac12\int_0^T\tr(a W_x(t))\,dt\right)\equiv R(T,x)\,,
  \end{equation}
  which terminates the proof.
\end{proof}

\subsection{Expectations via Girsanov theorem}
\label{sec:expect-via-girs}

An alternative justification of the Riccati equation~\eqref{eq:W} can
be given by introducing a stochastic process that samples the Gaussian
fluctuations around the instanton. This approach opens up the
possibility to alternatively compute the prefactor contribution as an
expectations via sampling techniques. This result is well-known (see
e.g.~\cite{freidlin-wentzell:2012}) and can be phrased as:

\begin{proposition}
  \label{thm:expectationb}
  The prefactor~\eqref{eq:30} satisfies
  \begin{equation}
    \label{eq:limexpectb}
    R(T,x)
    = \EE^0 \exp\left(\frac12 \int_0^T
      \nabla\nabla\langle b(\phi_x(t)),\theta_x(t) \rangle : Z_t Z_t\,dt
      +\tfrac12\nabla\nabla f(\phi_x(T)) : Z_T Z_T\right)\,,
  \end{equation}
  where $(\phi_x(t),\theta_x(t))$  are defined as in
  Proposition~\ref{thm:expectation}, $Z_t$ solves
\begin{equation}
  \label{eq:Zeq}
  dZ_t = \nabla b(\phi_x(t)) Z_t\,dt + \sigma\,dW_t,
\end{equation}
and the expectation $\EE^0$ is taken over realizations
of~\eqref{eq:Zeq} conditioned on $Z_0=0$.
\end{proposition}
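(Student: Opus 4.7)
The plan is to apply Girsanov's theorem to tilt the path measure so that $X^\eps_t$ fluctuates around the instanton $\phi_x(t)$, perform a Taylor expansion in $\sqrt{\eps}$ in the resulting representation of $A_\eps(T,x)$, and then identify the limiting $O(1)$ contribution with the right-hand side of \eqref{eq:limexpectb} before using Proposition~\ref{thm:expectation} to tie that limit back to $R(T,x)$.

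Concretely, I would introduce the Girsanov density
\[
M_T^\eps = \exp\left(\eps^{-1/2}\int_0^T \<\theta_x(t),\sigma\, dW_t\> - \tfrac{1}{2\eps}\int_0^T \<\theta_x(t),a\theta_x(t)\>\, dt\right),
\]
defining a tilted measure $Q^\eps$ under which $\tilde W_t := W_t - \eps^{-1/2}\int_0^t \sigma^\top\theta_x(s)\,ds$ is Brownian motion and $dX^\eps_t = (b(X^\eps_t) + a\theta_x(t))\,dt + \sqrt{\eps}\sigma\, d\tilde W_t$. The rescaled deviation $Y^\eps_t := \eps^{-1/2}(X^\eps_t - \phi_x(t))$ then satisfies under $Q^\eps$ an SDE whose coefficients converge to those of~\eqref{eq:Zeq}, so that $Y^\eps \Rightarrow Z$ with $Y^\eps_0 = 0 = Z_0$. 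I would then rewrite $A_\eps(T,x) = \EE^{Q^\eps}\bigl[(M_T^\eps)^{-1} \exp(\eps^{-1} f(X^\eps_T))\bigr]$ and Taylor-expand $f(X^\eps_T)$ and $b(X^\eps_t)$ around $\phi_x$ to second order. The $O(\eps^{-1})$ part of the exponent collects into $\eps^{-1}[f(\phi_x(T)) - \tfrac12 \int_0^T \<\theta_x,a\theta_x\>\,dt]$, reproducing the exponential Freidlin--Wentzell factor of Proposition~\ref{thm:expectation}.

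The crucial step is the cancellation of the $O(\eps^{-1/2})$ terms. Applying Ito's formula to $t \mapsto \<\theta_x(t),Y^\eps_t\>$ and using the costate equation $\dot\theta_x = -(\nabla b(\phi_x))^\top \theta_x$ together with $Y^\eps_0 = 0$ and $\theta_x(T) = \nabla f(\phi_x(T))$, the linear drift contribution cancels exactly and the second-order remainder of $b$ survives, yielding
\[
\eps^{-1/2}\bigl(\<\nabla f(\phi_x(T)), Y^\eps_T\> - \int_0^T \<\theta_x,\sigma\,d\tilde W_t\>\bigr) = \tfrac12 \int_0^T \nabla\nabla\<b(\phi_x),\theta_x\>:Y^\eps_t Y^\eps_t\,dt + O(\sqrt{\eps}).
\]
Combined with the second-order term $\tfrac12 \nabla\nabla f(\phi_x(T)):Y^\eps_T Y^\eps_T$ from the expansion of $f$, the exponent at order $O(1)$ is exactly the quadratic functional appearing inside the expectation in~\eqref{eq:limexpectb}, evaluated on $Y^\eps$.

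Passing to $\eps \to 0$ using weak convergence $Y^\eps \Rightarrow Z$ under $Q^\eps$ then gives that $A_\eps(T,x)/\exp\bigl(\eps^{-1}[f(\phi_x(T)) - \tfrac12\int_0^T \<\theta_x,a\theta_x\>\,dt]\bigr)$ converges to the right-hand side of~\eqref{eq:limexpectb}, and comparison with Proposition~\ref{thm:expectation} identifies this limit with $R(T,x)$. The main obstacle is the rigorous justification of this last limit, since the integrand is the exponential of a quadratic functional of $Y^\eps$ that is not a priori bounded; uniform integrability must be established. This is precisely where the strict convexity assumption on the action plays its role, as it guarantees that the quadratic form is controlled enough for the Gaussian expectation to be finite and for dominated convergence to apply.
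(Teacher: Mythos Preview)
Your approach is correct and shares the Girsanov-expansion strategy with the paper, but with a genuinely different choice of tilt. The paper first changes measure to the process $dY^\eps_t = \dot\phi_x(t)\,dt + \sqrt{\eps}\,\sigma\,dW_t$, whose deviation from $\phi_x$ is exactly $\sqrt{\eps}\,U_t$ with $U_t = \sigma W_t$; after Taylor-expanding the Radon--Nikodym density and $f$, the resulting $O(1)$ expression is an expectation over $U$ that still contains the terms $-\tfrac12\int_0^T |\nabla b(\phi_x)U_t|_a^2\,dt + \int_0^T \langle\nabla b(\phi_x)U_t, dU_t\rangle_a$, which the paper then recognizes as a \emph{second} Girsanov density taking $U$ to $Z$. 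Your single tilt by the instanton control $\theta_x$ is arranged so that the rescaled deviation $Y^\eps$ converges directly to $Z$, collapsing these two steps into one; the It\^o-formula computation you use to cancel the $O(\eps^{-1/2})$ terms plays the role of the paper's remark that ``the next order vanishes due to the criticality of the minimizer.'' The trade-off is that your intermediate process $Y^\eps$ is only asymptotically linear-Gaussian, so the remainder control and the weak-limit step carry slightly more analytic weight, whereas the paper's $U_t$ is explicitly $\sigma W_t$ and the expansion is purely algebraic---but since both arguments are presented formally, this is a matter of style rather than rigor. The paper also sketches, after its main proof, an alternative PDE route via the backward equation for $v(t,z)$ with the ansatz $v(t,z) = G_x(t)\exp\bigl(\tfrac12\langle z,W_x(t)z\rangle\bigr)$, which links directly to the Riccati equation of Proposition~\ref{thm:expectation}.
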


\begin{proof}
  Let $Y^\eps_t\in\RR^n$ satisfy
\begin{equation}
  \label{eq:dY}
  dY^\eps_t = \dot\phi_x(t)\,dt + \sqrt{\epsilon} \sigma\,dW_t, 
\end{equation}
where $\phi_x(t)$ is the instanton solution to~\eqref{eq:4}. By invoking
Girsanov's theorem, we can write $A_\epsilon(x)$ as
\begin{equation}
  A_\epsilon(T,x)=\EE^x M^\eps_T \exp\left( \eps^{-1}f(Y^\eps_T)\right)
\end{equation}
where $M^\eps_T$ is the Radon-Nikodym density
\begin{equation}
  M^\eps_T =
  \exp\left(-\frac1{2\epsilon}\int_0^T|\dot\phi_x-b(Y^\eps_t)|_a^2\,dt
    - \frac1{\sqrt{\epsilon}}\int_0^T \langle
    \sigma^{-1}(\dot\phi_x-b(Y^\eps_t)),
    dW_t\rangle\right)\,.
\end{equation}
Since $Y^\eps_t= \phi_x(t) +\sqrt{\eps} \sigma dW_t$, after expanding
both $b(Y^\eps_t)$ and $f(Y^\eps_t)$ in $\epsilon$, it is easy to see
that the leading order contribution is precisely given by
\begin{align*}
  &\exp\left(-\frac1{2\eps}
  \int_0^T|\dot\phi_x -  b(\phi_x)|^2_a
  \,dt + \eps^{-1} f(\phi_x(T))\right)\\
  &= \exp\left(-\frac1{2\eps}\int_0^T \langle
    \theta_x(t),a\theta_x(t)\rangle\,dt
    + \eps^{-1} f(\phi_x(T))\right)
\end{align*}
The next order vanishes due to the criticality of the minimizer. The
first correction term in the exponential is therefore $O(\epsilon^0)$, i.e.~it gives the
prefactor $R(T,x)$,  and reads
\begin{equation}
  \label{eq:prefactor_expectation}
  \begin{split}
    R(T,x)=\EE
    \exp\bigg(&-\frac12 \int_0^T |\nabla b(\phi_x(t)) U_t|_a^2\,dt
    + \frac12\int_0^T \nabla\nabla\langle b(\phi_x),\theta_x\rangle : U_t U_t\,dt\\
    & + \int_0^T \langle\nabla b(\phi_x) U_t, dU_t\rangle_a
    +\tfrac12\nabla\nabla f(\phi_x(T)) : U_T U_T\bigg)\,,
  \end{split}
\end{equation}
where $U_t = \sigma W_t$. Noticing that the term
\begin{equation}
  \exp\bigg(-\frac12 \int_0^T |\nabla b(\phi_x) U_t|_a^2\,dt
  + \int_0^T \langle\nabla b(\phi_x) U_t, dU_t\rangle_a\bigg)
\end{equation}
is a itself Radon-Nikodym density for the change of measure from the
random process $Z_t$ defined in~\eqref{eq:Zeq} to $U_t=\sigma W_t$, we
can therefore alternatively write the right hand-side
of~\eqref{eq:prefactor_expectation} as in~\eqref{eq:limexpectb}.
\end{proof}

Note that the formula~\eqref{eq:limexpectb} immediately tells us that
the prefactor, defined as the limit as $\eps\to0$ of the ratio between
$A_\eps(T,x)$ and $\bar A^\eps(T,x)$, is unity when both the drift~$b$
and the observable~$f$ are linear. Note also that computing the
expectation by using the change of measure from the original process
to one representing fluctuations around the instanton can be seen as
an approximated way (up to terms of order $O(\eps)$) of performing
importance sampling via Monte-Carlo method: how to do this importance
sampling exactly is harder in general, as discussed
e.g. in~\cite{vanden-eijnden-weare:2012}.

Finally, note that another, more direct, proof
Proposition~\ref{thm:expectationb} goes as follows. It is easy to see
that
\begin{equation}
  \label{eq:8}
  \EE^0 \exp\left(\frac12 \int_0^T
    \nabla\nabla \langle b(\phi_x),\theta_x\rangle : Z_t Z_t\,dt
    +\tfrac12\nabla\nabla f(\phi_x(T)) : Z_T Z_T\right) = v(0,0)
\end{equation}
where $v(t,z)$ solves
\begin{equation}
  \label{eq:pde}
  \partial_t v + \langle \nabla b(\phi_x) z, \nabla v\rangle +
  \tfrac12 a :\nabla \nabla v + q(t,z) v = 0,\qquad v(T,z)
  = \exp\left(\tfrac12\nabla\nabla f(\phi_x(T)) : zz\right)\,,
\end{equation}
with
\begin{equation}
  q(t,z) = \tfrac12\nabla\nabla\<  b(\phi_x(t)), \theta_x\rangle : z z\,.
\end{equation}
The solution of~\eqref{eq:pde} can be expressed as
\begin{equation}
  v(t,z) = G_x(t) \exp\left(\tfrac12 \langle z, W_x(t) z\rangle\right)\,,
\end{equation}
where $W_x(t)$ and $G_x(t)$ solve~\eqref{eq:W} and~\eqref{eq:find_G},
respectively.  Therefore $v(0,0) = G_x(0)$, consistent with the
statement in Proposition~\ref{thm:expectation}. In
Appendix~\ref{sec:riccatiexpect} we list a few more expressions that
relate the solution to Riccati equations like~\eqref{eq:W} to
expectations over the solution of a linear SDE
like~\eqref{eq:Zeq}. These expressions are useful as they give a
possible route to solve the Riccati equation~\eqref{eq:W} via sampling, which is
less accurate in general but simpler than solving the Riccati equation
itself. 

\subsection{Expectations on the invariant measure}
\label{eq:IM}

Given the observable $f:\RR^n\to\RR$ with $f\in C^2(\RR^n)$, consider
the expectation
\begin{equation}
  \label{eq:expectinf}
  B_\eps = \int_{\RR^n} \exp\left(\eps^{-1}f(y)\right)\rho_\eps(y) dy
\end{equation}
where $\rho_\eps(y)$ denotes the invariant density solution
to~\eqref{eq:9}. \eqref{eq:expectinf} can also be written as
\begin{equation}
  \label{eq:10}
  B_\eps = \lim_{T\to\infty} A_\eps(T,x)
\end{equation}
Assumption~\ref{as:1} guarantees that this limit exists for
appropriate $f$ (e.g. if $|f|$ is bounded), and is independent
on~$x$.

The next proposition shows how to calculate a sharp estimate
for~\eqref{eq:expectinf} using the procedure of gMAM that allows us to
compactify the physical time interval $[0,\infty)$ onto $[0,1]$.

\begin{proposition}
  \label{thm:expectationinf}
  Let $(\hat\phi(s),\hat\theta(s))$ solve the geometric instanton equations
  \begin{equation}
    \label{eq:4infpdf}
    \begin{aligned}
      \lambda \hat\phi' &= a \hat\theta+b(\hat\phi), \qquad &&\hat\phi(0) = x_*,\\
      \lambda \hat\theta' &= -(\nabla b(\hat\phi))^\top \hat\theta,
      &&\hat\theta(1) = \nabla f(\hat\phi(1)),
    \end{aligned}
  \end{equation}
  with $\lambda = |b(\hat\phi)|_a/|\hat\phi'|_a$, and $\hat W(s)$ be the
  solution to the Riccati equation
  \begin{equation}
    \label{eq:Winf}
    \lambda \hat W' = -\nabla\nabla \langle b(\hat\phi), \hat\theta\rangle
    - (\nabla b(\hat\phi))^\top
    \hat W - \hat W (\nabla b(\hat\phi)) - \hat W a \hat W\,,
    \qquad \hat W(1) = \nabla\nabla f(\hat\phi(1))\,,
  \end{equation}
  integrated backwards in time from $s=1$ to $s=0$ along the instanton
  $\hat\phi(s)$.  Then the expectation~(\ref{eq:expectinf}) satisfies
  \begin{equation}
    \label{eq:limexpectinf}
    \lim_{\eps\to0} \frac{B_\eps}{\bar B_\eps} = 1,
  \end{equation}
  where
  \begin{equation}
    \label{eq:3inf}
    \bar B_\eps = \hat R \exp\left(\eps^{-1}\left(f(\hat\phi(1))-
        \frac12 \int_{0}^1 \lambda^{-1}(s)\langle
        \hat\theta(s),a\hat\theta(s)\rangle\,ds\right)\right)\,,
  \end{equation}
  with
  \begin{equation}
    \label{eq:32}
    \hat R = \exp\left(\frac12
      \int_{0}^1 \lambda^{-1}(s) \tr(a\hat W(s))\,ds\right)\,.
  \end{equation}
\end{proposition}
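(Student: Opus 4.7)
The plan is to derive the infinite-time statement from Proposition~\ref{thm:expectation} by taking $T\to\infty$ with starting point $x = x_*$ and then reparameterizing the limiting instanton by normalized arclength in the gMAM spirit. Under Assumption~\ref{as:2}, ergodicity together with the concentration of $\rho_\eps$ at $x_*$ give $B_\eps = \lim_{T\to\infty} A_\eps(T,x_*)$, so the sharp asymptotic $\bar B_\eps$ should emerge as the $T\to\infty$ limit of $\bar A_\eps(T,x_*)$ provided by Proposition~\ref{thm:expectation}.

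First I would describe the shape of the finite-$T$ minimizer $(\phi_{x_*}(t),\theta_{x_*}(t))$ on $[0,T]$: under Assumption~\ref{as:2}, $(x_*,0)$ is a hyperbolic fixed point of the instanton flow~\eqref{eq:hamilton_time_parametrization}, so for large $T$ the minimizer stays close to $(x_*,0)$ on most of $[0,T]$ and traces a heteroclinic orbit out of $(x_*,0)$ only during the terminal $O(1)$ window. Shifting to $\tau = t - T$ and sending $T\to\infty$, this minimizer converges on compact subsets of $(-\infty,0]$ to a heteroclinic $(\tilde\phi,\tilde\theta)$ with $\tilde\phi(-\infty) = x_*$, $\tilde\theta(-\infty) = 0$, and $\tilde\theta(0) = \nabla f(\tilde\phi(0))$. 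Since $H(x_*,0) = 0$ is conserved along the instanton flow, this heteroclinic lies on the zero-energy shell $\{H = 0\}$, which is equivalent to $|\dot{\tilde\phi}|_a = |b(\tilde\phi)|_a$. Parameterizing by normalized arclength $s\in[0,1]$ with $\hat\phi(s) = \tilde\phi(\tau(s))$ then yields $\lambda \equiv ds/d\tau = |b(\hat\phi)|_a/|\hat\phi'|_a$, and applying $d/d\tau = \lambda\, d/ds$ to the reversed-time versions of~\eqref{eq:4} and~\eqref{eq:W} gives precisely the geometric instanton and Riccati equations~\eqref{eq:4infpdf}--\eqref{eq:Winf} with their stated boundary data at $s = 0,1$.

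The remaining task is to push the two time integrals in~\eqref{eq:3}--\eqref{eq:30} through the limit $T\to\infty$ and through the change of variables $d\tau = \lambda^{-1}\,ds$. Exponential decay of $\tilde\theta(\tau)$ as $\tau\to-\infty$, governed by the stable spectrum of $\nabla b(x_*)$ supplied by Assumption~\ref{as:2}, immediately makes $\int_{-\infty}^0 \<\tilde\theta,a\tilde\theta\>\,d\tau$ absolutely convergent and produces the exponential factor in~\eqref{eq:3inf}. The main obstacle is the corresponding claim for the Riccati integral: to conclude $\hat R < \infty$ one must show that $\tr(a\tilde W(\tau))$ is integrable on $(-\infty,0]$. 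The key observation is that where $\tilde\phi \approx x_*$ and $\tilde\theta \approx 0$, equation~\eqref{eq:W} linearizes to $d\tilde W/d\tau = -(\nabla b(x_*))^\top \tilde W - \tilde W\,\nabla b(x_*) - \tilde W a \tilde W$, and the strict stability of $\nabla b(x_*)$ from Assumption~\ref{as:2} makes $\tilde W = 0$ an attractor of this flow as $\tau\to-\infty$; hence $\tilde W(\tau)$ decays exponentially in that limit. Rewriting in arclength variables then gives $\int_{-\infty}^0 \tr(a\tilde W)\,d\tau = \int_0^1 \lambda^{-1}\tr(a\hat W)\,ds = 2\log\hat R$, and assembling the factors recovers~\eqref{eq:3inf}--\eqref{eq:32}.
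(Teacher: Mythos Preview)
Your proposal is correct and follows essentially the same route as the paper: both obtain the infinite-time statement by passing to the $T\to\infty$ limit of Proposition~\ref{thm:expectation}, using that the minimizer is forced through $(x_*,0)$, deducing $H=0$ and hence the formula for $\lambda$, and then checking that $\hat W\to 0$ near $s=0$ from the stability of $\nabla b(x_*)$ so that the reparametrized Riccati integral makes sense. The only cosmetic differences are that the paper keeps a general starting point $x$ and argues the initial leg to $x_*$ carries zero action, whereas you fix $x=x_*$ from the outset and work in the shifted time $\tau=t-T$; your treatment of the integrability of $\tr(a\tilde W)$ via exponential decay is in fact a bit more explicit than the paper's, which only records $\hat W(s)\to 0$.
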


For more details about the geometric instanton equations we refer the
reader to~\cite{heymann-vanden-eijnden:2008,
  grafke-grauer-schaefer-etal:2014,
  grafke-schaefer-vanden-eijnden:2017} as well as
Appendix~\ref{sec:init-cond-geom}. Note that the parametrization of
$\hat \phi$ can be chosen arbitrarily: In the calculations, it is
convenient to use normalized arc-length, i.e. impose that
$ |\hat\phi'(s)|_a=L$, where the constant $L$ is the length of the
instanton.

\begin{proof}
  In order to speak meaningfully about the limit $T\to\infty$, we
  introduce a reparametrization of time, $t(s):[0,1]\to\RR^+$, to
  compactify the infinite ``physical'' time interval. In particular,
  we choose $t(s)$ such that
  \begin{equation}
    \frac{d}{ds}(\phi \circ t) = \left(\int_0^T |\dot\phi|\,dt\right)^{-1}\,,
  \end{equation}
  and denote $\hat\phi(s) = (\phi\circ t)(s)$.  We then have
  \begin{equation}
    \hat \phi'(s) = \frac{d}{ds}(\phi\circ t)(s) = \frac{dt}{ds}
    \dot\phi\big|_{t=t(s)} 
    \equiv \lambda ^{-1} (s)\dot\phi\big|_{t=t(s)}
  \end{equation}
  where we introduced $\lambda(s) = ds/dt$. As a consequence,
  $\hat\phi(s)$ fulfills the geometric instanton
  equations~(\ref{eq:4infpdf}). These equations can be used to show that, in
  the limit as $T\to\infty$, the initial condition becomes irrelevant,
  and we can consider $\hat\phi(0)=x_*$. First,
  from~(\ref{eq:Freidlin-Wentzell_action}) we know that
  $S_T = \tfrac12\int_0^T|\dot x - b(x)|^2\,dt$, so that in the limit
  $T\to\infty$, we must have $\dot x = b(x)$ for an infinite amount of
  time if the action is to remain finite. As a consequence, for
  $T\to\infty$, the global minimizer will decay towards the unique
  fixed point $x_*$, as all initial points are attracted towards it.

  In order to find the global minimum, we can consider the two
  separate problems of first approaching the fixed point, and
  subsequently leaving again. For this purpose, consider the
  trajectory $\eta(t)$, and corresponding reparametrized trajectory
  $\hat\eta(s)$ with $\dot \eta = \lambda \hat\eta'$, $s\in[-1,1]$,
  and
  \begin{equation}
    \hat\eta(s) = 
    \begin{cases}
      \hat \eta_1(s) & \text{for}\quad s\in[-1,0]\,,\\
      \hat \eta_2(s) = \hat\phi(s) & \text{for}\quad s\in[0,1]\,,
    \end{cases}
  \end{equation}
  where
  \begin{equation}
    \lambda \hat \eta_1' = b(\hat\eta_1),\quad \hat \eta_1(-1)
    = x,\quad \hat\eta_1(0) = x_*\,.
  \end{equation}
  It follows that $\hat\eta(s)$ corresponds to the trajectory that
  deterministically decays into the fixed point $x_*$ starting from
  $x$, and then corresponds to the minimizer $\hat\phi$, solution to
  equations~(\ref{eq:4infpdf}) from then on.

  Since $x_*$ is the unique fixed point and all $x\in\RR^n$ are
  attracted to it, such an $\hat\eta_1$ exists and is unique for all
  $x$. On the other hand, since $(\hat\eta_1, \hat\theta_{\eta_1})$
  fulfill instanton equations on $s\in[-1,0]$, and $\lambda\hat\eta_1'
  = b(\hat\eta_1)$, we have that $\hat\theta_{\eta_1} = 0$ and the
  corresponding action vanishes on the interval
  $s\in[-1,0]$. Therefore, the action associated with $\hat\eta(s)$ is
  equal to the action associated with $\hat\phi(s)$. The problem of
  finding a global minimizer starting at $x$ reduces to the problem of
  solving (\ref{eq:4infpdf}).

  Another consequence of taking the limit $T\to\infty$ is that the
  Hamiltonian is zero along the 
  instanton, i.e. $H(\hat \phi(s),\hat \theta(s)) =0$ $\forall
  s\in[0,1]$.  As a result
  \begin{equation}
    \label{eq:37}
    |a \hat\theta + b(\hat\phi)|^2_a = \<\hat \theta, a\hat \theta\> + 2 \< \hat
    \theta, b(\hat \phi)\> + |b(\hat\phi)|^2_a = 2H(\hat \phi,\hat
    \theta) + |b(\hat\phi)|^2_a = |b(\hat\phi)|^2_a\,.
  \end{equation}
  Since $\dot
  \phi\big|_{t(s)} = \lambda(s) \hat\phi'(s)$, this allows us to
  deduce the following expression for $\lambda$:
  \begin{equation}
    \label{eq:lamb11}
    \lambda(s) =
    \frac{|\dot\phi\big(t(s))|_a}{|\hat\phi'(s)|_a}
    = \frac{|a \hat\theta + b(\hat\phi)|_a}{|\hat\phi'|_a}
    = \frac{|b(\hat\phi)|_a}{|\hat\phi'|_a}\,.
  \end{equation}
  This is the expression stated in the proposition.

  It remains to be shown that the reparametrization of the Riccati
  equation,~(\ref{eq:Winf}), is well posed. To this end, notice that
  as $s\to0$, we have $\hat\phi(s)\to x_*$ and
  $\hat\theta(s)\to0$. Therefore,
  \begin{equation}
    \lambda \hat W' = -(\nabla b(x_*))^\top \hat W
    - \hat W(\nabla b(x_*)) - \hat W a \hat W \quad\text{for}\quad 0<s\ll1\,,
  \end{equation}
  which leads to the conclusion that, since $\nabla b(x_*)$ is
  negative definite by definition, we have $\hat W(s)\to0$ as $s\to0$.
\end{proof}

\subsubsection{Example: Gradient system}
\label{sec:exampl-grad-syst}

An easy example that can be computed explicitly is the case of
diffusion in a potential landscape,
\begin{equation}
  \label{eq:gradient-SDE}
  dX_t^\eps = -\nabla U(X_t^\eps) \,dt + \sqrt{2\eps} dW_t\,,
\end{equation}
where $X\in\RR^n$, and $U:\RR^n \to \RR$ is a potential with a unique
minimum $x_*$, such that $\int_{\RR^n} e^{-\eps^{-1}U(x)} dx < \infty$ for
all $\eps>0$---we can set $U(x_*)=0$ without loss of generality.
Given an observable $f:\RR^n \to \RR$ such that $f(y)$ grows
sufficiently slower
than $U(y)$ as $|y|\to\infty$, we want to obtain a sharp estimate of
the expectation
\begin{equation}
  \label{eq:Beps1}
  B_\eps = \int_{\RR^n} \exp(\eps^{-1} f(y) ) \rho_\eps(y) dy\,,
\end{equation}
where $\rho_\eps(y)$ is the density of the invariant measure
of~(\ref{eq:gradient-SDE}), given by
\begin{equation}
  \label{eq:43}
  \rho_\eps(y) = Z_\eps^{-1} \exp(-\eps^{-1} U(y)) \qquad \text{with}
\qquad Z_\eps = \int_{\RR^n} \exp(-\eps^{-1} U(y))  dy\,.
\end{equation}
Combining~\eqref{eq:Beps1} and~\eqref{eq:43} we see that $B_\eps$ is
given by
\begin{equation}
  \label{eq:Beps2}
  B_\eps = \frac{\int_{\RR^n} \exp(\eps^{-1}( f(y) -U(y))) dy}
  {\int_{\RR^n} \exp(-\eps^{-1}U(y)) dy}\,,
\end{equation}
and both integrals can be estimated by Laplace's method. The result is
that $\lim_{\eps\to0} B_\eps/\bar B_\eps = 1$ with $\bar B_\eps$ given
by
\begin{equation}
  \label{eq:Beps3}
  \bar B_\eps = \left(\frac{\det H(x_*)}{\det\big(H(x_f)-\nabla\nabla f(x_f)\big)}\right)^{1/2} \exp\left(\eps^{-1}\left(f(x_f) - U(x_f)\right)\right)\,,
\end{equation}
where $H(x)=\nabla\nabla U(x)$ is the Hessian of the
potential and the point $x_f\in\RR^n$ is the solution of the optimization problem
\begin{equation}
  \label{eq:xfdef}
  x_f = \argmin_{y\in\RR^n} \left(U(y) - f(y)\right)\,.
\end{equation}
We will now show that Proposition~\ref{thm:expectationinf} yields the
same result.

First, we know explicitly that the instanton fulfills
\begin{equation}
  \label{eq:instantongrad}
  \lambda \hat\phi '(s) = -\nabla U(\hat\phi(s)) \quad\text{and}\quad \hat\theta(s) = \nabla U(\hat\phi(s))\,,
\end{equation}
where primes again denote derivatives with respect to
arclength. Further, the endpoints of the instanton are $x_*=\hat
\phi(0)$ and $x_f=\hat\phi(1)$: the first is by definition, and the
second since the final condition for the equation for $\hat\theta(s)$
gives
\begin{equation}
  \label{eq:44}
  \hat \theta(1) = \nabla f(\hat \phi(1)) = \nabla U(\hat\phi(1))\,,
\end{equation}
where the second equality follows from the second equation
in~\eqref{eq:instantongrad}. Since~\eqref{eq:44} is also the
Euler-Lagrange equation for the minimization problem
in~\eqref{eq:xfdef}, we deduce $x_f = \hat \phi(1)$. Therefore, using
\begin{equation}
  \lambda^{-1}(s) \langle \hat \theta, \hat \theta\rangle = \langle \nabla U(\hat\phi(s)), \hat\phi'(s)\rangle = \frac{d}{ds} \nabla U(\hat\phi(s))
\end{equation}
we deduce that the exponential term in the expectation is given by
\begin{equation}
  \exp\left(\eps^{-1} \left(f(x_f)-\int_0^1 \lambda^{-1}(s)|\hat\theta(s)|^2\,ds\right)\right) = \exp\left(\eps^{-1} \left(f(x_f)-U(_f)\right)\right)\,.
\end{equation}

Second, we have
\begin{equation}
  \nabla\nabla\langle b, \hat\theta\rangle = -\nabla\nabla\nabla U(\hat\phi) \nabla U(\hat \phi) = -\nabla\nabla\nabla U(\hat\phi)\lambda\hat\phi' = \lambda H'\,.
\end{equation}
This means that the Riccati equation~(\ref{eq:Winf}) is here given by
\begin{equation}
  \lambda \hat W' = \lambda H' + H\hat W + \hat WH + 2\hat W^2\,,\qquad \hat W(1)=\nabla\nabla f(x_f)\,.
\end{equation}
Let us look for a solution of the form
\begin{equation}
  \label{eq:WD}
  \hat W = \lambda D^{-1} D'\,,
\end{equation}
for a matrix $D\in\RR^{n\times n}$, with $D(1)=\mathrm{Id}$, and $\lambda(1)
D'(1)=\nabla\nabla f(x_f)$ to fulfill the boundary conditions
for $\hat W$. Equation~\eqref{eq:WD} can be written as $D' = \lambda^{-1} D \hat
W$ which, from Liouville's formula, $\Psi'(s)
= A(s) \Psi(s)$ implies $\det\Psi(s) = \det \Psi(0)
\exp\left(\int_0^s \tr A(\tau)d\tau\right)$ for $A,\Psi \in
\RR^{n\times n}$, yields
\begin{equation}
  \label{eq:D0}
  \det D(0) = \exp\left(\int_0^1 \lambda^{-1}(s)\tr \hat W(s)\,ds\right)\,.
\end{equation}
From equation~\eqref{eq:WD} it also follows that
\begin{equation}
  \lambda \hat W' = \lambda D^{-1}(\lambda D')' - \lambda^2 D^{-1}D'D^{-1}D'\,,
\end{equation}
which we can use in conjunction with the Riccati equation to get
\begin{equation}
  \lambda (\lambda D')' = \lambda (DH)' + \lambda(DH-\lambda D')D^{-1}D'
\end{equation}
or equivalently
\begin{equation}
  \lambda (\lambda D'-DH)' = -(\lambda D'-DH) W\,.
\end{equation}
Using again Liouville's formula (this time for $\Psi = \lambda
D'-DH$) yields the relation
\begin{equation}
  \det\big(\lambda(0) D'(0) - D(0) H(x_*)\big) = \det\big(\lambda(1) D'(1) - D(1)H(x_f)\big) \exp\left(-\int_0^1 \lambda^{-1}(s) \tr \hat W(s)\,ds\right)\,,
\end{equation}
into which we can insert the boundary conditions $\lambda(1) D'(1) =
\nabla\nabla f(x_f)$, $D(1)=\mathrm{Id}$ and $\lambda(0)=0$,
as well as $\det D(0)$ from equation~(\ref{eq:D0}) to obtain
\begin{equation}
  \det H(x_*) = \det\big(H(x_f)-\nabla\nabla f(x_f)\big) \exp\left(-2\int_0^1 \lambda^{-1}(s) \tr \hat W(s)\,ds\right)\,.
\end{equation}
This gives eventually the prefactor contribution,
\begin{equation}
  \hat R = \exp\left(\int_0^1 \lambda^{-1}(s) \tr \hat W(s)\,ds\right) = \left(\frac{\det H(x_*)}{\det\big(H(x_f)-\nabla\nabla f(x_f)\big)}\right)^{1/2}\,.
\end{equation}
Therefore we recover $\bar B_\eps$ given in~\eqref{eq:Beps3}, as
needed.

Note that the above results can be generalized to a diffusion in a
potential landscape with mobility matrix, i.e.~systems of the form
\begin{equation}
  dX = -M \nabla U(X)\,dt + \sqrt{2\eps} M^{1/2}\,dW\,,
\end{equation}
where $M\in \RR^{n\times n}$ is the symmetric, positive definite
mobility matrix, and $M^{1/2}\in \RR^{n\times n}$ is a symmetric
matrix with $M^{1/2}M^{1/2} = M$. The procedure above can be repeated
by replacing $\hat W$ with $M^{-1/2} \hat W M^{-1/2}$ and $H$ with
$M^{-1/2} H M^{-1/2}$. The result~(\ref{eq:Beps3}) is unchanged.

\section{Probability densities}
\label{sec:pdf}

\subsection{Probability densities at finite time}
\label{sec:pdffinitet}

Here we estimate the probability density function
of $X^\eps_t$ in the limit of small $\eps$. Denote this density by
$\rho^x_\eps(t,y)$, so that, given any suitable $F:\RR^n\to\RR$,
\begin{equation}
  \label{eq:11}
  \int_{\RR^n} F(y) \rho^x_\eps(T,y) dx = \EE^x F(X^\eps_T)
\end{equation}
The next proposition shows how to get a sharp estimate of
$\rho^x_\eps(t,y)$ at any $y$ by purely local considerations:
\begin{proposition}
  \label{thm:pdfs}
  Let $(\phi_{x,y}(t),\theta_{x,y}(t))$ solve
  the instanton equations
  \begin{equation}
    \label{eq:4pdf}
    \begin{aligned}
      \dot \phi_{x,y} &= a \theta_{x,y}+b(\phi_{x,y}), \qquad &&\phi_{x,y}(0) = x\,,\\
      \dot \theta_{x,y} &= -(\nabla b(\phi_{x,y}))^\top \theta, &&\phi_{x,y}(T) = y\,,
    \end{aligned}
  \end{equation}
  and let $Q_{x,y}(t)$ be the solution to  the forward Riccati equation
  \begin{equation}
    \label{eq:Q}
    \dot Q_{x,y} =
    Q_{x,y} K_{x,y}
    Q_{x,y} + Q_{x,y} (\nabla b(\phi_{x,y}))^\top +
    (\nabla b(\phi_{x,y})) Q_{x,y}  + a,\quad Q_{x,y}(0)=0\,.
  \end{equation}
  where we denote
  \begin{equation}
    K_{x,y}(t) = \nabla \nabla \<
    b(\phi_{x,y}(t)),\theta_{x,y}(t)\>\,,
  \label{eq:25}
\end{equation}
In addition let $I_x(T,y)$ be defined as
\begin{equation}
  \label{eq:Iexplicit}
  I_x(T,y) = \frac12\int_0^T \langle
  \theta_{x,y}(t),a\theta_{x,y}(t)\rangle\,dt \ge 0\,.
\end{equation}
Then the
  probability density function $\rho^x_\eps(T,y)$ of satisfies
  \begin{equation}
    \label{eq:thm2}
    \lim_{\eps\to0} \frac{\rho^x_\eps(T,y)}{\bar \rho^x_\eps(T,y)}= 
    1 \qquad \text{pointwise in $y$},
  \end{equation}
  where $\bar \rho^x_\eps(T,y)$ is given by
  \begin{equation}
    \label{eq:12}
    \bar \rho^x_\eps(T,y) = (2\pi\eps)^{-n/2}|\det
    Q_{x,y}(T)|^{-1/2}
    \exp\left(\frac12 \int_0^T \tr(K_{x,y}(t) Q_{x,y}(t) )\,dt
      -\frac1{2\eps}I_x(T,y) \right)\,.
  \end{equation}

\end{proposition}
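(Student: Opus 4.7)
The plan is to run a WKB expansion of $\rho^x_\eps(T,y)$ directly on the forward Fokker--Planck equation $\partial_T \rho^x_\eps = -\nabla_y \cdot (b\,\rho^x_\eps) + \tfrac{\eps}{2} a: \nabla_y\nabla_y \rho^x_\eps$ with initial data $\rho^x_\eps(0,y) = \delta(y-x)$, mirroring the backward Kolmogorov analysis in Proposition~\ref{thm:expectation} but with forward characteristics running from $x$ to $y$. Inserting the ansatz $\rho^x_\eps(T,y) = (2\pi\eps)^{-n/2} P_\eps(T,y)\exp(-\eps^{-1} I_x(T,y))$, the $\eps^{-1}$ balance yields the Hamilton--Jacobi equation $\partial_T I_x + \langle b,\nabla I_x\rangle + \tfrac12\langle \nabla I_x, a\nabla I_x\rangle = 0$, whose characteristics are exactly~\eqref{eq:4pdf} with $\theta_{x,y}(t)=\nabla I_x(t,\phi_{x,y}(t))$. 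The same Legendre calculation as in the proof of Proposition~\ref{thm:expectation}, combined with $I_x(0,x)=0$, then gives the explicit form~\eqref{eq:Iexplicit}. The $\eps^0$ balance produces a linear transport equation which, evaluated along characteristics, reduces to the ODE $\frac{d}{dt} P(t,\phi_{x,y}(t)) = -P\bigl[\tr(\nabla b(\phi_{x,y})) + \tfrac12\tr(a W^\rightarrow(t))\bigr]$, where $W^\rightarrow(t) := \nabla\nabla I_x(t,\phi_{x,y}(t))$.

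Next I would derive the Riccati equation~\eqref{eq:Q} by introducing the shooting sensitivities $\Phi(t) = \partial\phi_{x,y}(t)/\partial p$ and $\Theta(t) = \partial\theta_{x,y}(t)/\partial p$, where $p = \theta_{x,y}(0)$ is the initial momentum targeting the endpoint $y$. Linearizing~\eqref{eq:4pdf} about $p$ gives $\dot\Phi = (\nabla b)\Phi + a\Theta$ and $\dot\Theta = -K_{x,y}\Phi - (\nabla b)^\top\Theta$, with $\Phi(0) = 0$ and $\Theta(0) = I$. Setting $Q_{x,y} := \Phi\Theta^{-1}$ and differentiating yields exactly~\eqref{eq:Q} with $Q_{x,y}(0) = 0$. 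Differentiating the identity $\theta_{x,y}(t) = \nabla I_x(t,\phi_{x,y}(t))$ in $p$ gives $\Theta = W^\rightarrow\Phi$, so that $W^\rightarrow = Q_{x,y}^{-1}$ for $t>0$. Taking the trace of $Q_{x,y}^{-1}\dot Q_{x,y}$ via~\eqref{eq:Q} then produces the key identity $\tr(a Q_{x,y}^{-1}) = \tfrac{d}{dt}\log\det Q_{x,y} - 2\tr(\nabla b) - \tr(K_{x,y}Q_{x,y})$, which substituted back into the transport ODE collapses it to $\frac{d}{dt}\log\bigl(P\sqrt{\det Q_{x,y}}\bigr) = \tfrac12\tr(K_{x,y}Q_{x,y})$.

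The main obstacle is pinning down the initial value of $P\sqrt{\det Q_{x,y}}$ at the singular focal point $t=0$: there $\rho^x_\eps$ is a Dirac mass, $I_x$ diverges quadratically like $\tfrac{1}{2t}|y-x|^2_a$, $W^\rightarrow$ blows up like $t^{-1}a^{-1}$, while $Q_{x,y}(t)\sim a\,t$ from its Riccati equation, so only the product $P_\eps\sqrt{\det Q_{x,y}}$ has a finite limit. I would match the short-time WKB against the Gaussian heat kernel $\rho^x_\eps(t,y)\sim(2\pi\eps t)^{-n/2}|\det a|^{-1/2}\exp(-(y-x)^\top a^{-1}(y-x)/(2\eps t))$ to conclude $\lim_{t\to 0}P_\eps(t,\phi_{x,y}(t))\sqrt{\det Q_{x,y}(t)} = 1$. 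Integrating the collapsed transport equation from $0$ to $T$ then produces $P(T,y) = |\det Q_{x,y}(T)|^{-1/2}\exp\bigl(\tfrac12\int_0^T\tr(K_{x,y}Q_{x,y})\,dt\bigr)$, which combined with~\eqref{eq:Iexplicit} yields~\eqref{eq:12}.
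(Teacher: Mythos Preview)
Your argument is correct, but it takes a different route from the paper's proof. The paper does not perform a WKB expansion of the Fokker--Planck equation directly. Instead it works through the moment generating function $C^x_\eps(\eta)=\EE^x\exp(\eps^{-1}\langle\eta,X_T^\eps\rangle)$, evaluates this in two ways---once by Laplace's method on the unknown density (giving your $I_x$ and $R_T^x$), and once by invoking Proposition~\ref{thm:expectation} with the linear observable $f(x)=\langle\eta,x\rangle$---and then matches the two expressions. This yields the prefactor in terms of the \emph{backward} Riccati solution $W_x$ with $W_x(T)=0$; the conversion to the forward quantity $Q_{x,y}$ is then obtained by invoking the algebraic link of Proposition~\ref{th:riccati1link}. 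The identification $Q_{x,y}(T)=[\nabla_y\nabla_y I_x(T,y)]^{-1}$ is derived, as in your proof, by differentiating the Hamilton--Jacobi equation twice, and the initial condition $Q_{x,y}(0)=0$ comes from the same short-time instanton asymptotics you use.

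Your approach is more self-contained: you never need the expectation result or the forward/backward Riccati link, because the Liouville-type identity you write (essentially the same computation the paper later does in~\eqref{eq:58}--\eqref{eq:60} for the invariant-measure case) produces the $\tr(K_{x,y}Q_{x,y})$ integrand in one step. Your shooting-sensitivity derivation of~\eqref{eq:Q} is also cleaner---it is Radon's lemma (Appendix~\ref{sec:riccati}) in disguise, with the specific initial data $\Phi(0)=0$, $\Theta(0)=\text{Id}$. The paper's route, by contrast, buys modularity: once Proposition~\ref{thm:expectation} and Proposition~\ref{th:riccati1link} are in place, the density result falls out by Legendre duality with almost no new computation, and the same machinery is reused again for Proposition~\ref{thm:expectationrev}.
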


\begin{remark}
  Equation~(\ref{eq:Q}) for $Q_{x,y}$ is structurally equivalent to
  the equation~(\ref{eq:W}) for $W_{x,y}=Q_{x,y}^{-1}$ except for the
  boundary conditions. The boundary conditions necessitate solving
  $Q_{x,y}$ forward in time. Notably, this direction of integration is
  also the one in which the equation for $Q_{x,y}$ is well-posed and
  numerically stable, as can be intuited for example by considering
  the Ornstein-Uhlenbeck process $b(x)=-\gamma x$, for $\gamma>0$ (see
  Sec.~\ref{sec:exampl-one-dimens} below). This feature will become
  even more apparent in the context of stochastic partial differential
  equations, see the discussion after Proposition~\ref{thm:SPDEs} in
  Sec.~\ref{sec:infinited}.
\end{remark}

\begin{remark}
  In Appendix~\ref{sec:riccatiexpect} we discuss how to
  solve~\eqref{eq:Q} via sampling of the solution of a nonlinear (in
  the sense of McKean) SDE.
\end{remark}

Intuitively, the local estimation of of $\rho^x_\eps(t,y)$ implied by
Proposition~\ref{thm:pdfs} is possible because in the limit
$\eps\to0$, this probability density is dominated by the instanton,
and the prefactor contributions can again be estimated from the
Gaussian fluctuations around this instanton. More concretely, we will
see in the proof of Proposition~\ref{thm:pdfs} below (see
\eqref{eq:24}) that $Q_{x,y}(T)$
is the inverse of the Hessian of the action $I_x(T,y)$, i.e.
\begin{equation}
  \label{eq:22}
  Q_{x,y}(T) = [\nabla _y \nabla _y I_x(T,y)]^{-1}
\end{equation}
This implies that we can also write~\eqref{eq:12} as
\begin{equation}
  \label{eq:21}
  \bar \rho^x_\eps(T,y) = (2\pi\eps)^{-n/2}|\det \nabla _y \nabla _y I_x(T,y)|^{1/2}
  \exp\left(\frac12 \int_0^T \tr(K_{x,y}(t) Q_{x,y}(t) )\,dt-\frac1{2\eps}I_x(T,y)\right).
\end{equation}
This form shows that $\bar \rho^x_\eps(T,y) $ is normalized in the
limit as $\eps\to0$, a result we state as
\begin{lemma}
  \label{th:lem1}
  The density $\bar \rho^x_\eps(T,y)$ defined in~\eqref{eq:12} is
  asymptotically normalized, i.e.
  \begin{equation}
  \label{eq:230}
  \lim_{\eps\to0} \int_{\RR^n} \bar \rho_\eps^x(T,y) dy= 1
  \end{equation}
\end{lemma}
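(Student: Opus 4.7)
The plan is to apply Laplace's method to $\int_{\RR^n}\bar\rho_\eps^x(T,y)\,dy$ as $\eps\to 0$. The heart of the argument is the identity~\eqref{eq:22}, recorded in the excerpt and to be established in the proof of Proposition~\ref{thm:pdfs}: it says that $Q_{x,y}(T)$ is exactly the inverse Hessian of $I_x(T,\cdot)$ at $y$, which is what lets the Gaussian volume in the Laplace estimate cancel the $|\det Q_{x,y}(T)|^{-1/2}$ prefactor and produce the normalization $1$.

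First I would locate the minimum of $y\mapsto I_x(T,y)$. Since $a$ is positive definite, $I_x(T,y) = \tfrac12\int_0^T\langle\theta_{x,y},a\theta_{x,y}\rangle\,dt \ge 0$, with equality iff $\theta_{x,y}\equiv 0$, which by~\eqref{eq:4pdf} forces $\dot\phi_{x,y} = b(\phi_{x,y})$ with $\phi_{x,y}(0)=x$. Hence the unique zero (and, by the global convexity assumption of Sec.~\ref{sec:style}, the unique global minimum) is $y^*$, the endpoint at time $T$ of the noise-free flow $\dot\phi = b(\phi)$, $\phi(0) = x$. At $y=y^*$ one has $\theta_{x,y^*}\equiv 0$, so $K_{x,y^*}(t) = \nabla\nabla\langle b(\phi_{x,y^*}),\theta_{x,y^*}\rangle\equiv 0$ on $[0,T]$, and the prefactor exponential $\exp\!\bigl(\tfrac12\int\tr(KQ)\,dt\bigr)$ in~\eqref{eq:12} equals $1$ at $y^*$.

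Next I would use the rewriting~\eqref{eq:21},
$$\bar\rho_\eps^x(T,y) = (2\pi\eps)^{-n/2}\,g(y)\,\exp\!\left(-\tfrac{1}{2\eps}I_x(T,y)\right),\quad g(y) := |\det\nabla_y\nabla_y I_x(T,y)|^{1/2}\,e^{\frac12\int_0^T\tr(K_{x,y}Q_{x,y})\,dt},$$
and expand $I_x(T,y) = \tfrac12 (y-y^*)^\top A (y-y^*) + O(|y-y^*|^3)$ with $A := \nabla_y\nabla_y I_x(T,y^*)\succ 0$. Because $g$ is continuous in $y$ with $g(y^*) = |\det A|^{1/2}$, and $I_x(T,\cdot)$ is convex and coercive (so contributions from $|y-y^*|\gg\sqrt{\eps}$ are exponentially suppressed by Assumption~\ref{as:1}), standard Laplace asymptotics produce an exact cancellation between the Gaussian volume factor $(2\pi\eps)^{n/2}|\det A|^{-1/2}$ and the $(2\pi\eps)^{-n/2}|\det A|^{1/2}$ prefactor, yielding~\eqref{eq:230}.

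The main obstacle is the identity~\eqref{eq:22}: without it, the Gaussian curvature of $\bar\rho_\eps^x$ at its peak would not match the curvature of $I_x$ and the integral would not normalize. Establishing this identity is the crucial structural step, and I would do so by differentiating the Hamilton-Jacobi equation satisfied by $I_x(T,y)$ in its final-point argument $y$ and matching with the forward Riccati~\eqref{eq:Q}, in parallel with the role played by the backward Riccati~\eqref{eq:W} and the Hessian $W_x$ in the proof of Proposition~\ref{thm:expectation}. Once~\eqref{eq:22} is granted, only routine tail bookkeeping remains.
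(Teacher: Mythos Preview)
Your proposal is correct and follows essentially the same route as the paper's own proof: both start from the alternative form~\eqref{eq:21}, identify the global minimum of $I_x(T,\cdot)$ as the endpoint $y^*$ of the deterministic flow $\dot\phi=b(\phi)$ (where $\theta\equiv 0$ and hence $K\equiv 0$), and then apply Laplace's method, with the identity~\eqref{eq:22} ensuring the exact cancellation between the Hessian of $I_x$ and the $|\det Q|^{-1/2}$ prefactor. The paper likewise defers the justification of~\eqref{eq:22} to the proof of Proposition~\ref{thm:pdfs}, so your treatment of that point is appropriate.
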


\begin{proof}
  Starting from expression~\eqref{eq:21} for $\bar \rho^x_\eps(T,y)$,
  which as we show in the proof of Proposition~\ref{thm:pdfs} is
  equivalent to~\eqref{eq:12}, let us evaluate the integral
  $\int_{\RR^n} \bar \rho_\eps^x(T,y) dy$ by Laplace method. To this
  end, we must first identify the point where $I_x(T,y)$ is
  minimal. It is easy to see that this point is $y = y_x(T)$, where
  $y_x(t)$ is the solution to the ODE $\dot y_x = b(y_x)$,
  $y_x(0)= x$. Indeed, $y_x(t)$ is also the instanton obtained when
  $\theta_{x,y_x(T)}(t)=0$, and from~\eqref{eq:Iexplicit} it gives
  $I_x(T,y_x(T))=0$, which implies that the minimum at $y_x$ is
  necessarily a global minimum. Similarly, $\theta_{x, y_x(T)}(t)=0$
  implies that $K_{x, y_x(T)}(t)=0$. Therefore
\begin{equation}
  \label{eq:23}
  \begin{aligned}
    & \lim_{\eps\to0} \int_{\RR^n} \bar \rho_\eps^x(T,y) dy\\
    &=
    \lim_{\eps\to0} (2\pi\eps)^{-n/2}|\det \nabla _y \nabla _y
    I_x(T,y_x(T) )|^{1/2}
    \int_{\RR^n} \exp\left(-\frac1{2\eps}I_x(T,y)\right) dy\\
  &= (2\pi)^{-n/2}|\det \nabla _y \nabla _y
  I_x(T,y_x(T) )|^{1/2} \int_{\RR^n}
  \exp\left(-\frac1{2}\<u,\nabla_y\nabla_y I_x(T,y_x(T))
    u\> \right) du\\
  & = 1
\end{aligned}
\end{equation}
where to get the second equality we used $u = (z-y_x(T))/\sqrt{\eps}$
as new integration variable and only keep the zeroth order term in
$\eps$ that contributes to the limit.
\end{proof}
  
In the proof of this lemma, only the behavior of
$\bar \rho_\eps^x(T,y)$ near its maximum at $y_x(T)$ matters, but let
us stress that the expression for $\bar \rho_\eps^x(T,y)$ given
in~\eqref{eq:12} offers a much finer approximation of this density
valid arbitrary far away from $y_x(T)$. This will allow us to use
$\bar \rho_\eps^x(T,y)$ to estimate expectations dominated by tail
events, as shown in Sec.~\ref{sec:expectrevisit}.

\begin{proof}[Proof of Proposition~\ref{thm:pdfs}]
  Consider the expectation
  \begin{equation}
    C^x_\eps(\eta) = \EE^x \exp(\eps^{-1}\langle\eta, X^\eps_T\rangle),
    \qquad \eta \in \RR^n\,.
  \end{equation}
  We can alternatively express this expectation as
  \begin{equation}
    \label{eq:pdf-expect}
    C^x_\eps(\eta) =
    \int_{\RR^n} \exp\left(\eps^{-1} \langle\eta ,y\rangle\right)
    \rho^x_\eps(T,y)\,dy\,,
  \end{equation}
  We will show that $\bar \rho^x_\eps(T,y)$ allows us to estimate this
  expectation for all $\eta \in \RR^d$.  Assuming that
  $\rho^x_\eps(T,y)$ is of the form
  \begin{equation}
    \label{eq:14}
    \rho^x_\eps(T,y) = (2\pi\eps)^{-n/2}\left(R_T^x(y) +
    \mathcal O(\eps)\right) 
    \exp(-\eps^{-1} I_x(T,y))\,,
  \end{equation}
  we can evaluate the integral in~\eqref{eq:pdf-expect} by Laplace
  method to obtain
  \begin{equation}
    \label{eq:C1}
    C^x_\eps(\eta) = \left(R_T^x(y) +
    \mathcal O(\eps)\right) |\det \nabla_y\nabla_y I_x(T,y)|^{-1/2}
    \exp(\eps^{-1}(\langle\eta, y\rangle - I_x(T,y)))\,,
  \end{equation}
  where
  \begin{equation}
    y = \argmax_{z\in \RR^n} (\langle\eta, z\rangle- I_x(T,z))\,.
  \end{equation}
  From Proposition~\ref{thm:expectation} we also know that
  \begin{equation}
    \label{eq:C2}
    C^x_\eps(\eta) =
    \left(\exp\left(\frac12 \int_0^T \tr(aW_{x}(t))\,dt\right) +
      \mathcal O(\eps)\right)
    \exp\left(\eps^{-1}\left(\langle\eta, \phi_{x}(T)\rangle
        -\frac12 \int_0^T \langle \theta_{x},a\theta_{x}\rangle\,dt\right)\right)\,,
  \end{equation}
  where $(\phi_{x}(t),\theta_x(t))$ solve the instanton
  equations~\eqref{eq:4} and $W_x(t)$ solves the Riccati
  equation~\eqref{eq:W}; note that, since $f(x) = \< \eta,x\>$
  here, the boundary conditions at $t=T$ reduce to
  $\theta_x(T) = \eta$ and $W_x(T) = 0$. Comparison
  between~\eqref{eq:C1} and~\eqref{eq:C2} implies that $\phi_x(T) = y$,
  i.e.~the instanton equations for
  $(\phi_x(t),\theta_x(t))\equiv(\phi_{x,y}(t),\theta_{x,y}(t)) $
  reduce to the form in~\eqref{eq:4pdf}. In addition,
  $I_x(T,y)$ is given by~\eqref{eq:Iexplicit} and
  \begin{equation}
    \label{eq:Rexplicit}
    R_T^x(y) = |\det \nabla_y \nabla_y I_x(T,y)|^{1/2}
    \exp\left(\frac12\int_0^T \tr(a W_{x}(t))\,dt\right)\,.
  \end{equation}
  Using Proposition~\ref{th:riccati1link} with $K(t) = K_{x,y}(t)$, we
  can identify $W(t)=W_x(t)$, and $Q(t)=Q_{x,y}(t)$ to deduce that
  \begin{equation}
    \label{eq:53}
    \int_0^T \tr(a W_{x}(t)) dt= \int_0^T \tr(K_{x,y}(t) Q_{x,y}(t)) dt
  \end{equation}
  since $W_x(T)=0$ and $Q_{x,y}(0)=0$ and so
  $\det(\text{Id} -W_x(T) Q_{x,y}(T))=\det(\text{Id} -W_x(0)
  Q_{x,y}(0)) = 1$.  Therefore, to finish the proof it remains to evaluate
  $\det [\nabla_y\nabla_y I_x(T,y)]$. To this end, notice first that,
  since $I_x(T,y)$ is given by~\eqref{eq:Iexplicit}, it satisfies
  \begin{equation}
    \label{eq:13}
    \partial_t  I_x(t,y) +H(y,\nabla_y I_x(t,y)) = 0, \qquad
    \lim_{t\to0}t I_x(t,y) = \tfrac12 \< (y-x), a^{-1} (y-x)\>\,.
  \end{equation}
  Indeed, the solution to this equation can be expressed as
  in~\eqref{eq:Iexplicit}, and the boundary condition follows from the
  fact that, for small $T$, to leading order the solution to the
  instanton equations~\eqref{eq:4pdf} reads
  \begin{equation}
    \label{eq:15}
    \phi_{x,y}(t) = x + (y-x) t/T +O(T), \qquad  \theta_{x,y}(t) = a^{-1}
    (y-x)/T +O(T), \qquad t\in[0,T].
  \end{equation}
  This implies that
  \begin{equation}
    \label{eq:16}
    I_x(T,y) = \frac12 \int_0^T \langle
    \theta_{x,y}(t),a\theta_{x,y}(t)
    \rangle\,dt =\tfrac12
    T^{-1} \< (y-x), a^{-1} (y-x)\> +O(1)\quad\text{for}\quad T\ll1\,,
  \end{equation}
  consistent with the boundary condition in~\eqref{eq:13}. Therefore,
  if we introduce
  \begin{equation}
    [\nabla_y\nabla_y I_x(t,\phi_{x,y}(t))]^{-1} =
    Q_{x,y}(t)\label{eq:24}
\end{equation}
so that $\nabla_y\nabla_y I_x(T,y) = Q_{x,y}^{-1}(T)$, then an
equation for $Q_{x,y}(t)$ can be derived by (i) differentiating
\eqref{eq:13} twice with respect to $y$ and evaluating the result at
$y=\phi_{x,y}(t)$ to obtain an equation for
$\nabla_y\nabla_y I_x(t,\phi_{x,y}(t))$, and (ii) using this equation
to derive one for $Q_{x,y}(t)$. The result is the Riccati
equation~\eqref{eq:Q}, in which the boundary condition follows from
$\lim_{t\to0} t \nabla_y \nabla_y I_x(t,y) = \tfrac12 a^{-1}$,
i.e.~$[\nabla_y \nabla_y I_x(t,y) ]^{-1} = O(t)$ and hence
$Q_{x,y}(t)= [\nabla_y \nabla_y I_x(t,\phi_{x,y}(t)) ]^{-1} = O(t)$.
\end{proof}

\subsubsection{Example: The one-dimensional Ornstein-Uhlenbeck process}
\label{sec:exampl-one-dimens}

Consider equation~(\ref{eq:sde}) with $n=1$, $\sigma=1$, and
$b(x)=-x$, i.e.
\begin{equation}
  \label{eq:1dOU}
  dX_t^\eps = -X_t^\eps\,dt + \sqrt{\eps} dW_t\,,\quad X_0^\eps = x\,.
\end{equation}
This one-dimensional and linear case is the easiest possible
non-trivial scenario, and in particular we know explicitly that
\begin{equation}
  \label{eq:ou-exact}
  \rho_\eps^x(T,y) = \sqrt{\frac1{\pi\eps(1-e^{-2T})}}
  \exp\left(-\frac{\left|y-xe^{-t}\right|^2}{\eps(1-e^{-2T})}\right)\,.
\end{equation}
We want to compare this analytical result to the approximation
$\bar\rho_\eps^x(T,y)$ given in equation~(\ref{eq:12}). In fact it
turns out that $\bar\rho_\eps^x(T,y)$ of proposition~\ref{thm:pdfs} is
exact in this case, since there is no higher order contribution in
$\eps$ in the prefactor. To show this, we have the instanton equations
\begin{equation}
  \begin{cases}
    \dot\phi = -\phi + \theta,&\phi(0)=x\,,\\
    \dot\theta = \theta,& \phi(T) = y\,,
  \end{cases}
\end{equation}
which are solved by
\begin{equation}
  \left\{\begin{aligned}
    \phi(t) &= \frac{y-xe^{-T}}{1-e^{-2T}}\left(e^{t-T}-e^{-t-T}\right)\\
    \theta(t) &= \frac{2\left(y-xe^{-T}\right)}{1-e^{-2T}} e^{t-T}\,.
  \end{aligned}\right.
\end{equation}
The exponential estimate therefore yields
\begin{equation}
  \exp\left(-\frac1{2\eps} \int_0^T \theta^2(t)\,dt\right)
  = \exp\left(-\frac{\left|y-xe^{-T}\right|^2}{\eps\left(1-e^{-2T}\right)}\right)\,.
\end{equation}
For the prefactor, we have the Riccati equation
\begin{equation}
    \dot Q = -2Q + 1, \qquad  Q(0)=0\,,
\end{equation}
which is solved by
\begin{equation}
  Q(t) = \frac12(1-e^{-2t})\,.
\end{equation} Therefore
\begin{equation}
  |Q(T)|^{-1/2} = \left(\frac12(1-e^{-2T})\right)^{-1/2}\,,
\end{equation}
and since $\nabla \nabla b(x) = 0$ in this example, we obtain
\begin{equation}
  \bar\rho_\eps^x(T,y) = \sqrt{\frac1{\pi\eps(1-e^{-2T})}}
  \exp\left(-\frac{\left|y-xe^{-T}\right|^2}{\eps(1-e^{-2T})}\right)\,,
\end{equation}
which is precisely the analytical result.

\subsection{Probability density of the invariant measure}
\label{sec:prob-dens-invar}

We can generalize Proposition~\ref{thm:pdfs} to calculate the
probability density of the invariant measure, using again the
procedure of gMAM to compactify physical time: 
\begin{proposition}
  \label{thm:pdfinf}
  Let $(\hat \phi_{y}(s),\hat \theta_{y}(s))$
  solve the geometric instanton equations
  \begin{equation}
    \label{eq:4inf}
    \begin{aligned}
      \lambda \hat \phi_{y}' &= a \hat\theta_{y}+b(\hat\phi_{y}),
      \qquad &&\hat\phi_{y}(0) = x_*,\\
      \lambda \hat \theta'_{y} &= -(\nabla b(\hat\phi_{y}))^\top
      \theta_{y}, &&\hat\phi_{y}(1) = y\,.
    \end{aligned}
  \end{equation}
  where $\lambda = |a \hat\theta_{y}+b(\hat\phi_{y})|/|\phi'_{y}|$. In
  addition let $Q_*$ be the solution to  the Lyapunov equation
    \begin{equation}
      \label{eq:28}
      0 = Q_* (\nabla b(x_*))^\top + (\nabla b( x_*)) Q_*  + a\,,
    \end{equation}
  and let $\hat Q_y(s)$ be the solution to the forward Riccati equation
  \begin{equation}
    \label{eq:Qinf}
     \lambda \hat  Q'_{y}  = \hat Q_{y} \hat K_y \hat Q_{y} +
      \hat Q_{y} (\nabla b(\hat \phi_{y}))^\top  + 
      (\nabla b(\hat \phi_{y})) \hat Q_{y}
      + a,\qquad \hat Q_y(0) = Q_*\,,
    \end{equation}
    where we denote 
\begin{equation}
\hat K_y(s) = \nabla\nabla\<b(\hat \phi_y(s)),\hat
\theta_y(s)\>\,,\label{eq:26}
\end{equation}
Finally, let $\hat I(y)$ be given by 
\begin{equation}
  \label{eq:56}
  \hat I(y) = \frac1{2\eps}  \int_0^1 \lambda^{-1}(s) 
  \langle \hat\theta_{y}(s),a \hat\theta_{y}(s)\rangle ds\,,
\end{equation}
Then the probability density function $\rho_\eps(y)$ of satisfies
\begin{equation}
  \label{eq:thm2inf}
  \lim_{\eps\to0} \frac{\rho_\eps(y)}{\bar \rho_\eps(y)}= 1,
\end{equation}
in which  $\bar \rho_\eps(y)$ is given by either one of the two equivalent expressions
\begin{equation}
  \label{eq:12inf}
  \begin{aligned}
    \bar \rho_\eps(y) & = (2\pi\eps)^{-n/2}|\det \hat Q_{y}(1)|^{-1/2}
    \exp\left(\frac12 \int_0^1 \lambda^{-1}(s) \tr(\hat K_y(s)\hat
      Q_{y}(s))\,ds-\frac1{2\eps} \hat I(y)\right)\\
    & = (2\pi\eps)^{-n/2}|\det Q_*|^{-1/2} \exp\left(-\int_0^1
      \lambda^{-1}(s) \left(\nabla \cdot b(\hat
        \phi_y(s))+\tfrac12\tr(a\hat
        Q^{-1}_{y}(s))\right)\,ds-\frac1{2\eps} \hat I(y)\right).
  \end{aligned}
\end{equation}

\end{proposition}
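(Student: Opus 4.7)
The plan is to obtain Proposition~\ref{thm:pdfinf} as the $T\to\infty$ limit of Proposition~\ref{thm:pdfs}, following the compactification strategy used in the proof of Proposition~\ref{thm:expectationinf}. Write $\rho_\eps(y) = \lim_{T\to\infty} \rho^x_\eps(T,y)$ for any $x$ (Assumption~\ref{as:2} makes the basin of attraction of $x_*$ all of $\RR^n$, so the limit is independent of $x$). For large $T$, the instanton $\phi_{x,y}$ connecting $x$ to $y$ in time $T$ must spend almost all its time near $x_*$, since otherwise $I_x(T,y)$ would grow unboundedly with $T$. I would therefore split the minimizer into a deterministic relaxation piece from $x$ to $x_*$ on which $\theta\equiv 0$ and the action vanishes, and a genuine instanton piece from $x_*$ to $y$. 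On the latter, reparametrizing physical time by arclength $s\in[0,1]$ with $\lambda(s)=ds/dt$, the arguments in Proposition~\ref{thm:expectationinf} reduce the instanton equations to~\eqref{eq:4inf}, yield $\lambda = |b(\hat\phi)|_a/|\hat\phi'|_a$ via energy conservation $H(\hat\phi,\hat\theta)=0$, and identify the exponential factor as $\hat I(y)$ of~\eqref{eq:56}.

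Next I would track what happens to the forward Riccati equation~\eqref{eq:Q} with $Q_{x,y}(0)=0$ under this decomposition. On the relaxation segment $\theta\equiv 0$ so $K\equiv 0$ and~\eqref{eq:Q} degenerates to the linear Lyapunov-type equation $\dot Q = Q(\nabla b(\phi))^\top + (\nabla b(\phi))Q + a$. Because the trajectory lingers arbitrarily long near $x_*$, where $\nabla b(x_*)$ is Hurwitz by Assumption~\ref{as:2}, the solution starting from $Q=0$ converges exponentially fast to the unique positive-definite equilibrium $Q_*$ defined by~\eqref{eq:28}. Thus, by the time the instanton piece begins at $s=0$, the Riccati variable has effectively been reset to $Q_*$, producing the initial condition $\hat Q_y(0)=Q_*$ in~\eqref{eq:Qinf}. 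Reparametrizing introduces the factor $\lambda^{-1}(s)$ in~\eqref{eq:Qinf} and in the prefactor integral, giving the first line of~\eqref{eq:12inf}.

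To obtain the second, equivalent form in~\eqref{eq:12inf}, I would invoke Liouville's identity for $\hat Q_y$. Writing $(\log\det \hat Q_y)' = \tr(\hat Q_y^{-1}\hat Q_y')$ and using~\eqref{eq:Qinf} gives
\begin{equation*}
  \lambda \,(\log\det \hat Q_y)' = \tr(\hat K_y \hat Q_y) + 2\,\nabla\cdot b(\hat\phi_y) + \tr(a\hat Q_y^{-1}).
\end{equation*}
Integrating from $s=0$ (where $\hat Q_y = Q_*$) to $s=1$, solving for $\int_0^1 \lambda^{-1}\tr(\hat K_y\hat Q_y)\,ds$, and substituting into the first line of~\eqref{eq:12inf} produces the second line, so the two expressions agree.

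The main obstacle I expect is the careful justification of the ``reset'' of the Riccati variable from $0$ at $t=0$ to $Q_*$ at the start of the instanton segment as $T\to\infty$. One needs both that minimizers of the action genuinely spend arbitrarily long time near $x_*$ for large $T$ (implicit in the convexity assumption and Assumption~\ref{as:2}) and that the Lyapunov-type ODE for $Q$ converges to $Q_*$ fast enough that the subsequent finite-arclength instanton segment is fully decoupled from the initial condition $x$. The linearization near $x_*$ together with the Hurwitz property of $\nabla b(x_*)$ provides the required exponential convergence, but quantifying it and matching it with the gMAM reparametrization (in particular verifying well-posedness of~\eqref{eq:Qinf} at $s=0$, where $\lambda(0)=0$) is the delicate step. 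Once this is handled, the rest follows from Proposition~\ref{thm:pdfs} by a direct passage to the limit.
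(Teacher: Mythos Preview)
Your proposal is correct and follows essentially the same route as the paper: pass to the $T\to\infty$ limit of Proposition~\ref{thm:pdfs} via the gMAM reparametrization, identify the initial condition $\hat Q_y(0)=Q_*$, and derive the second form of~\eqref{eq:12inf} by the Liouville identity exactly as you do. The one difference is in how $Q_*$ is obtained: you argue dynamically, letting $Q$ evolve along the relaxation segment and converge to the Lyapunov fixed point; the paper instead observes directly that at $s=0$ one has $\hat\phi_y(0)=x_*$, $\hat\theta_y(0)=0$, and $\lambda(0)=0$, so the geometric Riccati equation~\eqref{eq:Qinf} itself collapses to the algebraic Lyapunov equation~\eqref{eq:28}. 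The paper's argument is shorter and sidesteps the convergence-rate issue you flag as delicate, while yours makes the connection to the finite-$T$ problem more explicit.
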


The function $\hat I(y)$ defined in~\eqref{eq:56} is important: it is
the quasipotential of $y$ with respect to $x_*$. Interestingly
$ \hat Q_y(s) = [\nabla \nabla \hat I(\hat \phi_y(s))]^{-1}$ and so~\eqref{eq:12}
can also be written as
\begin{equation}
  \label{eq:27}
  \begin{aligned}
    \bar \rho_\eps(y) &= (2\pi\eps)^{-n/2}|\det \nabla\nabla\hat
    I(y)|^{1/2} \exp\left(\frac12 \int_0^1 \lambda^{-1}(s) \tr(\hat
      K_y(s)\hat Q_{y}(s)) \,ds-\frac1{2\eps} \hat I(y)\right)\\
      & = (2\pi\eps)^{-n/2}|\det \nabla\nabla \hat I(x_*)|^{1/2}
      \exp\left(- \int_0^1 \lambda^{-1}(s) \hat L(y,s)\,ds
        -\frac1{2\eps} \hat I(y)\right)\,,
  \end{aligned}
\end{equation}
where we defined
\begin{equation}
  \label{eq:57}
  \hat L(s,y) = \nabla
  \cdot b(\hat \phi_y(s))+\tfrac12\tr(a \nabla \nabla \hat I(\hat
  \phi_y(s)))
\end{equation}
The second expression in~\eqref{eq:27} is consistent with the one
derived in~\cite{bouchet-reygner:2016}. Note that we can find yet
another illuminating form to write~(\ref{eq:27}) by use of the
\textit{transverse decomposition} that decomposes the drift $b(x)$
into a gradient of the quasipotential, and a transverse portion
$l(x)$, defined as
\begin{equation}
  \label{eq:transverse-decomposition}
  l(x) = b(x) +\tfrac12a\nabla \hat I(x)\qquad\text{with}\qquad
  \langle \nabla I(x), l(x)\rangle=0\,.
\end{equation}
Such a decomposition is guaranteed to exist under the assumption that
the quasipotential is continuously
differentiable~\cite[chap.~3.4]{freidlin-wentzell:2012}. In terms of
$l(x)$~(\ref{eq:57}) reduces to \mbox{$\hat L(s,y) = \nabla\cdot
  l(\hat\phi_y(s))$}. Note that for a reversible diffusion, i.e.~a
gradient flow, we have $U(x) = 2\hat I(x)$, and so $l(x)=0$ and $\hat
L(s,y)$ vanishes. More generally, in flows that have $\nabla\cdot
l(x)=0$, the invariant measure remains the Gibbs-measure,
$\rho_\eps(y) = (2\pi\eps)^{-n/2} |\det \nabla\nabla \hat
I(x_*)|^{1/2} \exp\left(-\frac1{2\eps} \hat I(y)\right)$, leading the
authors of~\cite{bouchet-reygner:2016} to call the quantity
$\exp\big(-\int_0^1 \lambda^{-1}(s) \nabla\cdot l(x)\,ds\big)$ the
\textit{non-Gibbsianness} of the flow. It is important to remark,
though, that this simplification is only available in the infinite
time case on the invariant measure, as for finite times the
corresponding quantity $\nabla_y\nabla_y I_x(t,y)$ becomes singular at
$t=0$. Further, since $\nabla\cdot l$ is generally not available
explicitly, the first form of~(\ref{eq:27}) is the one most easily
used for numerical computation.

We can use~\eqref{eq:27} to show that $\lim_{\eps\to0}\int_{\RR^n}
\bar \rho_\eps(y) dy = 1$, i.e. the equivalent of Lemma~\ref{th:lem1}
holds in the infinite time limit using the geometric expressions
above. Basically, this is because the dominating point in this
integral is $y=x_*$, for which $\hat \theta_{x_*}(s)=0$ and $\hat
Q_{x_*}(s) = Q_*= \nabla \nabla\hat I^{-1}(x_*)$. This also shows
that, $O(\sqrt{\eps})$ away from $x_*$, $\rho_\eps(y)$ can be
approximated by the Gaussian density with mean $x_*$ and covariance
$\eps Q_*$ given by
\begin{equation}
  \label{eq:29}
  (2\pi\eps)^{-n/2}|\det
    Q_*|^{-1/2}
    \exp\left(-\frac1{2\eps} \< (y-x_*),Q_*^{-1} (y-x_*)\>\right),
\end{equation}
and this density is normalized. However, for locations~$y$ that are
$O(1)$ away from $x_*$, this Gaussian approximation is no longer
valid, and the full expression~\eqref{eq:12inf} must be used as
estimate of the probability density on the invariant measure, as shown
in Sec.~\ref{sec:expectrevisit}.

\begin{proof}[Proof of Proposition~\ref{thm:pdfinf}]
  To get the first equality in~\eqref{eq:12inf}, we can mimic the
  steps in the the proof of Proposition~\ref{thm:pdfs}. The only difference is that
  we need to show the validity of the boundary conditions
  for $\hat Q(0)$ given in equation~(\ref{eq:28}). It was established
  before that, regardless of the initial conditions $x$, the instanton
  $\hat\eta(s)$ first decays into the unique fixed point $x_*$ on
  $s\in[-1,0]$. For $s=0$, we then have $\hat\phi(0)=x_*$,
  $\hat\theta(0)=0$, and $\lambda(0) = |
  b(x_*)|_a/|\hat \phi'(0)|_a = 0$, so that we deduce from the arc-length
  reparametrization of equation~(\ref{eq:Q}), given
  in~(\ref{eq:Qinf}), that
  \begin{equation}
    0 = \hat Q(0) (\nabla b(x_*))^\top  + (\nabla b(x_*))\hat Q(0)  + a\,.
  \end{equation}
  This shows that $\hat Q_y(0) = Q_*$ with $Q_*$ solution
  to~\eqref{eq:28}.

  To get the second equality in~\eqref{eq:12inf}, start
  from~\eqref{eq:Qinf} for $\hat Q_y$ and use
  Liouville's formula to deduce that
  \begin{equation}
    \label{eq:58}
    \begin{aligned}
      \lambda \frac{d}{ds} \log \det \hat Q_y & = \tr \left[ \hat Q_y^{-1}(
      \hat Q_{y} \hat K_y \hat Q_{y} + \hat Q_{y} (\nabla b(\hat
      \phi_{y}))^\top + (\nabla b(\hat \phi_{y})) \hat Q_{y} + a)
    \right]\\
    & = \tr (\hat K_y \hat Q_{y}) + 2\nabla
    \cdot b + \tr (a \hat Q_y^{-1}) 
  \end{aligned}
\end{equation}
Integrating this equation on~$s\in[0,1]$ using $\hat Q_y(0) = Q_*$
gives
\begin{equation}
  \label{eq:59}
  \det \hat Q_y(1) = \det Q_* \exp\left( \int_0^1 \lambda^{-1}(s)
    \left(\tr (\hat K_y(s) \hat Q_{y}) + 2\nabla\cdot b
      (\hat \phi_y(s))
      + \tr (a \hat Q_y^{-1}) \right) \right)
\end{equation}
i.e.
\begin{equation}
  \label{eq:60}
  \begin{aligned}
    & |\det \hat Q_y(1)|^{-1/2} \exp\left( \frac12\int_0^1
      \lambda^{-1}(s) \tr
      (\hat K_y(s) \hat Q_{y}) \right)\\
    & = |\det Q_* |^{-1/2} \exp\left( -\int_0^1 \lambda^{-1}(s)
      \left(\nabla \cdot b(\hat \phi_y(s))+\tfrac12\tr (a \hat
        Q_y^{-1})\right) \right)
  \end{aligned}
\end{equation}
This shows that the two expressions for $\bar\rho_\eps(y)$
in~\eqref{eq:12inf} are identical.
\end{proof}
\begin{remark}
  Note that this is consistent with the intuitive interpretation that
  $\hat Q$ quantifies the covariance of fluctuations around the
  instanton. For $T\to\infty$ the fluctuations will ``thermalize''
  around the fixed point, which corresponds to considering the
  linearized (Ornstein-Uhlenbeck) dynamics around $x_*$, the
  covariance of which solves the Lyapunov equation~(\ref{eq:28}).
\end{remark}

\begin{remark}
  \label{remark:geometric-riccati-initial}
  We seemingly encounter a practical problem with the forward Riccati
  equation~(\ref{eq:Qinf}) at $s=0$, as also discussed
  in~\cite{bouchet-reygner:2021}: Since the instanton remains at
  the fixed-point $x_*$ for an infinite amount of time, we have
  $\lambda(0)=0$, as well as $\hat\theta_y(0)=0$, and thus the forward
  Riccati equation~(\ref{eq:Qinf}) reads $0=0$ at $s=0$---a similar issue
  arises with the equation for $\hat \phi_y$ in~\eqref{eq:4inf} and is
  discussed in Appendix~\ref{sec:init-cond-geom}. This problem is only
  apparent however and the limit of $\hat Q_y'(s)$ as $s\to0$ can be
  straightforwardly obtained by writing~(\ref{eq:Qinf}) as
  \begin{equation}
    \label{eq:Qinfaa}
     \hat Q'_{y} = \lambda^{-1} \left(\hat Q_{y} \hat K_y \hat Q_{y} +
     \hat Q_{y} (\nabla b(\hat \phi_{y}))^\top + (\nabla b(\hat
     \phi_{y})) \hat Q_{y} + a\right),
  \end{equation}
  and sending $s\to0$ at both sides using l'H\^opital rule to compute
  the limit of the right hand side. Accounting for the fact that $\hat
  Q_y(0)= Q_*$ and $\hat K_y(0) =0$ since $\hat\theta_y(0)=0$, the
  result is the following equation for $\hat Q'_y(0) $:
  \begin{equation}
    \label{eq:Qinfaa_}
    \begin{aligned}
      \hat Q'_{y}(0) &= [\lambda'(0)]^{-1} \left(Q_{*} (\nabla \nabla b(x_*)\hat \theta_y'(0)) 
        Q_{*} + \hat Q'_{y}(0) (\nabla b(x_*))^\top + (\nabla b(x_*))
        \hat Q'_{y}(0) \right.\\
      &\quad \left. + Q_{*} (\nabla \nabla b(x_*) \hat
        \phi_y'(0))^\top + (\nabla \nabla b(x_*)\hat \phi_y'(0)) Q_{*}
      \right).
    \end{aligned}
  \end{equation}
  All the quantities in this equation except $\hat Q_y'(0)$ are
  available from the solution of the geometric instanton
  equation~\eqref{eq:4inf} as well as the Lyapunov
  equation~\eqref{eq:28} for $Q_*$, and \eqref{eq:Qinfaa_} can also be
  written as a Lyapunov equation
  \begin{equation}
    \label{eq:Q-prime-Lyapunov}
    \mathfrak C\, \hat Q_y'(0) + \hat Q_y'(0)\, \mathfrak C^\top = \mathfrak K \,,
  \end{equation}
  where we defined
  \begin{equation}
    \begin{aligned}
      \mathfrak C &= \tfrac12 \lambda'(0) \mathrm{Id} - \nabla b(x_*)\\
      \mathfrak K &= Q_{*} (\nabla \nabla b(x_*)\hat \theta_y'(0)) 
      Q_{*} +Q_{*} (\nabla \nabla b(x_*) \hat
      \phi_y'(0))^\top + (\nabla \nabla b(x_*)\hat \phi_y'(0)) Q_{*}\,.
    \end{aligned}
  \end{equation}
  Since $\lambda'(0)>0 $ and $- \nabla b(x_*)$ is positive-definite as
  $x_*$ is a stable fixed point of $\dot x = b(x)$ by assumption, the
  matrix $\mathfrak C$ is positive-definite and invertible, meaning
  that~\eqref{eq:Q-prime-Lyapunov} has a unique solution~$\hat
  Q_y'(0)$. Once $\hat Q_y'(0)$ has been calculated, one can
  initialize the integration of $\hat Q_y(s)$ with arc-length stepsize
  $\Delta s>0$ using e.g.
  \begin{equation}
    \hat Q_y(\Delta s) = Q_* + \Delta s \hat Q_y'(0) +O(\Delta s^2)\,.
  \end{equation}
  For the examples presented in
  sections~\ref{sec:exampl-invar-meas-R2-nonlin}
  or~\ref{sec:exampl-nonl-irrev-R2-nonlin}, it turns out that
  $\nabla\nabla b(x_*)=0$ and thus $Q_y'(0)=0$. Therefore,
  approximating $\hat Q_y(\Delta s)$ by $Q_*$ is correct to
  $\mathcal O(\Delta s^2)$, which is precise enough for the numerical
  scheme we employ, and is therefore what we used in
  practice. However, we note that it is straighforward to go to higher
  order if necessary: for example, the derivation to obtain
  $\hat Q_y''(0)$ when $\hat Q'_y(0)=0$ is given in
  appendix~\ref{sec:geom-ricc-equat}.
\end{remark}

\subsubsection{Example: Invariant measure of gradient diffusions}
\label{sec:exampl-invar-meas-1d-gradient}

Similar to the expectations for gradient diffusion, discussed in
section~\ref{sec:exampl-grad-syst}, we can also derive a formula for
the small $\eps$ approximation of the invariant density of the
gradient diffusion process~(\ref{eq:gradient-SDE}). The result is
known to be 
\begin{equation}
  \label{eq:grad-rho-inf}
  \bar\rho_\eps(y) = (2\pi\eps)^{-n/2} \left(\det H(x_*)\right)^{1/2} \exp(-\eps^{-1} U(y))\,,
\end{equation}
where the only approximation made is on the prefactor $Z_\eps$ that
can be evaluated by Laplace's method (using $U(x_*)=0$):
$\lim_{\eps\to0} Z_\eps = (2\pi\eps)^{n/2} \left(\det
  H(x_*)\right)^{-1/2} $. Let us show that the small $\eps$
approximation of Proposition~\ref{thm:pdfinf} recovers this result,
including the normalization factor.

First, the instanton contribution yields
\begin{equation}
  \begin{aligned}
    \lambda \hat \phi_{y}'  &= 2\hat\theta_{y}-\nabla U(\hat\phi_{y}), \qquad &&\hat\phi_{y}(0) = x_*,\\
    \lambda \hat \theta'_{y} &= H(\hat\phi_{y}) \theta_{y}, &&\hat\phi_{y}(1) = y,
  \end{aligned}
\end{equation}
which is solved by
\begin{equation}
  \lambda \hat\phi_y'(s) = \nabla U(\hat\phi_y(s)) = \hat\theta_y(s)
\end{equation}
so that the exponential large deviation contribution is given by
\begin{equation}
  \hat I(y) = \int_0^1\lambda^{-1}(s) |\hat \theta_y|^2(s)\,ds = \int_0^1 \langle \nabla U(\hat\phi(s)),\hat\phi'(s)\rangle\,ds = U(y)\,,
\end{equation}
which recovers the exponential part of $\bar \rho_\eps(y)$. If we
additionally notice that for a gradient system we know explicitly the
transverse decomposition to be trivial, $l(x)=0$ and thus $\hat
L(s,y)=0$, the second form of (\ref{eq:12inf}) in
proposition~\ref{thm:pdfinf} immediately yields the limiting density
of~(\ref{eq:grad-rho-inf}).

It is instructive to recover the same result by explicitly solving the
Riccati equation. Notice that $\hat K_y(s) = \nabla \nabla \< b(\hat
\phi_y(s), \hat \theta_y(s) \> = - \lambda (s) H'(\hat \phi_y(s))$, so
that forward Riccati equation~(\ref{eq:Qinf}) reduces to
\begin{equation}
  \label{eq:Q-gradient-case}
  \lambda \hat Q_y' = -\lambda\hat Q_y H'\hat Q_y - H\hat Q_y - \hat Q_yH + 2\textrm{Id}\,,\qquad \hat Q_y(0) = Q_*\,.
\end{equation}
We can directly solve the Lyapunov equation~(\ref{eq:28}) for $Q_*$,
\begin{equation}
  Q_* = \hat Q_y(0) = H(x_*)^{-1}\,.
\end{equation}
Given this fact, note that equation~(\ref{eq:Q-gradient-case}) is
solved by
\begin{equation}
  \hat Q_y(s) = H^{-1}(\hat\phi_y(s))\,.
\end{equation}
Therefore, we can read off
\begin{equation}
  |\det \hat Q_y(1))|^{-1/2} = |\det H(y))|^{1/2}\,.
\end{equation}
and
\begin{equation}
  \begin{aligned}
    & \exp\left(\frac12\int_0^1 \lambda^{-1} (s) \tr (\hat K_y(s) \hat Q_y(s)) ds\right)\\
    = &  \exp\left(-\frac12\int_0^1 \tr ( H^{-1}(\hat\phi_y(s)) H'(
      \hat\phi_y(s)) ds\right) = \left| \frac{\det \hat Q_y(0)}{\det
        \hat Q_y(1)}\right|^{1/2} = \left| \frac{\det H(y)}{\det H_* }\right|^{1/2}
  \end{aligned}
\end{equation}
Combining these results we indeed recover exactly the limiting density
given in~(\ref{eq:grad-rho-inf}).

\subsubsection{Example: Invariant measure of a nonlinear, irreversible process in $\RR^2$}
\label{sec:exampl-invar-meas-R2-nonlin}

For all above examples, analytical results were available from
case-specific calculations, since the systems are very simple. The
easiest example of a system were no analytical results can be easily
derived is a system which is nonlinear, and further the drift is not
given by a gradient of a potential. Since the latter is always the
case in 1D, we need to consider a state space of at least dimension
two. In this case, in order to compute expectations, probability
densities, or probabilities with our approach, we need to solve the
corresponding equations, both instanton equation and Riccati
equation, by numerical means.

As a concrete example, consider the non-gradient drift given by
\begin{equation}
  \label{eq:2d-nonlin-drift}
  b(x_1,x_2) = (-\alpha x_1 - \gamma x_1^3 + \beta x_2, -\alpha x_2 -
  \gamma x_2^3 - \beta x_1)\,.
\end{equation}
For positive $\alpha,\gamma\in\RR$, this drift corresponds to a
nonlinear attractive force towards the fixed-point $x_*=(0,0)$. The
parameter $\beta\ne0$ adds a swirl on the drift, so that the total
dynamics are no longer gradient. The behavior of this drift can be
seen by the streamlines in Fig.~\ref{fig:CompTwOU} (left). Note in
particular that the system is not rotationally symmetric in the
presence of cubic terms (i.e.~$\gamma\ne0$).

In order to estimate numerically the density $\bar \rho^\eps(y)$
defined in~\eqref{eq:12inf}, we
need to:
\begin{enumerate}
\item compute the instanton $\hat \phi_y(s)$ connecting $(0,0)$ to
  $y$, as well as the parametrization $\lambda(s)$, using gMAM; and
\item solve the forward Riccati equation (\ref{eq:Qinf}).
\end{enumerate}

\begin{figure}
  \begin{center}
    \includegraphics[height=180pt]{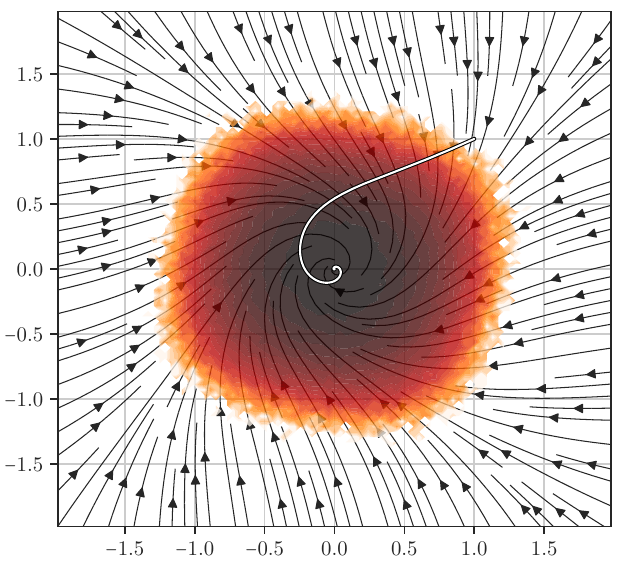}
    \includegraphics[height=180pt]{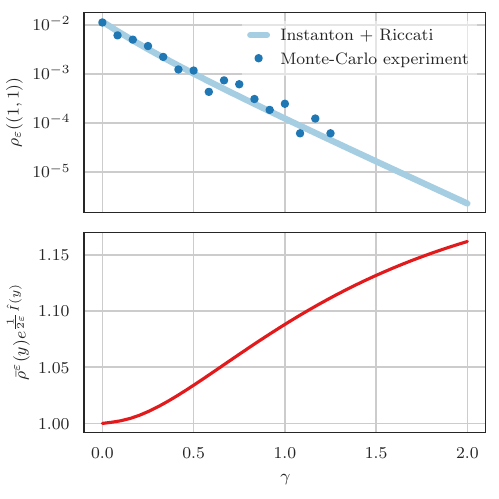}
  \end{center}
  \caption{Left: Instanton (solid curve) connecting the fixed point
    $x_*=(0,0)$ to the point $(1,1)$. The shaded contours indicate the
    invariant measure, obtained from sampling the process for long
    times. The flow field depicts the drift $b(x)$. Here,
    $\alpha=0.5$, $\beta=1$, $\gamma=1$, $\eps=0.25$. Top right:
    Comparison of the pdf $\rho_\eps((1,1))$ between the computation
    via proposition~\ref{thm:pdfinf} (light blue solid) and random
    sampling of the stochastic process (dark blue dots) for varying
    $\gamma\in[0,2]$. The pdf is reproduced at every point from
    instanton and fluctuations, including its normalization
    factor. Bottom right: Prefactor of the invariant density,
    $\bar \rho^\eps(y) e^{\frac12 \eps^{-1} \hat I(y)}$, at $y=(1,1)$
    as a function of the nonlinearity parameter $\gamma$, obtained by
    numerically integrating the Riccati equation along the instanton
    using proposition~\ref{thm:pdfinf}. Clearly, the strength of the
    nonlinearity influences the prefactor. The other parameters are
    again $\alpha = 0.5$, $\beta = 1$.}
  \label{fig:CompTwOU}
\end{figure}
The results of this procedure are shown in
Fig.~\ref{fig:CompTwOU}. For a given arbitrary endpoint $(1,1)$, we
can obtain the invariant density by sampling the process for long
times, and approximating the density by a normalized histogram. The
result is depicted as shaded contours on the left. Note that for this
procedure, not only do we need many samples hitting close to the point
$(1,1)$, but furthermore statistics everywhere else in state space,
because these determine the normalization constant. The instanton
connecting $(0,0)$ to $(1,1)$ is depicted here as a solid line, while
the drift is given by the flowlines.

We now want to establish how the prefactor contribution,
$\bar \rho^\eps(y) e^{\frac12 \eps^{-1} \hat I(y)}$, changes when
changing the parameter $\gamma$, which determines the strength of the
nonlinear term. In particular, for the linear case $\gamma=0$, we know
that
$\bar \rho^\eps(y) e^{\frac12 \eps^{-1} \hat I(y)}=1\ \forall\
y\in\RR^2$, since the invariant measure is Gaussian, and its
covariance can be computed by solving a Lyapunov equation. For the
nonlinear case, $\gamma>0$, this is no longer possible, and we need to
resort to numerical results instead. In the right panel of
Fig.~\ref{fig:CompTwOU} we show the computation of the prefactor through
proposition~\ref{thm:pdfinf}, computing the instanton and solving the
Riccati equation. For $\gamma=0$, this reproduces the known linear
result, but we obtain values for finite values of $\gamma$ as
well. These results are in agreement with results from sampling the
whole process, and looking at the normalized histogram at the point
$(1,1)$ for different $\gamma\in[0,2]$. The numerical parameters here
are $\alpha=0.5, \beta=1, \gamma\in[0,2], \eps=0.25$, and the
histogram binning is done with square bins of side-lengths
$\Delta x=0.02$ and with $10^7$ samples per value of $\gamma$.

\subsection{Calculation of expectations revisited}
\label{sec:expectrevisit}

We can use Proposition~\ref{sec:pdffinitet} to give an expression
alternative to that in Proposition~\ref{thm:expectation} for finite-time
expectations:

\begin{proposition}
  \label{thm:expectationrev}
  Let 
  $(\phi_x(t),\theta_x(t))$ solve the instanton equations in \eqref{eq:4},
  and $Q_x(t)$ be the solution to the forward Riccati equation
  \begin{equation}
    \label{eq:Qrev}
    \dot Q_x = Q_xK_x Q_x
    + Q_x(\nabla b(\phi_x))^\top 
    +(\nabla b(\phi_x)) Q_x +a\,,\qquad Q_x(0) =
    0\,,
  \end{equation}
  where we denote
  $K_x(t) = \nabla \nabla \< b(\phi_x(t)),\theta_x(t) \>$.  Then the
  factor $R(T,x)$ defined in \eqref{eq:30} of
  Proposition~\ref{thm:expectation} can also be expressed as
  \begin{equation}
    \label{eq:30rev}
    R(T,x) = |\det (\text{Id} - \nabla \nabla f(\phi_x(T))
    Q_x(T))|^{1/2} \exp\left(\frac12
      \int_0^T \tr(K_x(t)Q_x(t))\,dt\right).
  \end{equation}
  
\end{proposition}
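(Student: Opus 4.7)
The strategy is to exhibit a single differential identity that ties the two prefactor formulas together. The backward Riccati equation~\eqref{eq:W} for $W_x$ and the forward Riccati equation~\eqref{eq:Qrev} for $Q_x$ share the same structure (quadratic self-term, a Sylvester-type linear part governed by $\nabla b(\phi_x)$, and a forcing by $K_x$ or $a$), and I expect this duality to produce a clean identity for $\log\det(\text{Id}-W_x Q_x)$ whose boundary values link the two expressions at $t=0$ and $t=T$.

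First I would apply Jacobi's formula $\frac{d}{dt}\log\det M = \tr(M^{-1}\dot M)$ with $M = \text{Id} - W_x Q_x$, obtaining $-\tr[M^{-1}(\dot W_x Q_x + W_x \dot Q_x)]$. Substituting the two Riccati equations, the pair of $W_x(\nabla b) Q_x$ terms coming from $\dot W_x Q_x$ and $W_x\dot Q_x$ cancels outright, and the remaining contributions regroup into three classes: (i) the $K$-terms combine to $M^{-1}(\text{Id}-W_xQ_x)K_xQ_x = K_xQ_x$, whose trace is $\tr(K_xQ_x)$; (ii) the $a$-terms combine to $-M^{-1}W_x a(\text{Id}-W_xQ_x) = -M^{-1}W_x a M$, whose trace is $-\tr(aW_x)$ by cyclicity; (iii) the remaining linear-in-$\nabla b$ contributions form a commutator $\tr[M^{-1}((\nabla b)^\top W_xQ_x - W_xQ_x(\nabla b)^\top)]$ that vanishes because $M^{-1}=(\text{Id}-W_x Q_x)^{-1}$ commutes with $W_xQ_x$ (being a power series in it), so cyclicity of the trace annihilates the two contributions. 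The net identity is
\[
  \frac{d}{dt}\log\det(\text{Id}-W_x Q_x) = \tr(K_x Q_x) - \tr(aW_x).
\]

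Integrating from $0$ to $T$ and using the boundary conditions $Q_x(0)=0$ (so that $\log\det(\text{Id}-W_x(0)Q_x(0))=0$) and $W_x(T) = \nabla\nabla f(\phi_x(T))$, I would express $\int_0^T \tr(aW_x)\,dt$ as the sum of $\int_0^T \tr(K_x Q_x)\,dt$ and a boundary $\log\det$ term; exponentiating by $1/2$ and substituting in~\eqref{eq:30} then yields~\eqref{eq:30rev}. The main obstacle is the algebraic bookkeeping of the first step---keeping the various terms straight and recognizing that $[M^{-1}, W_xQ_x]=0$ is exactly what kills the $\nabla b$-contributions. This identity can also be viewed as a special case of Proposition~\ref{th:riccati1link} in Appendix~\ref{sec:riccatiexpect}, already invoked in the proof of Proposition~\ref{thm:pdfs}; the new ingredient here is that the terminal condition $W_x(T)\neq 0$ now survives and produces the $\log\det$ boundary factor.
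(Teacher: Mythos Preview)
Your proposal is correct and is essentially the paper's own argument: the paper invokes Proposition~\ref{th:riccati1link} (in Appendix~\ref{sec:riccatilink}, not \ref{sec:riccatiexpect}), whose proof is exactly the Jacobi-formula computation you carry out inline, yielding $\frac{d}{dt}\log\det(\text{Id}-W_xQ_x)=\tr(K_xQ_x)-\tr(aW_x)$ and then integrating with the boundary data $Q_x(0)=0$, $W_x(T)=\nabla\nabla f(\phi_x(T))$. The paper also sketches a second route---evaluating $\int_{\RR^n}e^{\eps^{-1}f}\bar\rho^x_\eps(T,y)\,dy$ by Laplace's method using the density estimate of Proposition~\ref{thm:pdfs}---which you do not mention, but your derivation already matches the primary proof.
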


\begin{proof}
  We can arrive at expression~\eqref{eq:30rev} for $R(T,x)$ in two
  ways. We can use Proposition~\ref{th:riccati1link} with $W_x(t)$
  solution to~\eqref{eq:W} and $Q_x(t)$ solution to~\eqref{eq:Qrev} to
  deduce from~\eqref{eq:39} that
  \begin{equation}
    \label{eq:55}
     |\det (\text{Id} - \nabla \nabla f(\phi_x(T))
    Q_x(T))|^{1/2} \exp\left(\frac12
      \int_0^T \tr(K_x(t)Q_x(t))\,dt\right)= \exp\left(\frac12
      \int_0^T \tr(aW_x(t)\,dt\right)
  \end{equation}
  The left hand side is~\eqref{eq:30rev} and the right hand side is
  \eqref{eq:30}, showing that these two equations contain identical
  expressions.  Alternatively, we can evaluate
  \begin{equation}
    \label{eq:54}
    \int_{\RR^n} \exp(\eps^{-1} f(x)) \bar \rho^\eps(T,x)
      dx
  \end{equation}
  by  Laplace method, using the expression in~\eqref{eq:12} for $\bar
  \rho^\eps(T,x)$. This calculation show that this integral is
  asymptotically equivalent to $\bar A^\eps(T,x)$ in~\eqref{eq:3} with
  $R(T,x)$ given by \eqref{eq:30rev} instead of~\eqref{eq:30rev},
  thereby establishing again that these two equations give the same $R(T,x)$
  since this factor is independent on $\eps$. 
\end{proof}

Similarly, we can use Proposition~\ref{thm:pdfinf} to give an expression
alternative to that in Proposition~\ref{thm:expectationinf} for
expectations on the invariant measure:

\begin{proposition}
  \label{thm:expectationrevinf}
  Let $(\hat \phi(s),\hat \theta(s))$ solve the instanton equations
  in~\eqref{eq:4infpdf}, and $\hat Q(s)$ be the solution to the
  forward Riccati equation
  \begin{equation}
    \label{eq:Qrevinf}
    \lambda (s) \hat Q' = \hat Q\hat K \hat Q
    + \hat Q(\nabla b(\hat \phi))^\top 
    +(\nabla b(\hat \phi)) \hat Q +a\,,\qquad \hat Q(0) =
    Q_*\,.
  \end{equation}
  where $Q_*$ solves~\eqref{eq:28} and we denote
  $\hat K(s) = \nabla \nabla \< b(\hat \phi(s)),\hat \theta(s)
  \>$. Then the factor $\hat R$ defined in~\eqref{eq:32} of
  Proposition~\ref{thm:expectationinf}, can also be expressed as
  \begin{equation}
    \label{eq:30revinf}
    \hat R = |\det (\text{Id} - \nabla \nabla f(\hat \phi(1))
    \hat Q(1))|^{1/2} \exp\left(\frac12
      \int_0^1 \lambda^{-1}(s) \tr(\hat K(s)\hat Q(s))\,ds\right).
  \end{equation}
\end{proposition}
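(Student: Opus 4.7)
The plan is to mirror the two-pronged proof of Proposition~\ref{thm:expectationrev}, adapted to the compactified geometric parametrization of Proposition~\ref{thm:expectationinf}. The core algebraic fact remains the link between the backward Riccati solution $\hat W(s)$ appearing in~\eqref{eq:Winf} and the forward Riccati solution $\hat Q(s)$ appearing in~\eqref{eq:Qrevinf}, both evaluated along the same geometric instanton $(\hat\phi(s),\hat\theta(s))$; the new ingredient, compared to the finite-time case, is that the ``initial'' conditions at $s=0$ are no longer trivial.

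First I would invoke Proposition~\ref{th:riccati1link} after the change of variable $dt = \lambda^{-1}(s)\,ds$ (or, equivalently, reverify directly from~\eqref{eq:Winf} and~\eqref{eq:Qrevinf} that $\frac{d}{ds}\bigl[\log\det(\mathrm{Id}-\hat W\hat Q)+ \int_0^s \lambda^{-1}(\tr(\hat K\hat Q)-\tr(a\hat W))\,d\tau\bigr] = 0$). Integrating from $s=0$ to $s=1$ and using the terminal condition $\hat W(1) = \nabla\nabla f(\hat\phi(1))$ gives
\begin{equation*}
|\det(\mathrm{Id}-\nabla\nabla f(\hat\phi(1))\hat Q(1))|^{1/2}\exp\!\left(\tfrac12\!\int_0^1\!\lambda^{-1}\tr(\hat K\hat Q)\,ds\right) = |\det(\mathrm{Id}-\hat W(0)\hat Q(0))|^{1/2}\exp\!\left(\tfrac12\!\int_0^1\!\lambda^{-1}\tr(a\hat W)\,ds\right).
\end{equation*}
The right-hand exponential is exactly $\hat R$ as defined in~\eqref{eq:32} of Proposition~\ref{thm:expectationinf}, and the left-hand side is the candidate expression~\eqref{eq:30revinf}. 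So the entire claim reduces to showing $\det(\mathrm{Id}-\hat W(0)\hat Q(0)) = 1$.

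The hard part — indeed the only genuinely new step — is verifying this boundary identity. Since $\hat Q(0) = Q_*$ is nonsingular, it suffices to show $\hat W(0) = 0$. This is the analogue of the boundary analysis at the end of the proof of Proposition~\ref{thm:expectationinf}: at $s=0$ the geometric compactification forces $\hat\phi(0)=x_*$, $\hat\theta(0)=0$, and $\lambda(0)=0$, so~\eqref{eq:Winf} becomes the algebraic Riccati equation $0 = -(\nabla b(x_*))^\top \hat W(0) - \hat W(0)\nabla b(x_*) - \hat W(0)\,a\,\hat W(0)$, whose only stabilizing solution under Assumption~\ref{as:2} (all eigenvalues of $\nabla b(x_*)$ have negative real parts) is $\hat W(0) = 0$. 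The subtle point I would check carefully is that integrating~\eqref{eq:Winf} backwards from $s=1$, with $\lambda^{-1}(s)$ diverging as $s\to 0$, does not allow $\hat W(s)$ to land on a nonzero algebraic root; the same argument used in the proof of Proposition~\ref{thm:expectationinf} applies verbatim.

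For an independent cross-check I would carry out the Laplace-method alternative used in the second part of the proof of Proposition~\ref{thm:expectationrev}: evaluate $B_\eps = \int_{\RR^n} \exp(\eps^{-1}f(y))\,\bar\rho_\eps(y)\,dy$ using the expression~\eqref{eq:12inf} for $\bar\rho_\eps$. The stationary-point condition $\nabla f(y) = \nabla \hat I(y)$ selects $y = \hat\phi(1)$ (by Proposition~\ref{thm:expectationinf}), and the identification $\nabla\nabla \hat I(\hat\phi(1)) = \hat Q(1)^{-1}$ shows that the Hessian of $\hat I - f$ at this point is $\hat Q(1)^{-1} - \nabla\nabla f(\hat\phi(1))$; combining its determinant with the $|\det \hat Q(1)|^{-1/2}$ prefactor in~\eqref{eq:12inf} produces exactly $|\det(\mathrm{Id} - \nabla\nabla f(\hat\phi(1))\hat Q(1))|^{1/2}$, and matching the result against $\bar B_\eps$ from Proposition~\ref{thm:expectationinf} gives~\eqref{eq:30revinf}, independently of $\eps$, as required.
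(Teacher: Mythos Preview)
Your proposal is correct and follows essentially the same two-pronged approach the paper uses for Proposition~\ref{thm:expectationrev} (the paper's own proof here is literally ``similar to that of Proposition~\ref{thm:expectationrev}''). You go further than the paper by spelling out explicitly the one genuinely new boundary issue, namely that $\det(\mathrm{Id}-\hat W(0)\hat Q(0))=1$ follows from $\hat W(0)=0$, which in turn is the conclusion already reached at the end of the proof of Proposition~\ref{thm:expectationinf}.
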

The proof of this proposition is similar to that of
Proposition~\ref{thm:expectationrev}.

\section{Probabilities}
\label{sec:prob}

\subsection{Probabilities at finite time}
\label{sec:prob-finite-t}

Having access to pointwise estimates of the probability density allows
us to estimate probabilities by integration. For example, let
$f:\RR^n\to\RR$ be some observable, and assume we want to compute the
probability that $f(X_T^\eps)$ exceeds some value $a\in \RR$. This
probability can be expressed as
\begin{equation}
  \label{eq:Pa}
  \PP^x(f(X^\eps_T)\ge a  ) = \int_{f(y)\ge a} \rho^x_\eps(T,y)\,dy\,,
\end{equation}
and calculating for various values of $a$ gives the complementary
cumulative distribution function (aka tail distribution) of the random
variable $f(X_T^\eps)$. For concreteness, we will focus on the
calculation of this tail probability for one value of $a$ which,
without loss of generality, we can set to zero. In this case,
\eqref{eq:Pa} can also be interpreted as the probability that the
stochastic process~(\ref{eq:sde}) hits the set
\begin{equation}
  \label{eq:setAdef}
  A = \{z\in\RR^n | f(z) \ge 0\}\,,
\end{equation}
at time $t=T$. This is a problem interesting in its own right, and we
will denote this probability as
\begin{equation}
  \label{eq:PA}
  P_\eps^A(T,x) = \PP^x(X^\eps_T\in A) = \int_A \rho^x_\eps(T,y)\,dy\,.
\end{equation}
To make the problem interesting, we will work under:
\begin{assumption}
  \label{as:set}
  The function $f:\RR^n\to\RR$ is in $C^2$, $x\notin A$ (i.e.~$f(x)<0$),
  and the vector field $b$ points outward $A$ everywhere on $\partial
  A$ (i.e.~$\langle\nabla f(z),b(z)\rangle <0 \ \forall\ z \in \partial A
  = \{ z\in\RR^n | f(z)=0\}$).
\end{assumption}
\noindent
This assumption implies that the event $X^\eps_T$ is noise-driven, and
as a result $P_\eps^A(T,x) \to0$ as $\eps\to0$, which is the
nontrivial case. We also need
\begin{assumption}
  The point on $\partial A$ with minimal $I_x(T,z)$,
  \begin{equation}
    y = \argmin_{z\in \partial A} \int_0^T
    \langle \theta_{x,z}(t), a \theta_{x,z}(t)\rangle\,dt
  \end{equation}
  is unique, and so is the instanton leading to it.
\end{assumption}
\noindent
Intuitively we demand that the set $A$ does not admit multiple
distinct points that can be reached by competing instantons of
identical action. Then, we have:
\begin{proposition}
  \label{thm:probability}
  Let $(\phi_{x,y}(t),\theta_{x,y}(t))$
  solve the instanton equations in~\eqref{eq:4} with $y$ specified as
  \begin{equation}
    \label{eq:ydef}
    y = \argmin_{z\in \partial A} \int_0^T
    \langle \theta_{x,z}(t), a \theta_{x,z}(t)\rangle\,dt,\,
  \end{equation}
  and let $Q_{x,y}(t)$ solve the forward Riccati equation
  in~\eqref{eq:Q}. Let also $F(T,x)$ and $V(T,x)$ be defined as
  \begin{equation}
    \label{eq:33}
    \begin{aligned}
      F(T,x) &= \text{Id} - |\theta_{x,y}(T)| |\nabla f(y)|^{-1} Q_{x,y}(T) \nabla\nabla f(y),\\
      V(T,x) &= \langle\theta_{x,y}(T), Q_{x,y}(T) \theta_{x,y}(T)\rangle^{-1/2}
      \exp\left(\frac12\int_0^T \tr( K_{x,y}(t) Q_{x,y}(t))\,dt\right)\,,
    \end{aligned}
  \end{equation}
  where
  $K_{x,y}(t)= \nabla \nabla \<
  b(\phi_{x,y}(t)),\theta_{x,y}(t)\>$. Finally denote by
  $\hat n = \nabla f(y)/|\nabla f(y)|$ the inward pointing surface
  normal of $A$ at $y$. Then the probability $P^A_\eps(T,x)$
  satisfies
  \begin{equation}
    \label{eq:Paas}
    \lim_{\eps\to0}\frac{P^A_\eps(T,x)}{\bar P^A_\eps(T,x)}=
    1,
  \end{equation}
  in which 
  \begin{equation}
    \label{eq:17}
    \bar P^A_\eps(T,x) =  (2\pi)^{-1/2} \eps^{1/2} \left(\frac{\langle
        \hat n, F(T,x)\hat n\rangle}{\det F(T,x)}\right)^{1/2} V(T,x)
    \exp\left(-\frac1{2\eps}
      \int_0^T \langle \theta_{x,y}(t), a \theta_{x,y}(t)\rangle\,dt \right)\,.
  \end{equation}
  In addition , the unit normal $\hat n$ can be expressed as
  $\hat n= \theta_{x,y}(T)/|\theta_{x,y}(T)|$.
\end{proposition}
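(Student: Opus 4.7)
The plan is to start from the representation $P^A_\eps(T,x) = \int_A \rho^x_\eps(T,z)\,dz$, substitute the sharp density estimate from Proposition~\ref{thm:pdfs}, and evaluate the resulting integral by Laplace's method localized at the constrained minimizer $y$ of $I_x(T,\cdot)$ on $\partial A$. A Lagrange multiplier argument at that minimum gives $\nabla_y I_x(T,y) = \mu\, \nabla f(y)$ for some $\mu>0$; since from the proof of Proposition~\ref{thm:pdfs} we already know that $\nabla_y I_x(T,y) = \theta_{x,y}(T)$, this identifies both $\hat n = \theta_{x,y}(T)/|\theta_{x,y}(T)| = \nabla f(y)/|\nabla f(y)|$ (the inward normal claim at the end of the proposition) and the multiplier $\mu = \alpha := |\theta_{x,y}(T)|/|\nabla f(y)|$.

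Using this identification, first I would Taylor-expand both the constraint and the action near $y$. Writing $z = y + s\hat n + \xi_\perp$ with $\xi_\perp \perp \hat n$ and using $\nabla_y\nabla_y I_x(T,y) = Q_{x,y}^{-1}(T)$ (also shown in the proof of Proposition~\ref{thm:pdfs}), I obtain
\begin{equation*}
I_x(T,z) = I_x(T,y) + |\theta_{x,y}(T)|\,s + \tfrac12 \xi_\perp^\top Q_{x,y}^{-1}(T)\xi_\perp + O\bigl(s^2 + s|\xi_\perp| + |\xi_\perp|^3\bigr),
\end{equation*}
while the constraint $f(z)\geq 0$ becomes $s \geq -\xi_\perp^\top N\xi_\perp/(2|\nabla f(y)|) + O(|\xi_\perp|^3)$ with $N = \nabla\nabla f(y)$. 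The curvature correction here comes from the second fundamental form of $\partial A$: for any parametrization $\zeta$ of $\partial A$ near $y$, $\langle \hat n, \partial_i\partial_j\zeta\rangle = -(\partial_i\zeta)^\top N(\partial_j\zeta)/|\nabla f(y)|$, which also shows that the tangential Hessian of $I_x(T,\zeta(\cdot))$ at the minimizer is exactly $(Q^{-1} - \alpha N)|_{\hat n^\perp}$, ensuring positivity by the strict convexity assumption.

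Next I would perform the Laplace integration in the order $s$ first, then $\xi_\perp$. With the natural scaling $s = O(\eps)$, $\xi_\perp = O(\sqrt\eps)$, the inner $s$-integral over $[g(\xi_\perp),\infty)$ reduces at leading order to $(\eps/|\theta_{x,y}(T)|) \exp\bigl(\alpha\, \xi_\perp^\top N\xi_\perp/(2\eps)\bigr)$, and the remaining Gaussian integral over $\xi_\perp \in \hat n^\perp$ has inverse covariance $(Q^{-1}-\alpha N)|_{\hat n^\perp}$. Combining with the prefactor $(2\pi\eps)^{-n/2}|\det Q_{x,y}(T)|^{-1/2} \exp(\tfrac12\int \tr(K_{x,y}Q_{x,y})\,dt)$ from $\bar\rho^x_\eps$ immediately yields the announced $(2\pi)^{-1/2}\eps^{1/2}$ power of $\eps$ together with the exponential $\exp(-\frac{1}{2\eps}\int_0^T \langle\theta_{x,y},a\theta_{x,y}\rangle\,dt)$.

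The final step is algebraic: the $(n-1)$-dimensional determinant $\det_{n-1}[(Q^{-1}-\alpha N)|_{\hat n^\perp}]$ must be rewritten in the form $\det F \cdot \langle\hat n, Q\hat n\rangle/\langle\hat n, F\hat n\rangle$ featuring $F = \mathrm{Id}-\alpha QN$ of~\eqref{eq:33}, so that it combines with $|\theta_{x,y}(T)|$ and $|\det Q|^{1/2}$ to give exactly $V(T,x)$ and the $\sqrt{\langle\hat n, F\hat n\rangle/\det F}$ factor of the proposition. The key identity here is the Schur complement relation $\det_{n-1}(M|_{\hat n^\perp}) = \det M \cdot \langle \hat n, M^{-1}\hat n\rangle$ applied to $M = Q^{-1} - \alpha N = Q^{-1}F$, which gives $\det_{n-1}(M|_{\hat n^\perp}) = \det Q^{-1}\det F \cdot \langle \hat n, F^{-1}Q\hat n\rangle$; one then uses $\theta_{x,y}(T) = |\theta_{x,y}(T)|\hat n$ so that $\langle\theta, Q\theta\rangle = |\theta|^2 \langle\hat n, Q\hat n\rangle$ to reshape the result. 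I expect the main difficulty to be exactly this bookkeeping of scalar factors between the Gaussian determinant on $\hat n^\perp$ and the geometric objects $F$, $Q$, $\theta$ that survive in the final statement, since the $(n{-}1)$-dimensional nature of the tangential integration has to combine cleanly with the one-dimensional normal integration to produce the advertised ratio.
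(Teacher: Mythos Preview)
Your proposal is correct and follows essentially the same route as the paper: integrate the sharp density estimate of Proposition~\ref{thm:pdfs} over $A$, localize at the constrained minimizer $y$ via the splitting $z-y = \eps z_\parallel \hat n + \sqrt\eps\, z_\perp$, approximate $A$ near $y$ by a paraboloid, perform the normal integral first and then the tangential Gaussian, and reduce the resulting $(n{-}1)$-dimensional determinant via the identity $\det_\perp H = \langle\hat n, H^{-1}\hat n\rangle\det H$ (which the paper proves in an appendix and you call the Schur complement relation). Your explicit Lagrange-multiplier identification $\nabla_y I_x(T,y)=\mu\nabla f(y)$ is slightly more detailed than the paper's statement that $\hat n=\theta_{x,y}(T)/|\theta_{x,y}(T)|$, but the content is identical; and you correctly anticipate that the only real work left is the algebraic bookkeeping turning $\det_\perp(Q^{-1}-\alpha N)$ into the advertised ratio $\langle\hat n,F\hat n\rangle/\det F$ with $F=\mathrm{Id}-\alpha QN$.
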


\begin{figure}
  \begin{center}
    \includegraphics[width=200pt]{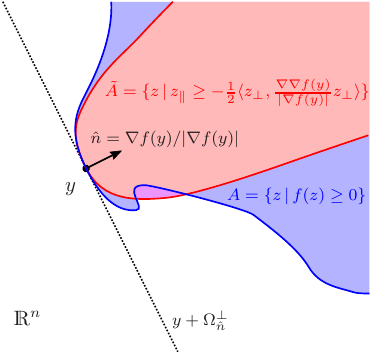}
  \end{center}
  \caption{Schematic representation of the situation in
    proposition~\ref{thm:probability}. To obtain the probability to
    hit the set $A$, we replace the integral of the density over $A$
    with an integral over the paraboloid $\tilde A$, which is
    tangential $A$ at the point of maximum likelihood $y$ and shares
    its curvature.}
  \label{fig:paraboloid}
\end{figure}
\begin{remark}
  As we will see in the proof, the factor involving $F(T,x)$ describes the
  effect of the geometry (specifically, the curvature) of the set $A$
  around the maximum likelihood hitting point $y$: If the set is
  planar, then $\nabla\nabla f(y) = 0$ and the factor evaluates to
  unity. Equivalently, this factor disappears for linear observables
  $f$. As will also be clear from the proof, in the definition of
  $F(T,x)$ in~\eqref{eq:33} we can replace $|\nabla f(y)|^{-1} \nabla
  \nabla f(y)$ by $\nabla \hat n(y)$, i.e.  in~\eqref{eq:17} we can
  replace $F(T,x)$ with
  \begin{equation}
    \label{eq:33b}
      F_\perp(T,x) = \text{Id} - |\theta_{x,y}(T)|   Q_{x,y}(T) \nabla \hat n(y)
  \end{equation}
  This is because
  $\nabla \hat n=|\nabla f|^{-1} \nabla \nabla f + |\nabla f)|^{-1}
  \hat n (\nabla \nabla f\hat n)^T$ and the extra factor
  $ |\nabla f|^{-1} \hat n (\nabla \nabla f \hat n)^T$ does not
  contribute to the ratio $\<\hat n, F(T,x)\hat n\>|\det F(T,x)|^{-1}$
  in~\eqref{eq:17}. This shows that this expression is intrinsic to
  the set $A$, i.e. it does not depend on the way we parametrize its
  boundary by the zero level-set of $f$, as it should be. For
  computational purposes, using the parametrized version
  in~\eqref{eq:33} is more convenient, however.
\end{remark}

\begin{proof}
  Let us evaluate the integral in~\eqref{eq:PA} by Laplace method
  using the ansatz~\eqref{eq:14} for $\rho_\eps^x(T,y)$. To this end
  notice that the point $y$ specified in~\eqref{eq:ydef} is also given
  by
  \begin{equation}
    y = \argmin_{z\in \partial A} I_x(T,z) \in \partial A\,.
  \end{equation}
  Notice also that $\theta_{x,y}(T) = \nabla_y I_x(T,y)$, and that
  $\hat n = \theta_{x,y}(T)/|\theta_{x,y}(T)|$ is the inward pointing
  unit normal to $\partial A$ at $y$. Therefore, to perform the
  integral
  \begin{equation}
    \bar P_\eps^A(T,x) = \int_A \bar\rho_\eps^x(T,z)\,dz\,,
  \end{equation}
  we split the integration variable into components
  parallel to $\hat n$ and perpendicular to $\hat n$, as
  \begin{equation}
    z-y = \eps z_\parallel \hat n + \sqrt{\eps} z_\perp\,,
  \end{equation}
  where $z_\parallel\in\RR$ and $z_\perp \in \Omega_{\hat n}^\perp$
  with
  \begin{equation}
    \Omega_{\hat n}^\perp = \{z\in\RR^d : \langle\hat n, z\rangle = 0\}\,.
  \end{equation}
  For any $z\in A$ we have $f(z)\ge0$. Further using $f(y)=0$, we can
  expand $f$ around $y$ to obtain
  \begin{equation}
    f(z) = f(y) + \sqrt{\eps}\langle z_\perp, \nabla f(y)\rangle
    + \eps z_\parallel \langle \hat n, \nabla f(y) \rangle
    + \frac{\eps}2 \langle z_\perp, \nabla\nabla f(y) z_\perp\rangle
    + \mathcal O(\eps^{3/2}) \ge 0\,.
  \end{equation}
  Since $\langle z_\perp, \nabla f(y)\rangle = 0$ by definition, we
  have for points in $A$ around $y$ that
  \begin{equation}
    z_\parallel \ge -\tfrac12\langle z_\perp, \frac{\nabla\nabla f(y)}{|\nabla f(y)|} z_\perp\rangle\,.
  \end{equation}
  Effectively, to the relevant order in $\eps$, $A$ can be approximated by a paraboloid
  \begin{equation}
    \tilde A = \{z\in\RR^n\,|\, z_\parallel \ge -
    \tfrac12\langle z_\perp, |\nabla f(y)|^{-1}\nabla\nabla f(y) z_\perp\rangle\}\,,
  \end{equation}
  where $\nabla\nabla f(y)/|\nabla f(y)|$ is the (normalized)
  curvature of the set $A$ at $y$.
  
  We can now evaluate the integral in~\eqref{eq:PA} by Laplace method, 
  \begin{equation}
  \begin{aligned}
    & \bar P_\eps^A(T,x) \\
    &= (2\pi\eps )^{-n/2} R_\eps(y)\exp(-\eps^{-1} I_x(T,y)) \
    \eps^{(n-1)/2} \int_{\Omega_{\hat n}^\perp} e^{-\tfrac12 \langle
      z_\perp, \nabla\nabla I_x(T,y)
      z_\perp\rangle} \\
    & \qquad \times \eps \int_{-\tfrac12\langle z_\perp, |\nabla f(y)|^{-1}\nabla\nabla f(y) z_\perp\rangle}^\infty e^{-z_\parallel |\nabla I_x(T,y)|}dz_\parallel\,dz_\perp \,\\
    &= (2\pi)^{-1/2} \eps^{1/2} R_\eps(y)\exp(-\eps^{-1} I_x(T,y))
    |\nabla I_x(T,Y)|^{-1} \\
    & \qquad \times \int_{\Omega_{\hat n}^\perp} e^{-\tfrac12
      \langle z_\perp, \nabla\nabla I_x(T,y)
      z_\perp\rangle + \tfrac12\langle z_\perp, |\nabla f(y)|^{-1}|\nabla I_x(T,y)|\nabla\nabla f(y) z_\perp\rangle}\,dz_\perp \,\\
    &= (2\pi)^{-1/2} \eps^{1/2} R_\eps(y) \left|D(T,y)\right|^{-1/2}
    |\theta_{x,y}(T)|^{-1} \exp(-\eps^{-1} I_x(T,y))\,,
  \end{aligned}\label{eq:19}
\end{equation}
  Here
  \begin{equation}
    \label{eq:18}
    \begin{aligned}
      R_\eps(y) &= \left(|\det Q_{x,y}(T)|^{-1/2}
        \exp\left(\frac12\int_0^T\tr(K_{x,y}(t) Q_{x,y}(t)))\,dt\right)+\mathcal
        O(\eps)\right)\\
      &= \left(|\det \nabla\nabla I_x(T,y)|^{1/2}
        \exp\left(\frac12\int_0^T\tr(K_{x,y}(t) Q_{x,y}(t)))\,dt\right)+\mathcal
        O(\eps)\right)
    \end{aligned}
  \end{equation}
  and
  \begin{equation}
    \label{eq:36}
    D(T,y) = \det_\perp
    \left(\nabla\nabla I_x(T,y) - |\nabla f(y)|^{-1}|\nabla
      I_x(T,y)|\nabla\nabla f(y) \right)
  \end{equation}
  where $\det_\perp$ is defined as follows: Given an invertible,
  positive definite, symmetric $H\in \RR^{n\times n}$ and a unit
  vector $\hat n$ we define $\det_\perp H$ via
  \begin{equation}
    \label{eq:detperp}
    (2\pi)^{(n-1)/2} \left| {\det_\perp} H \right|^{-1/2}
    =\int_{\Omega_{\hat n}^\perp}
    e^{-\frac12 \< y,H y\>} dy\,.
  \end{equation}
  In other words, $\det_\perp$ is the determinant evaluated only in
  the space $\Omega_{\hat n}^\perp$ perpendicular to a
  vector~$\hat n$. As shown in Appendix~\ref{appendix:comp_det_perp}
  we have
  \begin{equation}
    \det_\perp H = \langle \hat n, H^{-1} \hat n\rangle \det H\,,
  \end{equation}
   so that
  \begin{equation}
    \begin{aligned}
      &|\det \nabla\nabla I_x(T,y)|^{1/2} \left|D(T,y)\right|^{-1/2}\\
      &=\left|\det_\perp
        \left(1 - |\nabla f(y)|^{-1}|\nabla I_x(T,y)| Q_{x,y}(T) 
          \nabla\nabla f(y) \right)\right|^{-1/2}
      \langle \hat n, Q_{x,y}(T) \hat n\rangle^{-1/2}
    \end{aligned}
  \end{equation}
  and thus, inserting $R_\eps$ in~\eqref{eq:19}, this equation
  gives~\eqref{eq:Paas}. Finally, note that by definition of
  $\det_\perp$ we can replace
  $|\nabla f(y)|^{-1}\nabla\nabla f(y) $ by
  $\nabla \hat n(y)$ in \eqref{eq:36}, which justifies the alternative
  expression in~\eqref{eq:33b} for $F(T,x)$.
\end{proof}

\subsection{Probabilities on the invariant measure}
\label{sec:probinf}

The equivalent of~\eqref{eq:PA} at infinite time is
\begin{equation}
  \label{eq:PAinf}
  P_\eps^A= \int_A \rho_\eps(y)\,dy\,.
\end{equation}
where $\rho_\eps(y)$ is the invariant density defined by~\eqref{eq:9}
and the set $A$ is defined as before. Then we have:
\begin{proposition}
  \label{thm:probabilityinf}
  Let
  $(\hat \phi_{y}(s),\hat \theta_{y}(s))$ solve the instanton
  equations in~\eqref{eq:4inf} with $y$ specified as
  \begin{equation}
    \label{eq:ydefinf}
    y = \argmin_{\hat\phi_{y}(1)\in \partial A} \int_0^1\lambda^{-1}(s)
    \langle \hat\theta_{y}(s), a \hat\theta_{y}(s)\rangle\,ds;
  \end{equation}
  and let $\hat Q_{y}(s)$ solve the forward Riccati equation
  in~\eqref{eq:Qinf}. Let also $\hat F$ and $\hat V$ be given by
  \begin{equation}
    \label{eq:33inf}
    \begin{aligned}
      \hat F &= \text{Id}- |\hat\theta_{y}(1)| |\nabla
      f(y)|^{-1}\hat Q_{y}(1) 
      \nabla\nabla f(y),\\
      \hat V &= \langle\hat\theta_{y}(1), \hat Q_{y}(1) \hat
      \theta_{y}(1)\rangle^{-1/2} \exp\left(\frac12\int_0^1
        \lambda^{-1}(s) \tr(\hat K_{y}(s) \hat Q_{y}(s)))\,dt\right),
    \end{aligned}
  \end{equation}
  where
  $\hat K_{y}(s) = \nabla \nabla \< b(\hat \phi_{y}(s)),\hat
  \theta_{y}(s)\>$. Finally denote by
  $\hat n = \nabla f(y)/|\nabla f(y)|$ the inward pointing surface
  normal of $A$ at $y$.  Then the probability $P^A_\eps$ satisfies
  \begin{equation}
    \label{eq:Paasinf}
    \lim_{\eps\to0}\frac{P^A_\eps}{\bar P^A_\eps}=
    1,
  \end{equation}
  where
  \begin{equation}
    \label{eq:17inf}
    \bar P^A_\eps =  (2\pi)^{-1/2} \eps^{1/2} \left(\frac{\langle
        \hat n, \hat F\hat n\rangle}{\det \hat F}\right)^{1/2} \hat V 
    \exp\left(-\frac1{2\eps}
      \int_0^1 \lambda^{-1}(s)\langle \hat\theta_{y}(s), a \hat
      \theta_{y}(s)\rangle\,ds
    \right)\,.
  \end{equation}
  In addition , the unit normal $\hat n$ can be expressed as
  $\hat n= \hat \theta_{y}(1)/|\hat \theta_{y}(1)|$.
\end{proposition}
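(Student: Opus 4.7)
The plan is to mirror the proof of Proposition~\ref{thm:probability} but use the sharp estimate of the invariant density $\bar\rho_\eps(y)$ from Proposition~\ref{thm:pdfinf} in place of the finite-time density. Specifically, I would start from
\begin{equation*}
P^A_\eps = \int_A \rho_\eps(y)\,dy, \qquad
\lim_{\eps\to0} \frac{\rho_\eps(y)}{\bar\rho_\eps(y)} = 1,
\end{equation*}
and evaluate $\int_A \bar\rho_\eps(y)\,dy$ asymptotically by Laplace's method localized near the minimizer of the quasipotential on $\partial A$. Since $\hat I(y) = \tfrac12\int_0^1\lambda^{-1}\langle\hat\theta_y,a\hat\theta_y\rangle\,ds$ and $\hat Q_y(1) = [\nabla\nabla\hat I(y)]^{-1}$ (as established in the discussion following Proposition~\ref{thm:pdfinf}), the Laplace expansion matches the one used in the finite-$T$ case with the substitutions $I_x(T,\cdot)\to\hat I(\cdot)$, $Q_{x,y}(T)\to\hat Q_y(1)$, $K_{x,y}\to\hat K_y$, and time integrals $\int_0^T\cdots dt\to\int_0^1\lambda^{-1}(s)\cdots ds$.

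Next I would verify that the minimizing point $y\in\partial A$ singled out by~\eqref{eq:ydefinf} satisfies $\hat\theta_y(1) = \nabla\hat I(y)$, which follows from the Hamilton--Jacobi structure of the geometric instanton (by the same argument used to derive $\theta_{x,y}(T) = \nabla_y I_x(T,y)$ in the proof of Proposition~\ref{thm:pdfs}). Since $y$ is a minimizer of $\hat I$ constrained to $\partial A = \{f=0\}$, the Lagrange multiplier condition forces $\nabla\hat I(y)$ to be proportional to $\nabla f(y)$, so that the inward unit normal admits the two equivalent representations $\hat n = \nabla f(y)/|\nabla f(y)| = \hat\theta_y(1)/|\hat\theta_y(1)|$, justifying the last claim of the proposition.

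Then I would carry out the Laplace expansion exactly as in~\eqref{eq:19}: split $z-y = \eps z_\|\hat n + \sqrt{\eps}\,z_\perp$ with $z_\perp\in\Omega_{\hat n}^\perp$, Taylor-expand $f(z)\ge 0$ to second order to approximate $A$ by the paraboloid $\{z_\|\ge -\tfrac12\langle z_\perp,|\nabla f(y)|^{-1}\nabla\nabla f(y)\,z_\perp\rangle\}$, do the $z_\|$-integral (producing the factor $|\nabla\hat I(y)|^{-1} = |\hat\theta_y(1)|^{-1}$) and the Gaussian $z_\perp$-integral against the shifted Hessian $\nabla\nabla\hat I(y) - |\nabla f(y)|^{-1}|\nabla\hat I(y)|\nabla\nabla f(y)$. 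Finally, the $\det_\perp$ identity recalled from Appendix~\ref{appendix:comp_det_perp} (namely $\det_\perp H = \langle\hat n,H^{-1}\hat n\rangle\,\det H$) converts this Gaussian determinant into $\langle\hat n,\hat F\hat n\rangle/\det\hat F$, with $\hat F$ as in~\eqref{eq:33inf}, and assembles the remaining factors into $\hat V$, producing precisely~\eqref{eq:17inf}.

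The only genuinely new point relative to the finite-time proof is the justification that the Laplace analysis still applies despite the geometric reparametrization $t\to s$ near $s=0$ (where $\lambda(0)=0$ and the instanton spends infinite physical time relaxing into $x_*$). This is harmless here because the Laplace expansion is purely local at $y=\hat\phi_y(1)$, where $\lambda>0$ and all quantities behave regularly; the reparametrization issue is already absorbed into the well-posedness of $\hat Q_y$ at $s=0$ through the Lyapunov initial condition $\hat Q_y(0)=Q_*$ in Proposition~\ref{thm:pdfinf}. So no new argument is needed beyond quoting that proposition and transcribing the Laplace calculation from Proposition~\ref{thm:probability}.
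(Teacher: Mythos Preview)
Your proposal is correct and follows exactly the approach the paper takes: the paper's proof consists of the single sentence ``The proof is similar to that of Proposition~\ref{thm:probability},'' and your write-up fills in precisely those details---replacing $\rho^x_\eps(T,\cdot)$ by $\bar\rho_\eps(\cdot)$ from Proposition~\ref{thm:pdfinf}, localizing the Laplace integral at the quasipotential minimizer on $\partial A$, and rerunning the paraboloid/$\det_\perp$ computation with the geometric substitutions $I_x(T,\cdot)\to\hat I(\cdot)$, $Q_{x,y}(T)\to\hat Q_y(1)$, $\int_0^T\cdots dt\to\int_0^1\lambda^{-1}\cdots ds$. Your remark about the $s=0$ endpoint being harmless (absorbed into $\hat Q_y(0)=Q_*$) is an apt clarification that the paper leaves implicit.
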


The proof is similar to that of Proposition~\ref{thm:probability}.

\subsubsection{Example: Nonlinear, irreversible process in $\RR^2$ revisited}
\label{sec:exampl-nonl-irrev-R2-nonlin}

\begin{figure}
  \begin{center}
    \includegraphics[height=170pt]{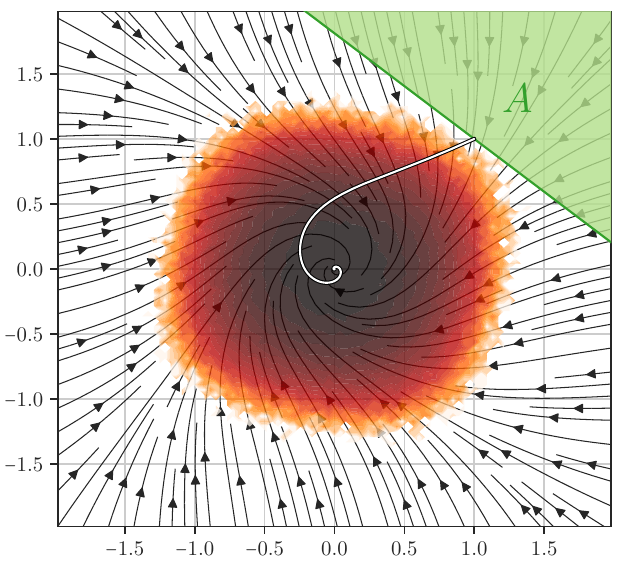}
    \includegraphics[height=170pt]{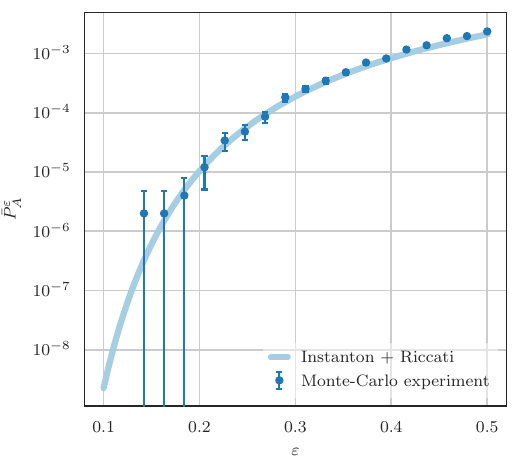}
  \end{center}
  \caption{Left: Instanton (solid curve) connecting the fixed point
    $x_*=(0,0)$ to the set $A$, with minimal action at boundary point
    $(1,1)$. The shaded contours indicate the invariant measure,
    obtained from sampling the process for long times, and the set $A$
    is depicted in green. The flow field depicts the drift
    $b(x)$. Here, $\eps=0.25$. Right: Comparison of the probability to
    hit the set $A$ between the computation via
    proposition~\ref{thm:probabilityinf} (light blue solid) and random
    sampling of the stochastic process (dark blue dots) for varying
    $\eps$. The probability is reproduced at every point from
    instanton and fluctuations, including its normalization
    factor. The parameters are $\alpha=0.5, \beta=1, \gamma=0.5$.}
  \label{fig:2d-nonlinear-prob}
\end{figure}

To show the applicability of these propositions to an actual system,
we re-use the nonlinear, irreversible process in $\RR^2$, as defined
in equation~(\ref{eq:2d-nonlin-drift}), as a simple example for which
the solution is not readily accessible by analytical
considerations. Instead of computing the invariant density, as in
section~\ref{sec:exampl-invar-meas-R2-nonlin} we use
proposition~\ref{thm:probabilityinf} to compute the probability of
hitting the set $A$ on the invariant measure, where $A$ is the
half-space defined by
\begin{equation}
  A = \{ x \in \RR^2 | \langle\hat n, x - (1,1)\rangle \ge 0\}\,,
\end{equation}
with $\hat n \approx (0.6304, 0.7762)$ chosen specifically such that the
point $y=(1,1)$ is the maximum likelihood point on $\partial
A$. Because of this, the solutions to the instanton equations and the
 Riccati equation do not need to be re-computed, and we can insert
their results into equation~(\ref{eq:17inf}) to obtain the instanton
and prefactor estimate for the hitting probability on the invariant
measure.

The results of this experiment are shown in
Fig.~\ref{fig:2d-nonlinear-prob}: While the dynamics and instanton
in Fig.~\ref{fig:2d-nonlinear-prob} (left) look identical to the
original problem in section~\ref{sec:exampl-invar-meas-R2-nonlin}, we
are now trying to estimate the probability of hitting the set $A$
denoted by the light green shading. The result shown in the right
panel of Fig.~\ref{fig:2d-nonlinear-prob} confirms that the asymptotic
prediction of proposition~\ref{thm:probabilityinf} agrees with the
Monte-Carlo simulations for different values of $\eps$. The parameters
are $\alpha=0.5, \beta=1, \gamma=0.5$, and we are taking
$N_{\text{samples}} = 10^6$ samples each value of $\eps$.

\section{Exit probabilities and mean first passage times}
\label{sec:mfpt}

Let $A^c \subset \RR^n$ be the complement of the set $A$ defined
in~\eqref{eq:setAdef} i.e.
\begin{equation}
  A^c = \{ z \in \RR^n \,|\, f(z) < 0\}
\end{equation}
Under Assumption~\ref{as:set}, we know from the argument given after
Proposition~\ref{thm:pdfinf} that
\begin{equation}
  \label{eq:34}
  \lim_{\eps\to0} \int_{A^c} \rho_\eps(y) dy = \lim_{\eps\to0} \int_{A^c} \bar
  \rho_\eps(y) dy = 1
\end{equation}
We wish to estimate the exit probability
\begin{equation}
  \label{eq:35}
  P^{\partial A}_\eps= \int_{\partial A} \rho_\eps(y) d\sigma(y) 
\end{equation}
in the limit as $\eps\to0$, since this surface integral enters the
limiting expression for the mean first passage time (MFPT) of the
process to set~$A$. We have

\begin{proposition}
  \label{thm:probflux}
  Let  $(\hat \phi_{y}(s),\hat \theta_{y}(s))$ solve to the instanton
  equations in~\eqref{eq:4inf} with $y$ specified as
  \begin{equation}
    \label{eq:ydefmfpt}
    y = \argmin_{z\in \partial B} \int_0^1\lambda^{-1}(s)
    \langle \hat\theta_{z}(s), a \hat\theta_{z}(s)\rangle\,ds;
  \end{equation}
  and let $\hat F$ and $\hat V$ be given by~\eqref{eq:33inf}. Denote
  by $\hat n = \nabla f(y)/|\nabla f(y)|$ the inward pointing unit
  normal vector on $\partial A$ at $y$.  Then the exit probability
  $P^{\partial A}_\eps$ satisfies
  \begin{equation}
    \label{eq:mfpt1}
    \lim_{\eps\to0}\frac{P^{\partial A}_\eps}{\bar P^{\partial A}_\eps}=
    1,
  \end{equation}
  where
  \begin{equation}
    \label{eq:17mfpt}
    \bar P^{\partial A}_\eps =  (2\pi\eps)^{-1/2} |\hat\theta_{y}(1)|\left(\frac{\langle
        \hat n, \hat F\hat n\rangle}{\det \hat F}\right)^{1/2}\hat V
    \exp\left(-\frac1{2\eps}
      \int_0^1 \lambda^{-1}(s)\langle \hat\theta_{y}(s), a \hat
      \theta_{y}(s)\rangle\,ds
    \right)
  \end{equation}
 
\end{proposition}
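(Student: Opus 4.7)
The plan is to mimic the Laplace-method argument used in the proof of Proposition~\ref{thm:probabilityinf}, but applied to the surface integral in~\eqref{eq:35} instead of a volume integral. Since $\bar P^{\partial A}_\eps/\bar P^A_\eps = \eps^{-1}|\hat\theta_y(1)|$, we expect the result to follow from the same geometric/Riccati data already collected in Proposition~\ref{thm:probabilityinf}, the only difference being the absence of the "normal" integration that produced the factor $\eps/|\hat\theta_y(1)|$ there.

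First I would insert the sharp estimate $\bar\rho_\eps(z)$ of Proposition~\ref{thm:pdfinf}, in the form~\eqref{eq:27}, into the surface integral $\int_{\partial A}\rho_\eps(z)\,d\sigma(z)$. The quasipotential $\hat I$ restricted to $\partial A$ is minimized at the point $y$ characterized by~\eqref{eq:ydefmfpt}; by assumption this minimum is unique. I would then parametrize $\partial A$ near $y$ using the paraboloid approximation already used in the proof of Proposition~\ref{thm:probability}, namely
\begin{equation}
z - y = \sqrt{\eps}\,z_\perp - \tfrac{\eps}{2}\langle z_\perp, |\nabla f(y)|^{-1}\nabla\nabla f(y)\,z_\perp\rangle\,\hat n + O(\eps^{3/2}),\qquad z_\perp\in\Omega_{\hat n}^\perp,
\end{equation}
so that the surface element satisfies $d\sigma(z) = \eps^{(n-1)/2}(1 + O(\eps))\,dz_\perp$.

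Next I would Taylor-expand $\hat I$ about $y$. Using $\nabla \hat I(y) = \hat\theta_y(1) = |\hat\theta_y(1)|\hat n$ (since the momentum at the endpoint is outward-normal to the level set, as shown in the proof of Proposition~\ref{thm:probability}), the linear term in $z_\perp$ vanishes, and the surviving $O(\eps)$ contribution combines the Hessian of $\hat I$ at $y$ with the curvature term inherited from the paraboloid:
\begin{equation}
\hat I(z) - \hat I(y) = \tfrac{\eps}{2}\big\langle z_\perp,\,\big[\nabla\nabla \hat I(y) - |\hat\theta_y(1)| |\nabla f(y)|^{-1}\nabla\nabla f(y)\big]z_\perp\big\rangle + O(\eps^{3/2}).
\end{equation}
Using the identification $\nabla\nabla \hat I(y) = \hat Q_y(1)^{-1}$ from~\eqref{eq:22} and Proposition~\ref{thm:pdfinf}, the bracketed matrix equals $\hat Q_y(1)^{-1}\hat F$ with $\hat F$ as in~\eqref{eq:33inf}. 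The remaining Gaussian integration over $\Omega_{\hat n}^\perp$ produces a factor $(2\pi)^{(n-1)/2}|\det_\perp(\hat Q_y(1)^{-1}\hat F)|^{-1/2}$, and applying the identity $\det_\perp H = \langle \hat n, H^{-1}\hat n\rangle \det H$ from Appendix~\ref{appendix:comp_det_perp} converts this into the ratio $\langle\hat n,\hat F\hat n\rangle/\det\hat F$ up to a $|\det\hat Q_y(1)|^{1/2}$ which combines with the prefactor of $\bar\rho_\eps$ via~\eqref{eq:27} to give $\hat V$.

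Combining the powers of $\eps$ — $(2\pi\eps)^{-n/2}$ from $\bar\rho_\eps$, $\eps^{(n-1)/2}$ from the surface element, and no parallel integration — yields the overall $(2\pi\eps)^{-1/2}$ and the extra factor $|\hat\theta_y(1)|$ that is absent in Proposition~\ref{thm:probabilityinf}, matching~\eqref{eq:17mfpt}. That $\hat n = \hat\theta_y(1)/|\hat\theta_y(1)|$ is the same observation already established in the proofs of Propositions~\ref{thm:probability} and~\ref{thm:probabilityinf}. The main obstacle is bookkeeping: one must carefully track (i) the $O(\eps)$ correction to the induced surface measure coming from the paraboloid, and (ii) the cross term between the curvature correction of $\partial A$ and the Hessian of $\hat I$, to make sure only the combination appearing in $\hat F$ survives and that higher-order terms are indeed $o(1)$ after the Gaussian integration.
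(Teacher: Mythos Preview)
Your proposal is correct and follows exactly the approach the paper takes: the paper's own proof is a three-line remark that the argument of Proposition~\ref{thm:probabilityinf} goes through verbatim except that one integrates over the boundary $\partial\tilde A$ of the paraboloid rather than over $\tilde A$ itself, so the normal integral (which in~\eqref{eq:19} contributed the factor $\eps\,|\theta|^{-1}$) is simply absent. You have spelled this out in considerably more detail than the paper does, including the explicit paraboloid parametrization of $\partial A$, the Taylor expansion of $\hat I$, and the bookkeeping caveats about the surface Jacobian and the curvature cross term; all of that is correct and matches the mechanism behind the paper's one-sentence proof.
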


\begin{proof}
  The proof has the same ingredients as the proof of
  proposition~\ref{thm:probabilityinf}, except instead of integrating
  over the paraboloid approximation $\tilde A$, we integrate over its
  boundary $\partial\tilde A$. No normal integral is performed and
  therefore the scaling in $\eps$ and $|\hat\theta_{y}(1)|$ differs.
\end{proof}

With this knowledge, we can now compute exit times. Consider the entrance
into the domain $A$ for a particle
starting at $x\in A^c$. The mean exit time $T_\epsilon:A^c \to \RR^+$
is given by the solution of the problem
\begin{equation}
  \begin{aligned}
    L_\eps T_{\epsilon} &= -1, \quad &&\text{if} \ \ x\in
    A^c, \label{eq:basic_exit_time} \\ T_{\epsilon} &= 0,
    \qquad && \text{if} \ \ x \in {\partial A}\,,
  \end{aligned}
\end{equation}
where $L_\eps$ is the generator of our process defined in~\eqref{eq:5}. In the
following, we quickly review the standard approach to use a boundary
layer expansion in order to approximate the exit time
$T_{\epsilon}$. Multiplying the first equation by the invariant
measure $\rho$ and integrating we find---after applying Green's
theorem and making use of the fact that
$L^*_\eps\rho_\eps=0$~\cite{gardiner:2009}:
\begin{equation} \label{eq:sol_condition}
  \epsilon \int_{\partial A} \rho_\eps \langle\hat n,
  \nabla T_\epsilon\rangle\,d\sigma = - \int_{A^c} \rho_\eps\, dx\,.
\end{equation}
Here, $\hat n(z) = \nabla f(z)/|\nabla f(z)|$ denotes the inward
pointing normal unit vector of the surface $\partial{A}$ at $z\in
\partial{A}$. An approximation of $\langle \hat n(z), \nabla
T_\eps(z)\rangle$ for $z \in \partial {A}$ can be obtained by boundary
layer analysis. For this purpose, we first expand
\begin{equation} \label{eq:T_and_tau}
T_{\epsilon}(x) = {e}^{C/\epsilon}\tau(x)
\end{equation}
such that we find from (\ref{eq:basic_exit_time}) the corresponding
equation for $\tau$
\begin{equation}
L_\eps \tau = {e}^{-C/\epsilon}\approx 0\,.
\end{equation}
Since we assumed that $\langle\hat n(z), b(z)\rangle < 0$ for all
$z\in\partial A$, the appropriate scaling of the boundary layer is $x
= z - \epsilon\eta\hat n $. In the scaled variables, the equation for
$\tau$ becomes
\begin{equation}
-\langle b(z), \hat n(z)\rangle \tau_{\eta} + \tau_{\eta\eta} = 0
\end{equation}
with the solution
\begin{equation}
\tau = \tilde C\left(1-{e}^{\langle\hat n(z), b(z)\rangle \eta}\right)
\end{equation}
This means that we obtain for the exit time $T_{\epsilon}$ the
expression
\begin{equation}
T_{\epsilon} = \tilde C\,
{e}^{C/\epsilon}\,\left(1-{e}^{\langle\hat n(z), b(z)\rangle
  \eta}\right)
\end{equation}
and therefore
\begin{equation}
\nabla T_{\epsilon} = \frac{\tilde C}{\epsilon} \,
       {e}^{C/\epsilon}\, \langle\hat n(z), b(z)\rangle\hat n(u)
\end{equation}
which we can use in the solvability condition
(\ref{eq:sol_condition}). From there, for $x$ away from the boundary
layer, we obtain
\begin{equation}
  \label{eq:Teqn}
T_{\epsilon}(x) = \tilde C\, {e}^{C/\epsilon} = \frac{-
  \int_{A^c} \rho_\eps(z) \, dz}{ \int_{\partial A} \rho_\eps (z)
  \langle\hat n(z), b(u)\rangle d\sigma(z)}
\end{equation}
The following proposition shows how to estimate $T_\epsilon(x)$ in the
limit as $\eps\to0$:
\begin{proposition}
  \label{thm:mfpt}
  Let $(\hat \phi_{y}(s),\hat \theta_{y}(s))$ solve the instanton
  equations in~\eqref{eq:4inf} with $y$ as specified in
  \eqref{eq:ydefmfpt},  and let $\hat F$ and $\hat V$ be given
  by~\eqref{eq:33inf}.Denote by $\hat n = \nabla f(y)/|\nabla f(y)|$
  the inward pointing unit normal vector on $\partial A$ at $y$.  Then
  for any $x\in A^c$, the
  mean first passage time $T_\eps(x)$ satisfies
  \begin{equation}
    \lim_{\eps\to0} \frac{T_\eps(x)}{\bar T_\eps} = 1
  \end{equation}
  where
  \begin{equation}
    \bar T_\eps= (2\pi\eps)^{1/2} |\langle \hat n, b(y)\rangle|^{-1} |\hat\theta_y(1)|^{-1} \left(\frac{\langle
        \hat n, \hat F\hat n\rangle}{\det \hat F}\right)^{-1/2}\hat V ^{-1} \exp\left(\frac1{2\eps}\int_0^1 \lambda^{-1}(s)\langle \hat\theta_{y}(s), a \hat
      \theta_{y}(s)\rangle\,ds
    \right)\,.
  \end{equation}
\end{proposition}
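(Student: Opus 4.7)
The plan is to start from the explicit representation of $T_\eps(x)$ derived just before the proposition, namely
\begin{equation*}
T_\eps(x) = \frac{-\int_{A^c} \rho_\eps(z)\,dz}{\int_{\partial A} \rho_\eps(z)\,\langle \hat n(z), b(z)\rangle\,d\sigma(z)}\,,
\end{equation*}
and estimate the numerator and denominator separately in the limit $\eps\to 0$. The numerator is immediate: by \eqref{eq:34} we have $\int_{A^c}\rho_\eps(z)\,dz \to 1$, so to leading order the numerator contributes $-1$.

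For the denominator, the idea is that this is the same kind of boundary integral as $P^{\partial A}_\eps$ treated in Proposition~\ref{thm:probflux}, but with an extra slowly varying weight $\langle \hat n(z), b(z)\rangle$. I would use the sharp density estimate $\bar\rho_\eps(z)$ from Proposition~\ref{thm:pdfinf} and apply Laplace's method on $\partial A$. Since the quasipotential $\hat I$ restricted to $\partial A$ is minimized uniquely at $y$ defined in \eqref{eq:ydefmfpt} (this is precisely the point reached by the minimizing instanton), the integral concentrates near $y$ and the slowly varying factor can be evaluated there: $\langle \hat n(z), b(z)\rangle = \langle \hat n, b(y)\rangle + O(\sqrt{\eps})$. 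The remaining surface integral is exactly the one performed in the proof of Proposition~\ref{thm:probflux}, yielding
\begin{equation*}
\int_{\partial A} \rho_\eps(z)\,\langle \hat n(z), b(z)\rangle\,d\sigma(z) \sim \langle \hat n, b(y)\rangle\, \bar P^{\partial A}_\eps\,,
\end{equation*}
with $\bar P^{\partial A}_\eps$ given by \eqref{eq:17mfpt}. Assumption~\ref{as:set} guarantees $\langle \hat n, b(y)\rangle < 0$, so this expression carries the right sign to cancel the $-1$ of the numerator and produce a positive mean first passage time.

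Combining the two estimates gives
\begin{equation*}
T_\eps(x) \sim \frac{1}{|\langle \hat n, b(y)\rangle|\,\bar P^{\partial A}_\eps}\,,
\end{equation*}
and substituting the explicit form of $\bar P^{\partial A}_\eps$ from Proposition~\ref{thm:probflux} yields the stated expression for $\bar T_\eps$, with the inverse exponential factor, the inverse geometric ratio $(\langle \hat n,\hat F\hat n\rangle/\det\hat F)^{-1/2}$, the inverse prefactor $\hat V^{-1}$, and the new prefactor $|\langle \hat n, b(y)\rangle|^{-1}|\hat\theta_y(1)|^{-1}(2\pi\eps)^{1/2}$.

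The main obstacle is making the boundary layer analysis that leads to the Teqn representation compatible with the Laplace evaluation of its denominator: one needs to justify that the $O(\eps)$ correction terms dropped in the boundary layer expansion, and the non-uniformity of the expansion \eqref{eq:T_and_tau} near $\partial A$, do not affect the leading-order prefactor. Once this is granted, the rest is bookkeeping: checking that $y$ defined by the constrained minimum of the quasipotential on $\partial A$ coincides with the endpoint of the instanton in \eqref{eq:ydefmfpt} (so that the same geometric/prefactor quantities $\hat F$, $\hat V$ appear), and that the replacement $\langle \hat n, b(z)\rangle \to \langle \hat n, b(y)\rangle$ is justified at the level of the leading asymptotics. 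A secondary technical point is verifying, via a Gaussian tail bound using Assumption~\ref{as:set}, that contributions to the denominator from points on $\partial A$ far from $y$ are exponentially subdominant and can be ignored.
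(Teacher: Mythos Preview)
Your proposal is correct and follows essentially the same approach as the paper: start from the representation \eqref{eq:Teqn}, use \eqref{eq:34} for the numerator, and reduce the denominator to $\langle \hat n, b(y)\rangle\,\bar P^{\partial A}_\eps$ via Proposition~\ref{thm:probflux} by evaluating the slowly varying weight at the dominating point $y$. The paper's proof is in fact even terser than yours, simply stating $\bar T_\eps = -(\langle \hat n(y), b(y)\rangle\,\bar P^{\partial A}_\eps)^{-1}$ and invoking $\lim_{\eps\to0}\int_{A^c}\rho_\eps = 1$; your additional remarks about the boundary-layer justification and exponential subdominance of far-away contributions are valid caveats that the paper leaves implicit.
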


\begin{proof}
  From~\eqref{eq:Teqn} it follows with the use of
  proposition~\ref{thm:probflux} that
  \begin{equation}
    \bar T_\eps= -\left(\langle \hat n(y), b(y)\rangle \bar P_\eps^{\partial B}\right)^{-1}\,,
  \end{equation}
  where we additionally used that $\lim_{\eps\to0} \int_{A^c} \rho_\eps(z)\,dz = 1$.
\end{proof}

\begin{remark}
  Another interesting case is when we demand $\langle \hat n(z),
  b(z)\rangle = 0$ everywhere on $\partial A$, such that $A^c$
  corresponds exactly to the basin of attraction of the process. In
  this case, the situation is more complicated. Generally, one expects
  a different scaling of the form
  \begin{equation}
    T_\eps(x) \sim C \eps^{-1/2} \left(\bar P_\eps^{\partial A}\right)^{-1}\,.
  \end{equation}
  The underlying assumptions need to make sure that the quasipotential
  is twice differentiable at the exit point, which is true for
  gradient systems, but not generally true for an arbitrary drift. We
  refer to~\cite{maier-stein:1997, bouchet-reygner:2016} for details.
\end{remark}

\subsubsection{Example: Ornstein-Uhlenbeck process}
\label{sec:exampl-ornst-uhlenb-mfpt}

For the 1D Ornstein-Uhlenbeck process,
\begin{equation}
  \label{eq:1D-OU-mfpt}
  dX_t^\epsilon = -\gamma X_t^\epsilon \,dt + \sqrt{\epsilon} \,dW_t,\qquad X_0 = 0\,,
\end{equation}
$X_t\in\RR$, the formula for the MFPT is known
exactly~\cite{thomas:1975}. Concretely, the expected time for the
process~(\ref{eq:1D-OU-mfpt}) to leave the set $A^c=[-\infty, z]$ is
given by
\begin{equation}
  T_\eps = \frac1\gamma \sqrt{\frac\pi2} \int_0^{z\sqrt{2\gamma/\epsilon}} \left(1+\text{erf}\left(\frac t{\sqrt{2}}\right)\right) \exp\left(\frac{t^2}2\right)\,dt\,.  
\end{equation}
In the limit $\eps\to0$, truncating the prefactor at $\mathcal
O(\eps^{3/2})$, this yields the limiting result
\begin{equation}
  \label{eq:tau-approx}
  \tilde T_\epsilon = \left(\frac1z\sqrt{\frac{\pi\epsilon}{\gamma^3}} + O(\epsilon^{3/2})\right) e^{\epsilon^{-1}\gamma z^2}\,.
\end{equation}

Since necessarily $y=z$, and there is no perpendicular direction in
1D, Proposition~\ref{thm:mfpt} tells us that $\bar T_\eps = C
\exp(\eps^{-1}\gamma z^2)$ with $C$ given by
\begin{equation}
  C = (2\pi\eps)^{-1} \left( |b(y)| |\hat \theta_y(1)| \hat V|\right)^{-1}, 
\end{equation}
Using
\begin{align*}
  \hat\theta_y(1) &= 2\gamma z, & |b(y)| &= \gamma z\\
  \hat Q^* &= (2\gamma)^{-1},& \hat Q_y(t) &= (2\gamma)^{-1}\\
  \hat V &= \left(\hat \theta_y(1)^2 \hat
                       Q_y(1)\right)^{-1/2}
                       = (2\gamma z^2)^{-1/2}
\end{align*}
we obtain
\begin{equation*}
  C = \frac1z \sqrt{\frac{\pi\eps}{\gamma^3}}\,,
\end{equation*}
in agreement with the analytical result in (\ref{eq:tau-approx}).

\subsubsection{Example: Exit from a displaced circle}
\label{sec:example:-exit-from-circle}

Consider instead the situation of $X_t\in\RR^2$, but still
\begin{equation}
  \label{eq:2D-OU-mfpt}
  dX_t^\epsilon = -\gamma X_t^\epsilon \,dt + \sqrt{\epsilon} \,dW_t,\qquad X_0 = 0\,.
\end{equation}
We are interested in the expected time to enter the complement of the translated circle
\cite{grasman-herwaarden:1999} of radius $r$ around $(z-r,0)$, i.e. exit
\begin{equation}
  \label{eq:B-circ}
  A^c_r = \{ x\in\RR^2 \ |\  |x-(z-r,0)|\le r\}\,.
\end{equation}
We want to compare this result against the translated half-plane,
\begin{equation}
  \label{eq:B-plane}
  A^c_| = \{ x\in\RR^2 \ |\ x_1 \le z\}\,,
\end{equation}
where $x_1$ is the first component of $x=(x_1,x_2)$. The most likely
exit point, located at $y=(z,0)$, is identical in both cases, as is
the instanton, but we expect to find different prefactors due to the
difference in curvature at $y$ between the two
sets. In the left panel of Fig.~\ref{fig:curvature} we illustrate the problem
setup. We will focus on the case $r>z$ when the instanton lies on the
$x$-axis: at $r=z$ the instanton becomes degenerate and we obtain
\textit{caustics}, i.e.~due to rotational symmetry every point on the
circle is equally likely to be the exit point.

When $r>z$, we have, basically identically to
section~\ref{sec:exampl-ornst-uhlenb-mfpt},
\begin{align*}
  \hat \theta_y(1) &= 2\gamma z,& |\langle \hat n, b(y)\rangle |  &= \gamma z\\
  \hat Q^* &= (2\gamma)^{-1}\mathrm{Id},& \hat Q_y(t) &= (2\gamma)^{-1}\mathrm{Id}\\
  \hat V &= (2\gamma z^2)^{-1/2}\,.
\end{align*}
The only difference between the two cases~(\ref{eq:B-circ})
and~(\ref{eq:B-plane}) are the curvature contributions $\hat F_r$ and
$\hat F_|$, respectively. For the half-space $A^c_|$, the curvature at
$y$ is 0, and thus $\hat F_|=\mathrm{Id}$. For the circle, instead, we can
choose
\begin{equation*}
  f(x) = |x-(z-r,0)|^2 - r^2
\end{equation*}
so that $A^c_r$ is the zero level-set of $f$. Then,
\begin{equation*}
  |\nabla f(y)| = 2r
\end{equation*}
and
\begin{equation*}
  \nabla\nabla f(y) = 2\mathrm{Id}\,,
\end{equation*}
and thus
\begin{align*}
  \hat F_r &= \mathrm{Id} - \hat Q_y(1) |\hat\theta_y(1)||\nabla f(y)|^{-1} \nabla\nabla f(y)\\
  &=(1-\frac{z}{r}) \mathrm{Id}\,.
\end{align*}
Defining the scalar contribution of the curvature to the prefactor
$c=(\det_\perp \hat F)^{-1/2}$ as $c_|$ for the planar case and $c_r$
for the case of a circle with radius $r$, respectively, we obtain
\begin{equation}
  c_| = 1,\qquad\text{and}\qquad c_r =  \left(\det_\perp (1-\frac zr)\mathrm{Id}\right)^{1/2} = \sqrt{\frac{r-z}{r}}
\end{equation}

\begin{figure}
  \begin{center}
    \includegraphics[height=160pt]{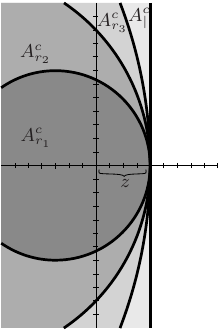}
    \hspace{2em}
    \includegraphics[height=160pt]{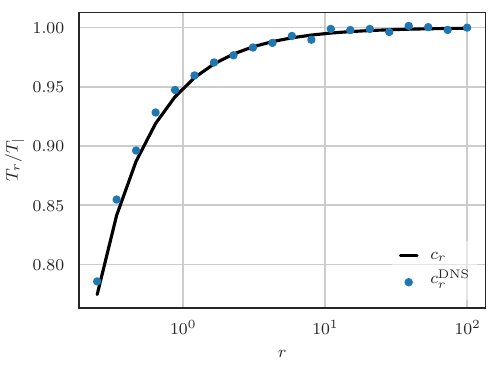}
  \end{center}
  \caption{Left: Schematic representation of the curvature
    experiment. An Ornstein-Uhlenbeck process is started at the origin
    (center), and we are interested in the time $T_r$ at which it
    leaves the circular set $A^c_r$, or time $T_|$ at which it leaves the
    half-plane $A^c_|$. The most likely exit point is always $(z,0)$,
    but the curvature of the set boundary differs. Right: Ratios of
    $T_r^{\text{DNS}}$ to $T_|^{\text{DNS}}$ for different $r$, in
    comparison to the theoretical
    prediction~(\ref{eq:curvature-factor}). For $r\to\infty$, this
    ratio should converge to $1$, but for small radii the curvature
    has a measurable effect. For example, at a radius $r=\frac14$, the
    measured exit time is roughly 25\% smaller than for the
    planar exit.\label{fig:curvature}}
\end{figure}

In order to measure this curvature prefactor experimentally, we
performed the following numerical experiment: For different radii $r$
from $r=0.25$ to $r=100$, we performed $N=2\cdot 10^5$ Monte-Carlo
simulations each, and measured the mean time $T_r^{\text{DNS}}$ to
exit the set $A^c_r$. We performed further $2\cdot 10^5$ Monte-Carlo
simulations in the planar case $A^c_|$ to obtain $T_|^{\text{DNS}}$. The
ratio of the measured passage times of the planar case to the circular
case yields a numerical estimate of the curvature component only:
\begin{equation*}
  c_r^{\text{DNS}} = \frac{T_r^{\text{DNS}}}{T_|^{\text{DNS}}}
\end{equation*}
which can be compared against the analytical prediction
\begin{equation}
  \label{eq:curvature-factor}
  c_r =\sqrt{(r-z)/r}\,.
\end{equation}
This comparison is shown in Fig.~\ref{fig:curvature}. For
$r\to\infty$, $c_r$ indeed converges to $1$, but for small radii the
curvature has a measurable effect, and the measured discrepancy to the
planar prediction agrees with the correction term given
in~(\ref{eq:curvature-factor}). The other parameters are $\gamma=1$,
$z=0.1$, and $\eps=5\cdot10^{-3}$.

\section{Infinite-dimensional examples}
\label{sec:infinited}

In the previous sections, we derived prefactor estimates and sharp
limits in finite dimension. All of these estimates have a counterpart
in infinite dimension, i.e.~when applied to stochastic partial
differential equations.

For concreteness we will focus on reaction-advection-diffusion
equations of the type (for more general equations see
Appendix~\ref{sec:general})
\begin{equation}
  \label{eq:SPDE-generic}
  \partial_t c + v(x) \cdot
  \nabla c = \nabla \cdot \left( D(x) \nabla c\right) + r(x) c  + f(c)
  + \sqrt{\eps}\eta\,,\quad c(0)=c_0\,,
\end{equation}
Here $t\in [0,\infty)$, $x\in \Omega \subset \RR^d$, with $\Omega$
compact, and $c: [0,\infty)\times \Omega \to \RR$; the vector field
$v: \Omega \to \RR^d$ is in $C^2(\Omega)$; the diffusion tensor
$D :\Omega \to \RR^d \times \RR^d$ is in $C^2(\Omega )$, symmetric,
$D^T(x) = D(x)$, and positive-definite for all $x\in\Omega$;
$r: \Omega \to \RR$ is in $C^2(\Omega)$; the reaction term
$f:\RR \to \RR$, is in $C^2(\Omega)$, nonlinear in general; and the
noise $\eta$ is white-in-time Gaussian with covariance
\begin{equation}
  \label{eq:SPDE-noise}
  \EE \eta(t,x) \eta(t',x') = \delta(t-t') \chi(x,x')\,.
\end{equation}
where $\chi : \Omega\times \Omega \to \RR$ given and in
$C^2(\Omega\times \Omega)$. If $\Omega$ is a rectangular domain in
$\RR^d$, we can impose periodic boundary condition on $c$, assuming
that all the other functions in~\eqref{eq:SPDE-generic} are also
periodic; otherwise, denoting by $\hat n(x)$ the unit normal to
$\partial \Omega$, we impose that $r(x) =0$ and
$\hat n(x) \cdot v(x) = 0$ on $\partial \Omega$, $\chi(x,y)=0$ for all
$x\in \partial\Omega$ and $y\in\Omega$ or $y\in \partial\Omega$ and
$x\in\Omega$, and also that
\begin{equation}
  \label{eq:40}
  \hat n(x) \cdot D(x) \nabla c(t,x) = 0 \qquad \text{for all\ \ $x\in \partial
    \Omega$ \ and \ $t\ge 0$}\,.
\end{equation}

In this infinite-dimensional setting, the determinants must be
replaced by functional determinants, the instanton equations become
PDEs, and the Riccati equation must be replaced by a functional
variant as well. While these changes require a whole new set of
techniques to rigorously prove validity, our results and methods remain
formally applicable in infinite dimension. For example, considering
probabilities for a linear observable, i.e.
\begin{equation}
  \label{eq:probspde}
  P^{c_0}_\eps(T,z) = \mathbb P^{c_0} \Big( \int_\Omega \phi(x) c(T,x)\,dx\ge z\Big)\,,
\end{equation}
for some test function $\phi: \Omega \to \RR$, we obtain a proposition
analogous to Proposition~\ref{thm:probability}:
\begin{proposition}[Probabilities for SPDEs]
  \label{thm:SPDEs}
  Let the fields $c(t,x)$, $\theta(t,x)$ solve the instanton
  equations
  \begin{equation}
    \label{eq:instanton:SPDE}
    \left\{\begin{aligned}
        \partial_tc &= \nabla \cdot \left( D(x) \nabla c\right) - v(x) \cdot
        \nabla c + r(x) c  + f(c)  + \int_\Omega
        \chi(x,y)\theta(t,y)\,dy\,,\quad
        &&c(0) = c_0\,,\\
        \partial_t \theta &= -\nabla \cdot \left( D(x) \nabla \theta
        \right) - \nabla \cdot\left(v(x) \theta\right) - r(x) \theta  -
          f'(c)\theta,&&
          \theta(T) = \phi\,,
    \end{aligned}\right.
  \end{equation}
  with either periodic boundary conditions, or
  \begin{equation}
    \label{eq:42}
    \hat n(x) \cdot D(x) \nabla c(t,x) = \hat n(x) \cdot  D(x) \nabla
    \theta(t,x) =0  \qquad \text{for all \ \ $x\in \partial
      \Omega$ \  and  \ $t\in[0,T]$};
  \end{equation}
  Let also the field $\mathcal Q(t,x,y) = \mathcal Q(t,y,x)$ solve
  \begin{equation}
    \label{eq:Q-SPDE}
    \begin{aligned}
      \partial_t \mathcal Q =& \nabla_x \cdot \left( D(x) \nabla_x
        \mathcal Q \right) + \nabla_y \cdot \left( D(y) \nabla_y
        \mathcal Q \right) - v(x) \cdot \nabla_ x\mathcal Q - v(y)
      \cdot \nabla_ y\mathcal Q  + r(x) \mathcal Q + r(y) \mathcal Q \\
      & + f'(c(t,x)) \mathcal Q + f'(c(t,y)) \mathcal Q + \int_\Omega
      \mathcal Q(t,x,z) \mathcal Q(t,z,y) f''(c(t,z)) \theta(t,z) \,dz+
      \chi(x,y)\,,
    \end{aligned}
  \end{equation}
  with initial condition $\mathcal Q(0) = 0$, and either periodic
  boundary conditions in $x$ and $y$, or
  \begin{equation}
  \label{eq:42Q}
  \begin{aligned}
    &\hat n(x) \cdot D(x) \nabla_x \mathcal Q(t,x,y)= 0 \qquad
    &&\text{for all \  $x\in \partial
      \Omega$, \ $y\in \Omega$, \  and \ $t\in[0,T]$\,,}\\
    &\hat n(y) \cdot D(y) \nabla_y \mathcal Q(t,x,y)= 0 \qquad
    &&\text{for all \ $y\in \partial \Omega$, \ $x\in \Omega$, \ and \
      $t\in[0,T]$\,.}
  \end{aligned}
  \end{equation}
  Then the probability $P^{c_0}_\eps(T,z)$ in~\eqref{eq:probspde} satisfies
  \begin{equation}
    \lim_{\eps\to0} \frac{P^{c_0}_\eps(T,z)}{\bar P^{c_0}_\eps(T,z)} = 1\,,
  \end{equation}
  where
  \begin{equation}
    \bar P^{c_0}_\eps(T,z) = (2\pi)^{-1/2} \eps^{1/2} \mathcal V(T,c_0)
    \exp\left(-\frac1{2\eps} \int_0^T \int_{\Omega^2} \theta(t,x)
      \chi(x,y) \theta(t,y)\,dx\,dy\,dt\right)\,,
  \end{equation}
  with
  \begin{equation}
    \label{eq:VPDEdef}
    \begin{aligned}
      \mathcal V(T,c_0) &= \left(\int_{\Omega^2} \theta(T,x) \mathcal
        Q(T,x,y) \theta(T,y)\,dx\,dy\right)^{-1/2}\\
      & \times \exp\left(\frac12
        \int_0^T \int_{\Omega} f''(c(t,x)) \theta(t,x)\mathcal
        Q(t,x,x)\,dx\,dt\right)\,.
    \end{aligned}
  \end{equation}
\end{proposition}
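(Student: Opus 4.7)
The plan is to formally mimic the finite-dimensional derivation of Proposition~\ref{thm:probability}, but at each step promote matrices to operators or integral kernels on $L^2(\Omega)$ and sums to integrals over $\Omega$. First I would write the infinite-dimensional analog of Proposition~\ref{thm:pdfs}: for a target field $c$, the density $\bar\rho^{c_0}_\eps(T,c)$ is built from the Freidlin-Wentzell action associated with~\eqref{eq:SPDE-generic}, and the instanton equations~\eqref{eq:instanton:SPDE} play the role of~\eqref{eq:4pdf}. The drift operator is $b(c) = \nabla\cdot(D\nabla c) - v\cdot \nabla c + r c + f(c)$, whose linearization is the differential operator appearing on the right-hand side of the $\theta$-equation, and whose second variation $\nabla\nabla \<b(c),\theta\>$ is the multiplication kernel $f''(c(t,x))\,\theta(t,x)\,\delta(x-y)$. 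With these identifications the Riccati equation~\eqref{eq:Q-SPDE} arises term by term: the two transport-diffusion-reaction pieces acting on $x$ and on $y$ separately reproduce $QA^\top + AQ$; the convolution $\int \mathcal Q(t,x,z)f''(c)\theta\,\mathcal Q(t,z,y)\,dz$ is the $QKQ$ term after the $\delta$ is collapsed; and the noise covariance $\chi(x,y)$ replaces the matrix $a$. The boundary conditions in~\eqref{eq:42Q} follow from the self-adjoint, conservative form of the diffusion operator acting in each slot.

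Next, I would write the SPDE analog of the formula~\eqref{eq:12}, formally
\begin{equation*}
\bar\rho^{c_0}_\eps(T,c) \; \propto \; |\det \mathcal Q(T)|^{-1/2}\,\exp\!\left(\tfrac12 \int_0^T\!\!\int_\Omega f''(c(t,x))\,\theta(t,x)\,\mathcal Q(t,x,x)\,dx\,dt - \tfrac1{2\eps} I_{c_0}(T,c)\right),
\end{equation*}
where the space trace $\int f''(c)\theta\,\mathcal Q(t,x,x)\,dx$ is the functional analog of $\tr(KQ)$ and $|\det \mathcal Q(T)|^{-1/2}$ inverts the Hessian of the action at $c$, in the spirit of~\eqref{eq:22}. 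I would then evaluate $P^{c_0}_\eps(T,z) = \int_{\{c:\<\phi,c\>\ge z\}} \rho^{c_0}_\eps(T,c)\,\mathcal Dc$ by Laplace's method on function space along the lines of the proof of Proposition~\ref{thm:probability}. Because the constraint $\<\phi,c\>\ge z$ is affine in $c$, its Hessian vanishes and the curvature factor $F$ of~\eqref{eq:33} collapses to the identity; this is the infinite-dimensional version of the remark that $F=\mathrm{Id}$ for linear observables. The minimizer of $I_{c_0}(T,\cdot)$ on the hyperplane is precisely the endpoint of the instanton~\eqref{eq:instanton:SPDE} with $\theta(T)=\phi$, where any Lagrange multiplier is absorbed into the normalization of $\phi$ relative to $z$.

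Splitting the fluctuation around this minimizer as $c - c(T,\cdot) = \eps u_\parallel \hat n + \sqrt{\eps}\,u_\perp$ with $\hat n = \phi/\|\phi\|$ and $u_\perp \perp \hat n$, the $u_\parallel$ integral yields the half-line factor $\eps^{1/2}\|\phi\|^{-1}$, while the Gaussian integration over $u_\perp$ produces a functional $\det_\perp$ of the Hessian of $I_{c_0}$ restricted to the hyperplane. Using the identity $\det_\perp H = \<\hat n, H^{-1}\hat n\>\det H$ from Appendix~\ref{appendix:comp_det_perp} in its operator form, and the fact that $\theta(T)=\phi$ so that $\nabla_c I_{c_0}(T,c) = \theta(T,\cdot)$, this factor combines with $|\det\mathcal Q(T)|^{-1/2}$ and the $\|\phi\|^{-1}$ into the single expression $\bigl(\int\!\!\int \theta(T,x)\mathcal Q(T,x,y)\theta(T,y)\,dx\,dy\bigr)^{-1/2}$. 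Together with the prefactor exponential from $\bar\rho^{c_0}_\eps(T,c)$ this is precisely $\mathcal V(T,c_0)$ in~\eqref{eq:VPDEdef}. The remaining exponential term equals $-\tfrac1{2\eps}\int_0^T\<\theta,\chi\theta\>\,dt$ by the SPDE analog of~\eqref{eq:Iexplicit}, namely $I_{c_0}(T,c(T,\cdot)) = \tfrac12\int_0^T\!\!\int_{\Omega^2}\theta\,\chi\,\theta$, yielding the stated expression for $\bar P^{c_0}_\eps(T,z)$.

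The main obstacle is the rigorous justification of the infinite-dimensional Laplace expansion: well-posedness of the operator Riccati~\eqref{eq:Q-SPDE}, trace-class (or at least Hilbert-Schmidt) properties of $\mathcal Q(T)$ ensuring that the functional determinant and diagonal trace $\mathcal Q(t,x,x)$ make sense, and a proof that the Gaussian fluctuations around the instanton in $L^2(\Omega)$ genuinely produce the stated prefactor rather than merely a formal expression. As the authors emphasize, these points are handled only formally here; under the smoothness assumptions on $D,v,r,f,\chi$ and compactness of $\Omega$, the equations are well-defined and the asymptotic identity $\lim_{\eps\to0} P^{c_0}_\eps/\bar P^{c_0}_\eps = 1$ should be understood as a formal small-noise expansion of the corresponding path integral, in the spirit of~\cite{schorlepp2021gelfandyaglom}.
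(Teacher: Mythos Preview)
Your proposal is correct and follows precisely the route the paper indicates: the authors explicitly omit the proof, stating that it ``essentially amounts to translating the equations in Proposition~\ref{thm:probability} to the infinite dimensional setting, and using the fact that the term involving the tensor $F(T,x)$ disappears since the equivalent of $\nabla\nabla f$ is zero for a linear observable.'' You have faithfully carried out exactly this translation, including the identification of $b$, its linearization, the second variation $f''(c)\theta\,\delta(x-y)$, the resulting operator Riccati equation, the Laplace evaluation with $F=\mathrm{Id}$, and the honest acknowledgment that the argument is formal in infinite dimensions.
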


We will omit the proof of this proposition since it essentially
amounts to translating the equations in
Proposition~\ref{thm:probability} to the infinite dimensional setting,
and using the fact that the term involving the tensor $F(T,x)$
disappears since the equivalent of $\nabla \nabla f$ is zero for a
linear observable as in~\eqref{eq:probspde}. Similar reformulations of
our other propositions are straightforward as well, and for the sake of
brevity we will therefore omit writing them down.

Note the instanton equation for $c(t,x)$ in~\eqref{eq:instanton:SPDE}
as well as the Riccati equation in~\eqref{eq:Q-SPDE} for $\mathcal
Q(t,x,y)$ are well-posed as formulated, i.e. forward in time;
similarly the instanton equation for $\theta(t,x)$
in~\eqref{eq:instanton:SPDE} is well-posed as formulated,
i.e. backward in time.

In Secs.~\ref{sec:line-advect-react}
and~\ref{sec:nonlin-advect-diffu-react}, we confirm numerically that
Proposition~\ref{thm:SPDEs}, or its counterparts for other quantities,
produces results that agree with those obtained via direct sampling of
the SPDE in~\eqref{eq:SPDE-generic}. As example problems, we consider
two special cases of~\eqref{eq:SPDE-generic}: First, in
Sec.~\ref{sec:line-advect-react} we study a linear
advection-reaction-diffusion equation with spatially non-homogeneous
forcing, for which some analytical results can be derived. The
probability that the concentration exceeds a threshold at a given
location can then be estimated by a mixed analytical-numerical
approach. Second, in Sec.~\ref{sec:nonlin-advect-diffu-react}, we
study a reaction-advection-diffusion equation with cubic nonlinearity,
where we need to resort to fully solving numerically all involved
instanton and Riccati equations.

\subsection{Linear reaction-advection-diffusion equation with
  non-local forcing}
\label{sec:line-advect-react}

Here we consider the stochastic one-dimensional
advection-diffusion-reaction equation,
\begin{equation}
  \label{eq:adv-diff}
  \partial_t c = \kappa \partial_x^2 c  - \partial_x (v(x) c) - \alpha c + \sqrt{\eps}\eta\,.
\end{equation}
where $x\in[0,1]$ periodic. This a special case
of~\eqref{eq:SPDE-generic} with $D = \kappa$, $r(x) = -\partial_x v$,
and $f(c) =-\alpha c$. For the covariance of the white-in-time forcing
$\eta(t,x)$ we take
\begin{equation}
  \label{eq:cov:loc}
  \chi(x,x') = \psi_{x_1}^\delta(x) \psi_{x_1}^\delta(x')\,.
\end{equation}
where $\psi_{x_1}^\delta(x)$ is a mollifier of length $\delta<1$
concentrated around $x_1\in[0,1]$. For the observable, we take 
\begin{equation}
  \PP \left(\int_0^1 c(x) \psi_{x_2}^\delta(x)\,dx \ge z\right)\,,
\end{equation}
where $x_2\not =x_1$ and the expectation is taken on the invariant
measure of the solution to~\eqref{eq:adv-diff}. Intuitively, the
scenario we are investigating is therefore that of a pollutant, the
density of which is described by $c(t,x)$, along a one-dimensional periodic
channel. The pollutant is randomly emitted into the environment at a
spatial location $x_1$, and gets advected and diffused conservatively,
but decays over time with rate $\alpha$. We are interested in
measuring extreme concentrations of the pollutant around the location
$x_2$ somewhere else in the channel.

\subsubsection{Finite-dimensional analogous case}
\label{sec:finite-d-analog}

Equation~(\ref{eq:adv-diff}) is a linear SPDE, an infinite-dimensional
generalization of the (non-normal, $n$-dimensional) Ornstein-Uhlenbeck
process,
\begin{equation}
  \label{eq:OUadr}
  dX^\eps_t = -\Gamma X^\eps_t\,dt + \sqrt{2\eps}\sigma\,dW_t\,,\quad t\ge0\,,
  \quad X^\eps_t\in\RR^n\,,
\end{equation}
for $\Gamma\in \RR^{n\times n}$, where $\Gamma\ne\Gamma^\top$
(non-symmetric), $\Gamma\Gamma^\top\ne\Gamma^\top\Gamma$ (non-normal),
$W$ is an $n$-dimensional Wiener process, $\sigma$ not necessarily
invertible, and $a = \sigma \sigma^\top$. In analogy to the above
scenario, we can ask for probabilities on the invariant measure of the form
\begin{equation}
  P_\eps^{A_z}=\PP(X^\eps\in  A_z)\,,\quad\text{where}\quad
  A_z = \{ x\in\RR^n | \langle k, x\rangle \ge z \}\,.
\end{equation}
Intuitively, we want to estimate the probability that the process
reaches the far-side of a plane with normal $k$, distance $z$ away
from the origin.

If we define the symmetric, positive semi-definite matrix $C$ as the solution of the
Lyapunov equation
\begin{equation}
  \label{eq:lyapunov}
  \Gamma C + C \Gamma^\top = 2 a\,,
\end{equation}
the quasi-potential of~(\ref{eq:OUadr}) is given by
\begin{equation}
  V(x) = \langle x, C^{-1} x\rangle\,.
\end{equation}
Since $C$ is not necessarily invertible, we interpret $w = C^{-1}v$ to
be the solution of $v = Cw$ if it exists, and otherwise the
quasi-potential is set to infinity. The final point of the geometric instanton
$\hat \phi(s)$ for observable value $z$ must be given by
\begin{equation}
  \label{eq:argmin}
  y = \argmin_{x\in\partial A_z}( V(x))
  = \frac{z}{\langle k, Ck\rangle} C k
\end{equation}
Here, we must assume that $k$ is not in the kernel of $C$, which is
the same as saying that $k$ is in the support of the invariant measure
of~(\ref{eq:OUadr}). The action is
\begin{equation}
  \label{eq:nonnOU-I}
  I(z) = \frac{z^2}{\langle k, Ck\rangle}\,,
\end{equation}
which follows from evaluating $V(x)$ at $x=y$ given by~\eqref{eq:argmin}.

To estimate the prefactor, we need to solve the $Q$-equation with
appropriate boundary condition. Because the equation~\eqref{eq:OUadr}
is linear, $\hat Q(s) = \tfrac12 C$. As a result
\begin{equation}
  \label{eq:nonnOU-P}
  \lim_{\eps\to 0} \frac{P_\eps^{A_z}}{\bar P_\eps^{A_z}}  =1\qquad
  \text{with} \qquad \bar P_\eps^{A_z}= \sqrt{\frac{\eps \langle k, Ck\rangle}{4\pi
      z^2}}
  \exp\left(-\frac{z^2}{\eps} \langle k, C k\rangle^{-1}\right)
\end{equation}
where we used 
\begin{equation}
  \label{eq:nonnOU-hatV}
  \hat V = \langle \hat \theta_y(1), \hat Q_y(1) \hat
  \theta_y(1)\rangle^{-1/2}
  = \frac1{\sqrt{2}z} \langle k, Ck\rangle^{1/2} 
\end{equation}
since
\begin{equation}
  \label{eq:nonnOU-theta}
  \hat\theta(s) = 2C^{-1}\hat\phi(s), \quad\text{so that}\quad \hat\theta(1) = 2C^{-1}y = \frac{2z}{\langle k, Ck\rangle} k.
\end{equation}
Due to the linearity of the system, only the end location of the
instanton at $s=1$ plays a role.

\subsubsection{Infinite dimensional setting}
\label{sec:advdiff1}

Coming back to the infinite-dimensional case, we can proceed
similarly, noticing that~(\ref{eq:adv-diff}) can be written as
\begin{equation}
  \partial_t c = -\mathcal G c + \eta\,,
\end{equation}
where
\begin{equation}
  \mathcal G = - \kappa \partial_x^2 + v(x)\partial_x +  (\partial_x v) + \alpha
\end{equation}
is a linear differential operator, acting on functions in
$L^2$. Notably, $\mathcal G$ is not normal, and we need to solve the
 Lyapunov equation
\begin{equation}
  \label{eq:op-lyapunov}
  \mathcal G C + C \mathcal G^\top = 2 \psi^\delta_{x_1}(x) \psi^\delta_{x_1}(y)\,,
\end{equation}
where $C(x,y)$ is symmetric and positive semi-definite in $L^2$ and
$C\mathcal G^\top = (\mathcal G C)^\top$, i.e. it is the differential
operator acting on the second variable of
$C(x,y)$. Explicitly~(\ref{eq:op-lyapunov}) reads
\begin{equation}
  \label{eq:op-lyapunov-advdiffreac}
  -\kappa (\partial_x^2 + \partial_y^2) C  -  (v\partial_x +
  v\partial_y) C
  - (\partial_x v + \partial_y v) C 
  + 2\alpha C
  = 2 \psi^\delta_{x_1}(x)\psi^\delta_{x_1}(y)\,.
\end{equation}

By extending the argument for the finite dimensional case to the
functional setting, we also deduce that the endpoint at $s=1$ of the
geometric instanton $\hat \phi(s,x)$ for hitting the set $\mathcal A_z$ is given by
\begin{equation}
  \hat \phi(1,x) = \argmin_{\xi \in\partial \mathcal A_z} V(\xi)
  = \frac{z \int_0^1 C(x,y) \psi_{x_2}^\delta(y)\,dy}{\int_0^1\int_0^1\psi_{x_2}^\delta(x) C(x,y) \psi_{x_2}^\delta(y)\,d\!x\,d\!y}\,,
\end{equation}
in analogy to equation~(\ref{eq:argmin}) by replacing the inner
product with the $L^2$ inner product. The corresponding action is
(compare~(\ref{eq:nonnOU-I}))
\begin{equation}
  I(z) = z^2\left(\int_0^1\int_0^1\psi_{x_2}^\delta(x) C(x,y)
    \psi_{x_2}^\delta(y)\,d\!x\,d\!y\right)^{-1}\,.
\end{equation}

Concerning the prefactor, we have (compare (\ref{eq:nonnOU-theta}))
\begin{equation}
  \theta(t=0,x) = 2z \left(\int_0^1\int_0^1\psi_{x_2}^\delta(x)
    C(x,y) \psi_{x_2}^\delta(y)\,d\!x\,d\!y\right)^{-1} \psi_{x_2}^\delta(x)
\end{equation}
and thus (compare~(\ref{eq:nonnOU-hatV}))
\begin{equation}
  \hat {\mathcal V} =
  \frac{1}{2z}\left(\int_0^1\int_0^1\psi_{x_2}^\delta(x)
    C(x,y) \psi_{x_2}^\delta(y)\,d\!x\,d\!y\right)^{1/2}\,,
\end{equation}
so that in total,
\begin{equation}
  \label{eq:prob}
   \lim_{\eps\to0} \frac{\PP \left(\int_0^1 c(x)
       \psi_{x_2}^\delta(x)\,dx \ge z\right)}{P_\eps(z)} = 1
 \end{equation}
 with
\begin{equation}
  P_\eps(z)= \sqrt{\frac{\eps}{4\pi z^2}}
  \left(\int_0^1\int_0^1\psi_{x_2}^\delta(x) C(x,y)
    \psi_{x_2}^\delta(y)\,
    d\!x\,d\!y\right)^{1/2} \exp\left(-\eps^{-1}I(z)\right)
\end{equation}
as in equation~(\ref{eq:nonnOU-P}). Clearly, this probability is
Gaussian, as expected when we consider a linear system with Gaussian
input. 

\begin{figure}
  \begin{center}
    \includegraphics[width=0.48\textwidth]{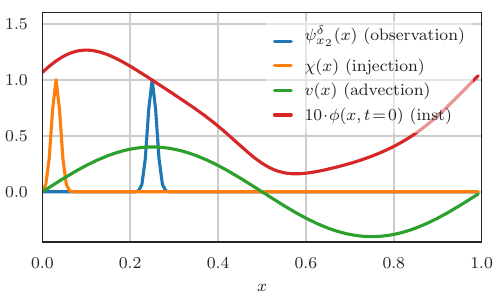}
    \includegraphics[width=0.48\textwidth]{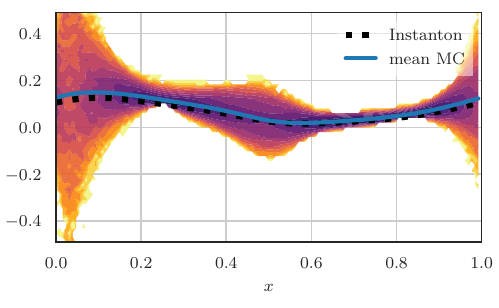}
  \end{center}
  \caption{\label{fig:inst} \textit{Left:} Instanton configuration
    (rescaled, red) for the advection-reaction-diffusion
    equation~(\ref{eq:adv-diff}) with velocity field $v(x)$ (green),
    conditioning on observing a high concentration in the region
    specified by $\psi_{x_2}^\delta$ (blue), where concentration is
    released randomly around location $x_1$ (yellow). Here,
    $\kappa=10^{-2}, \alpha=1$, $v(x) = 0.4\sin(2\pi x)$,
    $\eps = 5\cdot 10^{-2}$, $z=0.0025$, $x_1=1/32$, $x_2=1/4$, and
    $\delta=10^{-2}$, as well as $N_x=128$. \textit{Right:} Comparison
    between instanton and the result from direct numerical
    simulations, conditioning on hitting the set $\mathcal
    A_z$. Notably, the mean realization recovers the instanton, but
    variances are very high around $x=1/32$, where the concentration
    fluctuations are inserted, and also high around $x=1/2$, where the
    velocity field compresses the fluctuations of the concentration
    field.}
\end{figure}
\begin{figure}
  \begin{center}
    \includegraphics[width=0.48\textwidth]{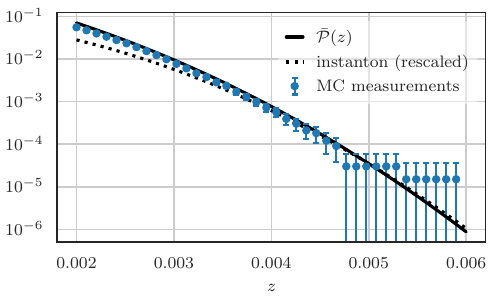}
  \end{center}
  \caption{\label{fig:prob} Probabilities for the
    advection-diffusion-reaction problem with localized forcing
    obtained by direct numerical simulations (blue), from
    formula~(\ref{eq:prob}) (black). Also shown is the instanton
    prediction rescaled by a factor $10^{-2}$ (which cannot be
    obtained \textit{a~priori} and was adjusted to get the best
    fit). The difference between the instanton prediction
    and $\bar{\mathcal P}(z)$ is noticeable.}
\end{figure}
In order to obtain $C$, we numerically solve the operator Lyapunov
equation~(\ref{eq:op-lyapunov}) for a finite difference representation
of $\mathcal G$. As an example, we take the advection-diffusion-reaction
equation~(\ref{eq:adv-diff}), with a velocity field
\begin{equation}
  v(x) = A \sin(2\pi x) \qquad (A>0)\,,
\end{equation}
i.e.~with negative divergence at the center of the domain, $x=1/2$. We
expect (if we were not to condition on any outcome, and would force
homogeneously in space) that in a typical configuration the
concentration has higher variance at $x=1/2$ and lower around $x=0$,
where it is depleted by the velocity.  We also choose $x_1=1/32$ and
$x_2=1/4$, , i.e.~the forcing is localized on the very left of the
domain, where concentration is released randomly, and we are sensing
concentration further down the channel, where it is transported by
advection and diffusion. We also check the results above numerically
by comparing the instanton prediction derived here with the result of
a direct simulation of the advection reaction diffusion
equation~\eqref{eq:adv-diff} with stochastic forcing localized
according to~\eqref{eq:cov:loc}, and counting the number of times the
observable exceeds a given threshold.

In the left panel of Fig.~\ref{fig:inst} we show the corresponding
instanton: It has a non-trivial dependence on both the location of the
forcing as well as the location of the observation. Additionally shown
are the localized functions defining the forcing and the observation,
as well as the velocity field. In the right panel of
Fig.~\ref{fig:inst} we shows a comparison against direct simulation
results of the SPDE. In particular we note that the mean of all
observed events within $\mathcal A_z$ resembles the
instanton. Further, the highest variance in the direct simulation is
clearly observed where noise is injected, as well as close to $x=1/2$,
at the sink of the velocity field.

Depending on $z$ and $\eps$, the probability can be computed
via~(\ref{eq:prob}). In Fig.~\ref{fig:prob} we show a comparison of these
probabilities against numerics, obtained by integrating the
SPDE~(\ref{eq:adv-diff}) for a long time, and observing how often each
threshold $z$ was exceeded. We note that the prediction indeed
captures not only the scaling, but also the correct prefactor.

\subsection{Nonlinear reaction-advection-diffusion equation}
\label{sec:nonlin-advect-diffu-react}

For the linear reaction-advection-diff\-u\-sion equation of
section~\ref{sec:line-advect-react}, we were able to harness linearity
to avoid explicitly using Proposition~\ref{thm:SPDEs} and numerically
solving the instanton and Riccati equations. This is no longer
possible if we choose the system to be nonlinear, for example by
taking a cubic reaction term. In this case, we no longer have access
analytically to the invariant measure, the instanton trajectory, or
the solution to the prefactor terms.

Concretely, we consider the nonlinear stochastic partial differential
equation
\begin{equation}
  \label{eq:SPDE}
  \partial_t c = \kappa
  \partial_x^2 c - v(x) \partial_x c -\alpha c - \gamma c^3 + \sqrt{2\epsilon} \eta\,,
\end{equation}
where the spatial variable $x$ is periodic on the domain $[-L/2,L/2]$
and $t\in [-T,0]$. This equation is a special case
of~\eqref{eq:SPDE-generic} with $D(x) = \kappa$, $r(x) = 0$, and
$f(c) = -\alpha c -\gamma c^3$: the parameters $\alpha$ and $\gamma$ control the
linear and nonlinear part of the reaction term, respectively. As
velocity field we pick
\begin{equation}
  \label{eq:SPDE-velocity}
  v(x) = 4 + 2\sin(4\pi x/L)\,,
\end{equation}
which always advects to the right, but with spatially varying
speed. As a consequence, the whole equation is no longer translation
invariant. We will assume that the noise $\eta$ is white in space and time, with
covariance
\begin{equation}
  \label{eq:38}
  \EE (\eta(t,x)\eta(t',x') ) = \delta(t-t') \delta(x-x')\,.
\end{equation}
This is a rougher noise than the one in the
SPDE~\eqref{eq:instanton:SPDE}, which is allowed because
\eqref{eq:SPDE} is well-posed in 1D with this
forcing~\cite{faris-jona-lasinio:1982}.

We are interested in the probability that a sample on the invariant
measure of~\eqref{eq:SPDE} exceeds the threshold $z$ at the location
$x=0$, i.e.
\begin{equation} \label{eq:mollified_exp_pde}
  \PP(c(x=0)\ge z)\,,
\end{equation}
Note that this problem can either be viewed as a nonlinear version of
the reaction-advection-diffusion equation of
section~\ref{sec:line-advect-react}, or as an infinite dimensional
version of the $\RR^2$ process given in
section~\ref{sec:exampl-nonl-irrev-R2-nonlin}.

To apply our method we need to first solve the geometric instanton
equations (\ref{eq:4infpdf}) using the appropriate boundary
conditions. Since we are focusing in this example on the limit $T\to
\infty$, an efficient way of numerically solving these equations using
an iterative scheme and arclength parametrization has been discussed
in previous work~\cite{grafke-grauer-schaefer-etal:2014}. Since for
equation~(\ref{eq:SPDE}) the Hamiltonian is
\begin{equation}
  \label{eq:37a}
  H(c,\theta) = \int_{-L/2}^{L/2} \left( ( \kappa
    \partial_x^2 \hat c - v(x) \partial_x \hat c -\alpha\hat  c
    - \gamma \hat c^3  )\hat \theta(x) + |\hat \theta(x)|^2 \right)dx
\end{equation}
we immediately obtain the (geometric) instanton equations in this case as
\begin{equation}
  \label{eq:inst-nonlinSPDE}
  \begin{cases}
    \lambda \partial_s \hat c = \kappa
    \partial_x^2 \hat c-  v(x) \partial_x \hat c-\alpha \hat c - \gamma
    \hat c^3+2 \hat\theta\,,\qquad & \hat c(0) = 0\\
    \lambda \partial_s \hat\theta = - \kappa \partial_x^2 \hat\theta - \partial_x(v(x)
    \hat\theta) + \alpha \hat\theta + 2\gamma \hat c^2 \hat\theta \,,&
    \hat \theta(1) = \delta(x)\,.
  \end{cases}
\end{equation}
Once the instanton is found, we need to solve the corresponding
(geometric variant of the) forward Riccati equation for $\hat{\mathcal
  Q}(s,x,y)$ given by (\ref{eq:Q-SPDE}).

For the specific SPDE~(\ref{eq:SPDE}), the equation for
$\hat{\mathcal Q}(s,x,y)= \hat{\mathcal Q}(s,y,x)$ is
\begin{equation}
  \label{eq:Q-nonlinSPDE}
  \begin{aligned}
    \lambda \partial_s \hat{\mathcal Q}=& \kappa\partial_x^2 \hat
    {\mathcal Q} + \kappa \partial_y^2 \hat {\mathcal Q} -v(x)
    \partial_x \hat{\mathcal Q} - v(y) \partial_y \hat{\mathcal Q}
    - 2\alpha \hat {\mathcal Q} -3\gamma \left(\hat c^2(s,x)+ \hat c^2(s,y)\right)\hat{\mathcal Q} \\
    & -6\int_{-L/2}^{L/2} \hat{\mathcal Q}(s,x,z) \hat c(s,z)\hat\theta(s,z)
    \hat{\mathcal Q}(s,z,y)\,dz + 2\delta(x\!-\!y)
  \end{aligned}
\end{equation}
to be solved with the initial condition, $\hat {\mathcal Q}(0) = \hat {\mathcal Q}_*$, solution to the Lyapunov equation
\begin{equation}
  \label{eq:spde-lyapunov}
  -\tfrac12\kappa(\partial_x^2 \hat {\mathcal Q}_*
  + \partial_y^2 \hat {\mathcal Q}_*) + \alpha \hat {\mathcal Q}_*
  + \tfrac12 \left(v(x) \partial_x \hat{\mathcal Q}_*
    + v(y) \partial_y \hat{\mathcal Q}_*\right)= \delta(x-y)\,.
\end{equation}

\begin{figure} [htb]
  \begin{center}
    \includegraphics[width=0.48\textwidth]{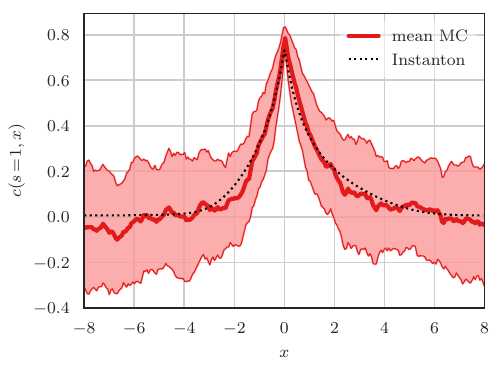}
    \hfill
    \includegraphics[width=0.48\textwidth]{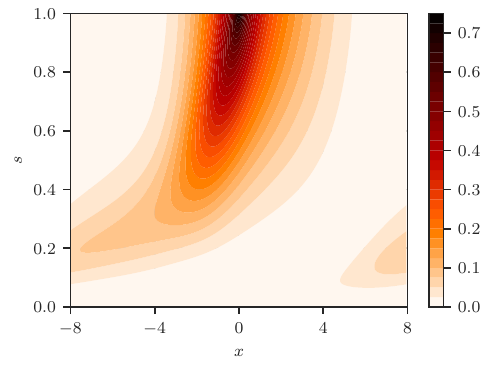}
  \end{center}
  \caption{Left: Comparison of the numerically computed instanton
    (black dots) with the filtered solution from direct numerical
    simulations (red) realizing a high amplitude event $z=0.7$ at the
    origin $x=0$, plus/minus one standard deviation (light red shaded
    region), for the nonlinear stochastic partial differential
    equation~(\ref{eq:SPDE}). The spatially dependent positive
    velocity field $v(x)$ leads to an asymmetry if the typical final
    configuration exceeding $z$ at $x=0$, which is visible both in the
    instanton and the direct simulation results. Right: Instanton
    trajectory along the arclength parameter $s$, showing the temporal
    evolution of the instanton into the large amplitude configuration
    at $s=1$. The positive velocity leads to a traveling wave
    instanton solution that slowly amplifies over time.}
\label{fig:nlpde_filter}
\end{figure}
One can numerically integrate the instanton
equations~(\ref{eq:inst-nonlinSPDE}) to obtain the most likely
configuration that achieves a given amplitude $z$ at $x=0$ at final
time $t=0$. Fourier transforms were used to calculate the spatial
derivatives. When solving the associated Riccati equation for
$\mathcal{Q}$, however, we chose an equally distant point grid to
discretize the time interval $[-T,0]$ for stability reasons. Here, the
time $T$ can be found from the geometric parametrization
\cite{heymann-vanden-eijnden:2008} and the instanton solution needs to
be interpolated onto this new point grid.  For the solution of the
Riccati equation on the equidistant grid, exponential time
differencing~\cite{kassam-trefethen:2005} was employed and the
diffusive term $\kappa(\partial_x^2+\partial_y^2){\mathcal{Q}}$ can be
treated for numerical efficiency using the 2-dimensional fast Fourier
transform.

In the left panel of Fig.~\ref{fig:nlpde_filter} we compare this
numerically computed instanton with the filtered instanton from direct
simulations of the SPDE~\eqref{eq:SPDE} using the method described in
\cite{grafke-grauer-schaefer:2015}. As can be seen, the mean
realization with $z=0.7$ at $x=0$ in the SPDE is very similar to the
instanton. In particular, both instanton and the typical sample from
the SPDE show an asymmetry around $x=0$ coming from the spatially
inhomogeneous velocity field~(\ref{eq:SPDE-velocity}). Also shown is
one standard deviation of the fluctuations around the instanton as
shaded red region. In the right panel of Fig.~\ref{fig:nlpde_filter}
we show the whole evolution of the instanton in arclength parameter
$s$ that compactifies the infinite time interval $t\in[0,\infty)$ into
$s\in[0,1]$. Clearly visible is the (inhomogeneous) movement of the
peak as it is advected with the positive velocity $v(x)$ given in
equation~(\ref{eq:SPDE-velocity}).
\begin{figure}
  \begin{center}
    \includegraphics[height=140pt]{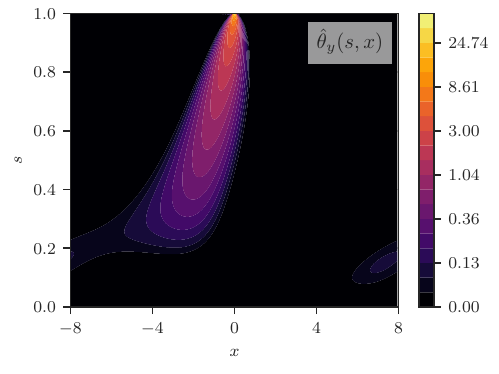}
    \includegraphics[height=140pt]{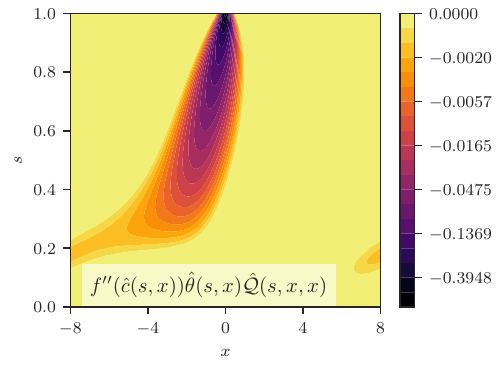}
  \end{center}
  \caption{Left: Conjugate momentum $\hat\theta(s,x)$ along the
    arclength parameter $s$. Right: The quantity
    $f''(\hat c(s,x))\hat \theta(s,x) \hat Q(s,x,x)$. The exponential
    of its integral along $x$ and $s$ enters the expression
    in~\eqref{eq:VPDEdef} for the prefactor component
    $\mathcal V(T,c_0)$. }
\label{fig:nlpde_riccati}
\end{figure}
In the left panel of Fig.~\ref{fig:nlpde_riccati} shows the conjugate
momentum $\hat\theta(s,x)$ along the arclength parameter $s$, and in
the right panel we show the quantity
$f''(\hat c(s,x))\hat \theta(s,x) \hat Q(s,x,x)$: the exponential of
its integral along $x$ and $s$ enters the expression
in~\eqref{eq:VPDEdef} for the prefactor component $\mathcal V(T,c_0)$.

The combination of both results inserted into
Proposition~\ref{thm:SPDEs} allows us to determine the
prefactor. In Fig.~\ref{fig:nlpde_example} we compare the result from
direct simulations of the SPDE~\eqref{eq:SPDE} to the predictions of
the instanton equations together with the prefactor, showing clear
agreement especially for large values of $z$, as expected. A
comparison is further made to the analytical result available for the
linear case, $\gamma=0$, which clearly shows that the nonlinear term
affects the probability. Similarly we compare against the case without
advection ($v(x)=0$), which is gradient, demonstrating that the
advective term also has a considerable (opposite) effect on the
probabilities. In this numerical example, we chose $N_x=256$ grid
points for the discretization in space for a domain of size $L=16$,
and $N_t=5000$ grid points in time for the direct simulations for a
temporal domain of size $T=25$. The instanton is computed with
$N_s=2000$ discretization points in the arclength
parameter. Additional parameters are $\kappa=1$, $\alpha = 0.6$, and
$\gamma=2$. The noise level is set to $\eps=0.1$, and we collected
$N_{\text{samples}}=10^5$ samples in the direct simulations.

\begin{figure}
  \begin{center}
    \includegraphics[width=0.48\textwidth]{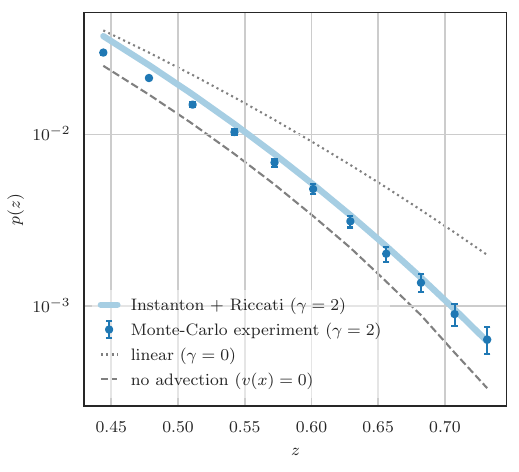}
  \end{center}
  \caption{\label{fig:inst_pde} Comparison of the predicted
    probabilities $p(z)$ of exceeding the threshold $z$. The dark blue
    dots shows the probability $p$ estimated via direct numerical
    simulations for the nonlinear case ($\gamma = 2$), while the light
    blue line is the theoretical prediction from instanton and Riccati
    equation. Clearly, the instantons, together with the corresponding
    prefactors, approximate these probabilities well. For comparison,
    we show the analytical prediction that can be obtained for the
    linear case ($\gamma=0$), highlighting the fact that the
    nonlinearity indeed plays a role for the tail probabilities in
    particular. Similarly, we show the situation without advection
    ($v(x)=0$), demonstrating that the advection term modifies the
    probabilities.}
\label{fig:nlpde_example}
\end{figure}

\section{Generalization to processes driven by a non-Gaussian noise}
\label{sec:nonGaussian}

It is possible to generalize the above results to continuous time
Markov jump processes that cannot be represented by an SDE with
additive Gaussian noise like in~(\ref{eq:sde}). The intuition is
similar: considering a WKB approximation to the BKE allows us to
obtain a Hamilton-Jacobi equation that defines the behavior of the
instanton. If we furthermore keep track of the prefactor in the WKB,
we will obtain an additional equation for it on the next order, which
will yield a generalization of the Riccati equations to obtain sharp
prefactor estimates.

Concretely, consider a continuous-time Markov jump process (MJP) on the
state space $\RR^d$ with the generator
\begin{equation}
  \label{eq:generator}
   L_\eps f(x) = \frac1\eps\sum_{r=1}^R a_r(x)\left( f(x+\nu_r \eps) - f(x)\right)
\end{equation}
which encodes a reaction network for $d\in\NN$ species with $R\in\NN$
reactions, each reaction $r$ leading to a change in species defined by
the vector $\nu_r\in\RR^d$, and happening with rate $a_r(x)\in\RR_+$
possibly depending on the state $x\in\RR^d$. Depending on the
microscopic model at hand, we can expand $a_r(x)$ in terms of orders
of $\eps$,
\begin{equation}
  a_r(x) = a_r^{(0)}(x) + \eps a_r^{(1)}(x) + \mathcal O(\eps^2)\,.
\end{equation}
If we insert the WKB ansatz $f(t,x) = Z(t,x) \exp(\eps^{-1} S(t,x))$
in the BKE
\begin{equation}
  \partial_t f  = L_\eps f\,,
\end{equation}
and collect term of successive orders in $\eps$, we obtain
\begin{align}
  \mathcal O(\eps^{-1}):&& \partial_t S &= -\sum_{r=1}^R a_r^{(0)}(x) (e^{\nu_r \cdot \nabla S}-1) = -H(x,\nabla S(x))\\
  \mathcal O(\eps^{0}):&& \partial_t Z &= -\sum_{r=1}^R a_r^{(0)}(x) e^{\nu_r\cdot \nabla S} \left(\nu_r \cdot \nabla Z + \tfrac12 Z \nu_r \cdot \nabla\nabla S\nu_r + a_r^{(1)}(x)\, Z\right)\,.\label{eq:K}
\end{align}
The instanton $\phi$ solves the Hamilton's equations
\begin{equation}
  \label{eq:nongauss-instanton}
  \begin{cases}
    \dot\phi = \nabla_\theta H(\phi,\theta), & \phi(0)=x\\
    \dot\theta = -\nabla_\phi H(\phi,\theta),
  \end{cases}
\end{equation}
where we additionally get a boundary condition $\phi(T)=y$ or
$\theta(T)=\nabla f(\phi(T))$ depending on the scenario under
consideration. For the generator~(\ref{eq:generator}), the Hamiltonian
is given by
\begin{equation}
  H(\phi,\theta) = \sum_{r=1}^R a_r^{(0)}(\phi)\left(e^{\nu_r\cdot \theta}-1\right)\,,
\end{equation}
but an arbitrary Hamiltonian is possible in general. If we evaluate
$Z(t,x)$ along the instanton, $G(t) = Z(t, \phi(t))$, we have
\begin{equation}
  \dot G(t) = \dot Z + \nabla Z \cdot \dot \phi = \dot Z + \sum_{r=1}^R \nu_r \cdot \nabla Z a_r^{(0)}(\phi) e^{\nu_r \cdot \nabla S}\,,
\end{equation}
and therefore, along the instanton, equation~(\ref{eq:K}) becomes
\begin{equation}
  \dot G = -\tfrac12 G \sum_{r=1}^R e^{\nu_r\cdot \nabla S}  \left(a_r^{(0)}(\phi) \nu_r \cdot \nabla\nabla S \nu_r + a_r^{(1)}(\phi)\right)\,,\quad G(T)=1\,.
\end{equation}
Written in terms of the Hamiltonian and the additional $\mathcal
O(\eps)$ drift term, this becomes
\begin{equation}
  \label{eq:nongauss-G}
  \dot G = -\tfrac12 G \left(\mathrm{tr}\left(H_{\theta\theta} W\right) + A^{(1)}\right)\,,\quad G(T)=1\,,
\end{equation}
with $A^{(1)} = \sum_r a_r^{(1)}(\phi)$. As before, we obtain an
evolution equation for $W=\nabla\nabla S$ by differencing the HJB
equation twice,
\begin{equation}
  \label{eq:nongauss-riccati}
  \dot W = -H_{\phi\phi} - H_{\phi\theta} W - W H_{\phi\theta}^\top - W H_{\theta\theta} W\,,
\end{equation}
to be solved with boundary conditions that depend on the scenario, and
where subscripts of the Hamiltonian denote differentiation. We can
also derive an equation for $Q$,
\begin{equation}
  \label{eq:nongauss-Q}
  \dot Q = Q H_{\phi\phi} Q + Q H_{\phi\theta} + H_{\phi\theta}^\top Q + H_{\theta\theta}\,.
\end{equation}
This allows one to compute sharp estimates for expectations,
probability densities, hitting probabilities and exit times in a
similar way as before, replacing the instanton
equations~(\ref{eq:hamilton_time_parametrization}) and its variants
with (\ref{eq:nongauss-instanton}), and the forward and backward
Riccati equations with~(\ref{eq:nongauss-riccati}) and~(\ref{eq:nongauss-Q}).

\subsubsection{Example: Continuous time Markov jump process}
\label{sec:exampl-cont-time-MJP}

Consider the following continuous-time MJP, inspired
by~\cite{bonaschi-peletier:2016}, in which a particle hops on a grid
with spacing $\eps$ , $X \in \eps\ZZ$ (see
Fig.~\ref{fig:queue-model}), i.e.~the spatial coordinate becomes
continuous for $\eps\to0$. Left and right jumps happen with a rate
\begin{equation}
  \label{eq:queue-rates}
  r_{\pm}^\eps(x) = \exp\left(-\eps^{-1} \left(E(x\pm \eps)-E(x)\right)\right)\,,
\end{equation}
and the generator is given by
\begin{equation}
  \label{eq:example-generator}
  L_\eps f = \eps^{-1} \Big(r_+(x)(f(x+\eps) - f(x)) + r_-(x)(f(x-\eps)-f(x))\Big)\,.
\end{equation}
By construction, this process is in detailed balance with respect to
the Gibbs distribution
\begin{equation}
  \label{eq:inv-meas-example}
  \mu_\infty(x)= Z^{-1} e^{-2\eps^{-1} E(x)}\,, \quad \text{where} \quad
  Z= \sum_{x\in \eps \ZZ}e^{- 2\eps^{-1} E(x)}
\end{equation}
i.e.~$E(x)$ plays the role of the free energy and $\tfrac12\eps$ that of the temperature.

In the continuum limit, $\eps\to0$, the rates~(\ref{eq:queue-rates})
can be expanded as
\begin{equation}
  r_\pm^\eps (x) = e^{\mp E'(x)} \left(1+\tfrac12 \eps E''(x)
    + \mathcal O(\eps^{2})\right) = r_\pm^{(0)}(x) + \eps  r_\pm^{(1)}
  +
  \mathcal O(\eps^2)
\end{equation}
with
\begin{equation}
  r_\pm^{(0)}(x) = e^{\mp  E'(x)}\qquad \text{and}\qquad r_\pm^{(1)}(x) = \frac{1}{2} E''(x) e^{\mp  E'(x)}\,.
\end{equation}
Correspondingly, the LDT Hamiltonian takes the form
\begin{equation}
  H(\phi,\theta) = r_+^{(0)}(\phi) \left(e^\theta-1\right) + r_-^{(0)}(\phi)\left(e^{-\theta}-1\right)\,.
\end{equation}

\begin{figure}
  \begin{center}
    \includegraphics[width=240pt]{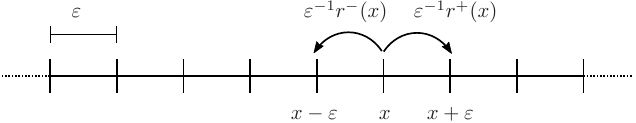}
  \end{center}
  \caption{Schematic depiction of the MJP with generator
    (\ref{eq:example-generator}): A particle at point $x\in\eps\ZZ$
    jumps with rates $r^\pm$ to the left and right, respectively. The
    rates are given in equation~(\ref{eq:queue-rates}). For
    $\eps\to0$, we compute the probability $P_T(z)$ of excursions
    larger than $z\in\RR$ at time $t=T$. }
  \label{fig:queue-model}
\end{figure}

We are interested in estimating the probability to observe a large
excursion $z\in\RR$ at time $t=T$,
\begin{equation}
  P_T(z) = \PP(X_T>z|X_0=0)
\end{equation}
for the MJP with generator~(\ref{eq:generator}), with the particle
starting at $X_0=0$ at $t=0$.
\begin{figure}
  \begin{center}
    \includegraphics[width=240pt]{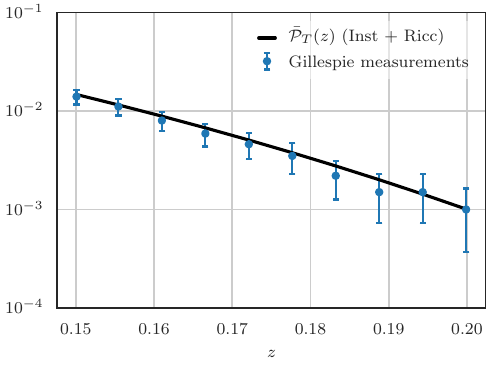}
  \end{center}
  \caption{Comparison between numerical estimate of $\bar P_T(z)$ and
    sampling estimate obtained from the Gillespie
    algorithm~\cite{gillespie:1977} for the original MJP for small
    $\eps$. Here, the parameters are $E(x) = \frac14 x^4$, 
    $T=5.12$, while for the instanton, $N_t = 512$, $\Delta
    t=10^{-2}$, and for the Gillespie algorithm $N=2\cdot 10^3$,
    $N_{\text{samples}} = 10^4$.}
  \label{fig:queue}
\end{figure}
Solving this problem numerically by our approach amounts to
performing the following steps:
\begin{enumerate}[(i)]
\item Solve the instanton equations
  \begin{equation}
    \begin{cases}
      \dot\phi = H_\theta(\phi,\theta) = e^{- E' + \theta} - e^{ E' -
        \theta},\qquad & \phi(0)=0\,,\\
      \dot \theta = -H_\phi(\phi,\theta) = -E'' e^{-E'}\left(e^\theta-1\right) + E'' e^{E'} \left(e^{-\theta}-1\right),& \phi(T)=z\,,
    \end{cases}
  \end{equation}
\item Solve the Riccati equation
  \begin{equation}
    \begin{aligned}
      \dot Q &= H_{\phi\phi}Q^2 + 2H_{\phi\theta} Q + H_{\theta\theta} \\
      &=\left(\left(- E''' + (E'')^2\right) e^{- E'} \left(e^\theta -1 \right)
        + \left(E''' + (E'')^2\right) e^{ E'} \left(e^{-\theta} - 1\right)\right)Q^2\\
      &\quad - E'' \left(e^{- E'+\theta} + e^{ E'-\theta}\right)Q+
      \left(e^{- E'+\theta} + e^{ E'-\theta}\right)\,,\qquad Q(0)=0
    \end{aligned}
  \end{equation}
  forward in time,
\item Assemble the full estimate as
  \begin{equation}
    \begin{aligned}
      \bar P_T(z) &= \left(\frac{\eps}{2\pi Q(T)
          \theta^2(T)}\right)^{1/2} \exp\left(\frac12 \int_0^T
        \left(H_{\phi\phi}(\phi(t),\theta(t)) Q(t) +
          A^{(1)}(\phi(t))\right)\,dt\right)\\
      &\qquad \times \exp\left(-\eps^{-1}
        \left(\int \theta d\phi - H(\theta,\phi)T\right)\right)
    \end{aligned}
  \end{equation}
\end{enumerate}

This procedure can be carried out for various $z\in\RR$ to, for
example, investigate the probability $P_T(z)$ for rare events,
i.e.~large $z$. Displayed as black solid line in
Fig.~\ref{fig:queue} is the estimate $\bar P_T(z)$ at
$T=5.12$. Here, we choose $E(x) = \frac14 x^4$. For the numerical
computation of the instanton, we employed a simple forward Euler
scheme integrating the instanton
equations~(\ref{eq:nongauss-instanton}) forward and backward multiple
times until convergence, with $N_t = 512$ time discretization points,
and $\Delta t=10^{-2}$ temporal resolution. The blue markers compare
the instanton prediction against a numerical computation of the actual
stochastic process for finite (but small) $\eps\ll1$ by using the
Gillespie algorithm~\cite{gillespie:1977}. Numerical parameters are
$\eps=5\cdot 10^{-4}$, and we took $N_{\text{samples}} = 10^4$ samples to
estimate $P_T(z)$.

\section{Conclusions}
\label{sec:conclu}

In this paper, we have proposed explicit formulas to calculate the
prefactor contribution of expectations, probabilities, probability
densities, exit probabilities, and mean first passage times for
stochastic processes, both at finite time and over the invariant
measure. The approach gives sharp estimates in the small noise limit,
corresponding to the next order (pre-exponential) term to the
Freidlin-Wentzell large deviation limit. This allows us to compute these
 probabilistic quantities in absolute terms, i.e.~including
normalization constants, for finite noise values, instead of merely
producing the exponential scaling. This feature makes the approach
valuable whenever full quantitative estimates of probabilities are
required, as is the case in almost all applications, for example in physics,
chemistry,  biology, and engineering.

From a physical point of view the prefactor formulas given above
represent an explicit evaluation of the fluctuation determinant. In
principle, these ideas have been formulated multiple times in various
contexts (compare section~\ref{sec:related}), from quantum field
theory over linear-quadratic control to calculus of variations. As
noted by Schulman~\cite{schulman:1981}:
\begin{quote}
  {\it ``Methods for handling the quadratic Lagrangian are legion and have
  been well developed since the earliest work on path integrals. Oddly
  enough, papers on the subject continue to appear and may give some
  historian of science material for a case history on the nondiffusion
  of knowledge.''}
\end{quote}
Importantly, though, numerical methods for the calculation of
prefactors in the general setup we consider here remain, to the best
of our knowledge, mostly nonexistent. Concretely, for the prefactor
terms we (i) formulate algorithms suitable for their explicit
numerical computation, (ii) phrase them for the more general class of
Lagrangians encountered in large deviation theory, and (iii) treat the
case of the invariant measure and infinite time horizon. Computations
on stochastic partial differential equations highlight the fact that
our results produce correct results even in the irreversible infinite
dimensional case, and are efficient enough to be used for quantitative
estimates of probabilistic quantities in regimes where direct sampling
is inaccessible.

\section*{Acknowledgments}

TG wants to thank Timo Schorlepp and Rainer Grauer for helpful
discussions. The work of TG was supported by the EPSRC projects
EP/T011866/1 and EP/V013319/1. The work of TS was supported by the NSF
grants DMS-1522737 and DMS-2012548 as well as the PSC-CUNY grant
TRADB-51-281. The work of EVE was supported by National Science
Foundation (NSF) Materials Research Science and Engineering Center
Program grant DMR-1420073, NSF grant DMS-1522767, and the Simons
Collaboration on Wave Turbulence, Grant No. 617006.

\bibliographystyle{abbrv}
\bibliography{paperbib}

\appendix

\section{The geometric instanton and Riccati equations}
\label{sec:init-cond-geom}

\subsection{Geometric instanton equations}
\label{sec:geom-inst-equat}

Here we give some additional information about the geometric instanton
equations~(\ref{eq:4infpdf}), referring the reader
to\cite{heymann-vanden-eijnden:2008,grafke-grauer-schaefer-etal:2014,grafke-schaefer-vanden-eijnden:2017,grafke-vanden-eijnden:2019}
for more information.  Recall that these equations read
\begin{equation}
  \label{eq:instanton}
  \begin{cases}
    \lambda \hat\phi' = b(\hat\phi) + a \hat\theta\\
    \lambda \hat\theta' = -(\nabla b(\hat\phi))^\top \hat\theta
  \end{cases}
\end{equation}
with boundary conditions $\hat \phi(0)=x_*$,
$\hat \theta(1) = \nabla f(\hat \phi(1))$ as in~\eqref{eq:4infpdf}, or
$\hat \phi(0)=x_*$, $\hat \phi(1) = y$ as in~\eqref{eq:4inf}. Here
$\lambda(s) = ds/dt$ is the reparametrization factor from physical
time $t\in[-\infty,0]$ to normalized arc-length
$s\in[0,1]$, and primes denote
derivatives with respect to~$s$. In practice, these equations can be
solved globally by a relaxation
method~\cite{heymann-vanden-eijnden:2008}. This amounts to solving the
first equation in $a\hat\theta$ and inserting the result in the second
equation multiplied by $a$ to deduce
\begin{equation}
  \label{eq:61}
  0 = \lambda^2 \hat\phi'' + \lambda \lambda' \hat\phi''
  - \lambda a \left( a^{-1} \nabla b(\hat\phi)
    - (a^{-1}\nabla b(\hat\phi))^\top\right) \hat \phi' 
  -a (\nabla b(\hat\phi))^\top a^{-1}  b(\hat\phi)
\end{equation}
with boundary conditions $\hat \phi(0)=x_*$,
$a \hat \phi'(1) = \nabla f(\hat \phi(1))$ or $\hat \phi(0)=x_*$,
$\hat \phi(1) = y$.  We can now find the stable fixed points of this
equation by introducing an artificial relaxation time $\tau$ and
evolving $\{\phi(s):s\in(0,1)\}$ globally from some initial guess via
\begin{equation}
  \label{eq:61a}
  \partial_\tau \hat\phi = \lambda^2 \hat\phi'' + \lambda \lambda' \hat\phi''
  - \lambda a \left( a^{-1} \nabla b(\hat\phi)
    - (a^{-1}\nabla b(\hat\phi))^\top\right) \hat \phi' 
  -a (\nabla b(\hat\phi))^\top a^{-1}  b(\hat\phi).
\end{equation}
while enforcing $\hat \phi(0)=x_*$,
$a \hat \phi'(1) = \nabla f(\hat \phi(1))$ or $\hat \phi(0)=x_*$,
$\hat \phi(1) = y$ for all $\tau\ge0$. This is the procedure that was
used in this paper to solve the geometric instanton equations.

Alternatively, in situations where we want to solve~\eqref{eq:61} with
boundary condition $\hat \phi(0)=x_*$,
$\hat \theta(1) = \nabla f(\hat \phi(1))$ we can also iteratively
solve the equation for $\hat \phi$ forward in $s$, then that for
$\hat \theta$ backward in time forward-integrate the
$\hat\phi$-equation~\cite{grafke-grauer-schaefer-etal:2014,grafke-schaefer-vanden-eijnden:2017}. In
this case, we encounter a similar problem as with the equation for
$\hat Q$ in~\eqref{eq:Q}: at $s=0$, corresponding to physical time
$t=-\infty$ when the instanton starts at the fixed-point
$x_* = \hat\phi(0)$, one has $b(\hat\phi(0))=b(x_*)=0$ and
$\hat\theta(0)=0$, as well as $\lambda(0)=0$. This means that the
equation for $\hat \phi$ in~\eqref{eq:instanton} reads $0=0$ at $s=0$
and it is \textit{a~priori} unclear how to start the integration. Here
too, this problem is only apparent since the equation for $\hat \phi$
can be written as
\begin{equation}
  \label{eq:62}
  \hat\phi'(s) = \lambda^{-1}(s) \left(b(\hat\phi(s)) + a \hat\theta(s)\right)
\end{equation}
and a direct application of l'H\^opital rule shows that $\hat\phi'(0) =
\lim_{s\to0} \phi'(s)$ satisfies
\begin{equation}
  \label{eq:62lim}
  \hat\phi'(0) = [\lambda'(0)] ^{-1}\left(\nabla b(x_*) \hat \phi'(0)
    + a \hat\theta'(0)\right)
\end{equation}
where we used $\hat\phi(0) = x_*$. Noticing that the matrix
$(\lambda'(0)\mathrm{Id} - \nabla b(x_*))$ is invertible since
$\lambda'(0)>0$ and $-\nabla b(x_*)$ is positive-definite as $x_*$ is
a stable fixed point of $\dot x = b(x)$ by assumption, this equation
has the unique solution
\begin{equation}
  \hat\phi'(0) = (\lambda'(0)\mathrm{Id} - \nabla b(x_*))^{-1} a \hat\theta'(0)\,.
\end{equation}
This solution can be computed numerically since we have access to
$\lambda'(0)$ and $\hat\theta'(0)$ through our knowledge of
$\lambda(s)$ and $\hat\theta(s)$ from the last backward
integration. In particular, since $\lambda(0)=0$ and
$\hat\theta(0)=0$, we can write, for arc-length stepsize $\Delta s>0$,
\begin{equation}
  \hat\theta'(0) =\hat\theta(\Delta s)/\Delta s + O(\Delta s),\qquad
  \lambda'(0)= \lambda(\Delta s)/\Delta s + O(\Delta s)\,.
\end{equation}
Once $\hat\phi'(0)$ has been computed, it can be used together with
$\hat\phi(0)=x_*$ to start off the numerical integration of the
$\hat\phi$-equation forward in time using
\begin{equation}
  \label{eq:63}
  \begin{aligned}
    \hat \phi(\Delta s) &= x_* + \hat\phi'(0) \Delta s + O(\Delta s^2)\\
    &= x_* +(\lambda(\Delta s) \mathrm{Id} - \Delta s \nabla
    b(x_*))^{-1} a \hat \theta(0) + O(\Delta s^2) .
  \end{aligned}
\end{equation}

\subsection{Geometric Riccati equations}
\label{sec:geom-ricc-equat}

As discussed in remark~\ref{remark:geometric-riccati-initial}, we can
similarly derive the initial arc-length derivative $\hat Q_y'(0)$ by
solving an additional Lyapunov equation. Concretely, let us start from
\begin{equation}
  \label{eq:Qinf-appendix}
  \hat Q'_{y} = \lambda^{-1} \left(\hat Q_{y} \hat K_y \hat Q_{y} +
  \hat Q_{y} (\nabla b(\hat \phi_{y}))^\top + (\nabla b(\hat
  \phi_{y})) \hat Q_{y} + a\right) \equiv \lambda^{-1} \mathfrak R.
\end{equation}
Since $\mathfrak R(0)=0$ and $\lambda(0)=0$, via l'H\^opital's rule it
follows that $\hat Q_y'(0) = \lambda'(0)^{-1} \mathfrak R'(0)$., which
can be written explicitly as
\begin{equation}
  \label{eq:Q-prime-Lyapunov-appendix}
  \mathfrak C\, \hat Q_y'(0) + \hat Q_y'(0)\, \mathfrak C^\top = \mathfrak K \,.
\end{equation}
where we defined
\begin{equation}
  \begin{aligned}
    \mathfrak C &= \tfrac12 \lambda'(0) \mathrm{Id} - \nabla b(x_*)\\
    \mathfrak K &= Q_{*} (\nabla \nabla b(x_*)\hat \theta_y'(0)) 
    Q_{*} +Q_{*} (\nabla \nabla b(x_*) \hat
    \phi_y'(0))^\top + (\nabla \nabla b(x_*)\hat \phi_y'(0)) Q_{*}\,.
  \end{aligned}
\end{equation}
If $\nabla\nabla b(x_*)=0$, we have $\mathfrak K=0$ and hence one
obtains $\hat Q_y'(0)=0$---this is the case e.g. in the examples
presented in sections~\ref{sec:exampl-invar-meas-R2-nonlin}
or~\ref{sec:exampl-nonl-irrev-R2-nonlin}. In this situation, the
approximation $\hat Q_y(\Delta s) = Q_*$ is correct up to order
$\mathcal O(\Delta s^2)$. If a more accurate approximation is needed,
we can consider the next order: Taking the derivative
of~\eqref{eq:Qinf-appendix} we deduce that
\begin{equation}
  \hat Q_y''(s) = \lambda(s)^{-1} \mathfrak R'(s) - \lambda(s)^{-2}
  \mathfrak R(s)
  \lambda'(s)\,,
\end{equation}
and hence, using  l'H\^opital's rule again together with $\mathfrak R(0)= \mathfrak R'(0) = 0$ and
$\lambda(0)=0$, we obtain
\begin{equation}
  \hat Q_y''(0) = \tfrac12 \lambda'(0)^{-1} \mathfrak R''(0)\,.
\end{equation}
For brevity, we will refrain to write down this equation explicitly,
but we note that it is again a Lyapunov equation, this time for
$\hat Q_y''(0)$, where all other terms are known. Knowledge of
$\hat Q_y''(0)$ in situations where $\hat Q_y'(0) =0$ allows one to
use
\begin{equation}
  \hat Q_y(\Delta s) = Q_* + \tfrac12 \Delta s^2 \hat Q_y''(0)
  +\mathcal O(\Delta s)^3\,.
\end{equation}

\section{Expressions for the solution of Riccati
  equations as expectations}
\label{sec:riccatiexpect}

The following two propositions give ways to express the solution of
 backward and forward Riccati equations of the type considered in text
 in terms of expectations over the solution of some SDE:
 \begin{proposition}
   \label{th:riccati1}
   Given $A: [0,T] \to \RR^{n\times n}$,
   $B: [0,T] \to\RR^{n\times n}$,
   $\eta: [0,T] \to \RR^n$, all in
   $C^1([0,T])$ and with $A$ symmetric, as well as
   $C\in \RR^{n\times n}$, with $C$ symmetric, and $\xi\in\RR^n$, the
   following equality holds
  \begin{equation}
  \label{eq:8aa}
  \begin{aligned}
    &\EE^z \exp\left(\int_0^T \left( \tfrac12 A(t) : Z_t Z_t+ \eta(t)
        \cdot Z_t\right) \,dt +\tfrac12 C : Z_T Z_T + \xi\cdot
      Z_T\right)\\
    & = G(0) \exp\left(\tfrac12 \langle z, W(0) z\rangle +
      r(0)\cdot z\right)\,,
  \end{aligned}
\end{equation}
where $Z_t$ solves the linear SDE
\begin{equation}
  \label{eq:45}
  dZ_t = B(t) Z_t dt + \sigma dW_t\,;
\end{equation}
and $W:[0,T]\to \RR^{n\times n}$,  $r: [0,T]\to \RR^{n}$, $G: [0,T]\to \RR$, solve
\begin{equation}
  \label{eq:46}
  \left\{
    \begin{aligned}
      & \dot W + B^T(t) W  + WB(t) +  W aW + A(t) =0, && W(T) =
      C\,,\\
      & \dot r + B^T(t) r + Wa r + \eta(t) = 0, && r(T) = \xi\,,\\
      &\dot G + \tfrac12 \tr (a W) G + \tfrac12 a: r r G =
      0, \qquad && G(T) =1\,.
    \end{aligned}
    \right.
\end{equation}
\end{proposition}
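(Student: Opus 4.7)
The strategy is a standard Feynman--Kac argument combined with a quadratic exponential ansatz, analogous to the manipulations used in the proofs of Propositions~\ref{thm:expectation} and~\ref{thm:expectationb}. First I would define the conditional expectation
\begin{equation*}
 v(t,z) = \EE^z \exp\left(\int_t^T \left( \tfrac12 A(s) : Z_s Z_s + \eta(s)\cdot Z_s\right)ds + \tfrac12 C:Z_T Z_T + \xi\cdot Z_T\right),
\end{equation*}
so that the left-hand side of~\eqref{eq:8aa} equals $v(0,z)$. Since $Z_t$ satisfies the linear SDE~\eqref{eq:45} with generator $B(t)z\cdot\nabla + \tfrac12 a:\nabla\nabla$, Feynman--Kac tells us that $v$ solves the backward parabolic PDE
\begin{equation*}
 \partial_t v + B(t)z\cdot\nabla v + \tfrac12 a:\nabla\nabla v + \left(\tfrac12 A(t):zz + \eta(t)\cdot z\right) v = 0, \qquad v(T,z) = \exp\left(\tfrac12 C:zz + \xi\cdot z\right).
\end{equation*}

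Next I would make the quadratic exponential ansatz
\begin{equation*}
 v(t,z) = G(t)\exp\left(\tfrac12 \langle z, W(t) z\rangle + r(t)\cdot z\right),
\end{equation*}
with $W(t)$ symmetric, compute $\nabla v = v\,(W(t)z + r(t))$ and $\nabla\nabla v = v\,[(Wz+r)(Wz+r)^\top + W]$, and plug into the PDE. Dividing by $v$ and collecting terms according to their polynomial degree in~$z$ yields three identities: the coefficient of the quadratic form in $z$ gives $\dot W + B^\top W + W B + W a W + A = 0$, the coefficient linear in $z$ gives $\dot r + B^\top r + W a r + \eta = 0$ (after using the symmetry of $W$ and~$a$), and the $z$-independent piece gives $\dot G/G + \tfrac12 \tr(aW) + \tfrac12\langle r, a r\rangle = 0$, which together are precisely the three equations~\eqref{eq:46}. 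The terminal conditions $W(T)=C$, $r(T)=\xi$, $G(T)=1$ are read off by matching the ansatz at $t=T$ with $v(T,z)$.

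The identity~\eqref{eq:8aa} then follows by evaluating the ansatz at $t=0$. There is no real obstacle here; the only care required is in the bookkeeping when matching quadratic coefficients, where one has to symmetrize $B^\top W + W B$ and collect the cross term $2\langle z, W a r\rangle$ correctly. Well-posedness of the Riccati equation for $W$ on $[0,T]$ (needed to make the linear equations for $r$ and $G$ meaningful) is implicit in the hypothesis that the expectation on the left is finite; one can conversely take the existence of a bounded $C^1$ solution $(W,r,G)$ as a sufficient condition under which the argument produces a verification-type identity via It\^o's formula applied to $v(t,Z_t)$ composed with the Girsanov-style exponential weight, showing that this process is a martingale and hence that its time-$0$ and time-$T$ expectations agree.
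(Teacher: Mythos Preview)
Your proposal is correct and follows essentially the same approach as the paper: define $v(t,z)$, identify it with the solution of the backward PDE via Feynman--Kac/It\^o, then verify that the quadratic-exponential ansatz $G(t)\exp(\tfrac12\langle z,W(t)z\rangle+r(t)\cdot z)$ solves this PDE precisely when $(W,r,G)$ satisfy~\eqref{eq:46}. The only cosmetic difference is that the paper first posits $v$ as the PDE solution and then uses It\^o's formula on $v(t,Z_t)\exp(\int_0^t\cdots)$ to recover the expectation, whereas you start from the expectation and quote Feynman--Kac; the content is identical.
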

\noindent Note that the solution to the equation for $G(t)$
in~\eqref{eq:46} can be expressed as
\begin{equation}
  \label{eq:49}
  G(t) = \exp\left( \tfrac12 \int_t^T \left(\tr (a W(s)) + a: r(s)
    r(s) \right) ds \right)\,.
\end{equation}

\begin{proof}
  Let $v: [0,T]\times \RR^n\to \RR$ solve
  \begin{equation}
  \label{eq:pdea1}
  \partial_t v + \langle B(t) z, \nabla v\rangle +
  \tfrac12 a :\nabla \nabla v + \left( \tfrac12 A(t)  : z z + \eta(t)
    \cdot z\right) v = 0\,,
\end{equation}
for the final condition
\begin{equation}
  \label{eq:qa1}
  v(T,z)
  = \exp\left(\tfrac12 C : zz + \xi\cdot z\right)\,.
\end{equation}
Then: (i) computing
$d\left(v(t,Z_t) \exp\left(\int_0^t \left( \tfrac12 A(s) : Z_s Z_s+
      \eta(s) \cdot Z_s\right) \,ds\right)\right)$ via Ito's formula,
taking expectation, and integrating on $t\in[0,T]$ shows that the
solution to this equation at time $t=0$ can be expressed as the
expectation in~\eqref{eq:8aa}; and (ii) substituting
$G(t) \exp\left(\tfrac12 \langle z, W(t) z\rangle + r(t)\cdot
  z\right)$ in~\eqref{eq:pdea1} shows that this expression satisfies
this equation as well as~\eqref{eq:qa1} if $W(t)$,  $r(t)$, and $G(t)$
satisfy~\eqref{eq:46}.
\end{proof}

\begin{proposition}
  \label{th:riccati2}
  Using the same notations as in Proposition~\ref{th:riccati1}, let
  $Q: [0,T]\to \RR^{n\times n}$ solve
\begin{equation}
  \dot Q = B(t) Q  + QB^T(t) + QA(t) Q + a, \qquad Q(0) = Q_0\,,
\end{equation}
for some $Q_0= Q_0^T$, positive semidefinite (possibly zero). Then
\begin{equation}
  \label{eq:47}
  Q(t) = \EE Z_t^Q (Z_t^Q)^T\,,
\end{equation}
where $Z^Q_t$ solves the nonlinear (in the sense of McKean) SDE
\begin{equation}
  \label{eq:45Q}
  dZ^Q_t = B(t) Z^Q_t dt  + \tfrac12 Q A(t) Z^Q_t dt + \sigma dW_t\,,
\end{equation}
and the expectation in~\eqref{eq:47} is taken over solutions
to~\eqref{eq:45Q} with initial conditions drawn from a Gaussian
distribution with mean zero and covariance $Q_0$.
\end{proposition}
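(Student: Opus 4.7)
The plan is to circumvent the apparent self-reference in the SDE~\eqref{eq:45Q} by interpreting $Q(t)$ on its right-hand side as the deterministic solution of the forward Riccati equation stated in the proposition, and then verifying \emph{a posteriori} that this interpretation is consistent. With $Q(t)$ viewed as a prescribed time-dependent matrix, \eqref{eq:45Q} becomes a genuine linear SDE with additive noise whose unique solution, starting from a centered Gaussian initial condition, is itself a centered Gaussian process; the only task is then to check that its covariance matches the prescribed $Q(t)$.

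To execute this, I would set $\tilde Q(t) := \EE Z_t^Q (Z_t^Q)^\top$ and apply It\^o's formula to $Z^Q_t (Z^Q_t)^\top$. Collecting the drift and quadratic-variation terms and taking expectations (using $\sigma\sigma^\top = a$), I expect to obtain
\begin{equation*}
\dot{\tilde Q} = B(t)\tilde Q + \tilde Q B^\top(t) + \tfrac12 Q(t) A(t) \tilde Q + \tfrac12 \tilde Q A(t) Q(t) + a, \qquad \tilde Q(0) = Q_0,
\end{equation*}
where symmetry of $A$ and $Q$ has been used. This is a linear inhomogeneous ODE for $\tilde Q$ once $Q$ is prescribed; substituting the ansatz $\tilde Q = Q$ reduces it to $\dot Q = BQ + QB^\top + QAQ + a$ with $Q(0) = Q_0$, which is exactly the Riccati equation that $Q$ solves by construction. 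Standard uniqueness of solutions of linear ODEs then forces $\tilde Q \equiv Q$ on $[0,T]$, closing the consistency loop and establishing~\eqref{eq:47}.

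I expect the main obstacle to be the conceptual justification of the McKean-type reading of~\eqref{eq:45Q}: strictly speaking, the drift depends on the law of the process itself, so one must either adopt the deterministic interpretation above as a definition, or phrase the result as a fixed-point statement. For the latter, one would introduce the map sending a continuous function $R: [0,T] \to \RR^{n\times n}$ with $R(0) = Q_0$ to the covariance of the linear SDE $dZ_t = (B(t) + \tfrac12 R(t) A(t)) Z_t\,dt + \sigma\,dW_t$, and show via a Gronwall estimate, using boundedness of $A$ and $B$ on $[0,T]$, that this map admits the Riccati solution as its unique fixed point. Either route suffices; the ancillary technical item of global existence of the Riccati solution on $[0,T]$, which follows from positive semi-definiteness propagation together with linear growth bounds on $\|Q(t)\|$, is routine and would be taken as a standing assumption. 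As a sanity check, one also recovers the classical representation of the Lyapunov equation as the covariance of a linear Gaussian process in the degenerate case $A \equiv 0$, where the extra drift $\tfrac12 QA Z^Q_t$ vanishes.
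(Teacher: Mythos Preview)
Your proposal is correct and follows essentially the same approach as the paper: apply It\^o's formula to $Z^Q_t(Z^Q_t)^\top$, take expectations to obtain the linear ODE for $\tilde Q = \EE[Z^Q_t(Z^Q_t)^\top]$ with the prescribed $Q$ appearing in the coefficients, observe that $\tilde Q = Q$ solves it with the same initial condition, and conclude by uniqueness. The paper's proof is terser (it omits your explicit discussion of the McKean interpretation and the fixed-point reformulation), but the argument is the same.
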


\begin{proof}
  Application of Ito's formula shows that
  \begin{equation}
    \label{eq:48}
    \begin{aligned}
      \frac{d}{dt} \EE [Z_t^Q (Z_t^Q)^T] & = B(t) \EE [Z_t^Q
      (Z_t^Q)^T]
      + \EE [Z_t^Q (Z_t^Q)^T] B^T(t) \\
      & + \tfrac12 Q A(t) \EE [Z_t^Q (Z_t^Q)^T] + \tfrac12 \EE [Z_t^Q
      (Z_t^Q)^T] A(t) Q+  a\,.
    \end{aligned}
  \end{equation}
  Since $\EE [Z_0^Q (Z_0^Q)^T] = Q_0 = Q(0)$ initially, this equation shows that
  $\EE [Z_t^Q (Z_t^Q)^T] = Q(t)$ for $t\ge0$.
\end{proof}

The following proposition offers a practical way to simulate~\eqref{eq:45Q}:
\begin{proposition}
  \label{th:MC}
  Let $\{Z^i_t\}_{i=1}^n$ solve
\begin{equation}
  \label{eq:50}
  dZ^i_t = B(t) Z^i_t dt  + \tfrac12 n^{-1} \sum_{j=1}^n \<Z^i_t ,A(t)
  Z^j_t\> Z^j_t dt + \sigma dW^i_t\,, \qquad i=1,\ldots, n\,,
\end{equation}
where $\{W^i_t\}_{i=1}^n$ is a set of independent Wiener
processes. Then if we drawn the initial conditions for~\eqref{eq:50}
independently from a Gaussian distribution with zero mean and
covariance $Q_0$, we have
\begin{equation}
  \label{eq:51}
  \frac1n \sum_{i=1}^n Z^i_t (Z^i_t)^T \to Q(t) \qquad \text{almost
    surely as \ $n\to\infty$}
\end{equation}
\end{proposition}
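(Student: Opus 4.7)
The plan is to recognize~\eqref{eq:50} as the mean-field particle system associated with the McKean--Vlasov SDE~\eqref{eq:45Q}, and to deduce~\eqref{eq:51} via a propagation-of-chaos coupling combined with a moment bootstrap. Writing $\hat Q_n(t) := \tfrac1n\sum_{j=1}^n Z^j_t (Z^j_t)^\top$ for the empirical covariance of the interacting particles, the nonlinear drift in~\eqref{eq:50} is exactly $\tfrac12 \hat Q_n(t) A(t) Z^i_t$, so each $Z^i_t$ solves~\eqref{eq:45Q} with $Q(t)$ replaced by $\hat Q_n(t)$. The target~\eqref{eq:51} thus reduces to showing $\hat Q_n(t) \to Q(t)$ almost surely.

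I would introduce a synchronous coupling on a common probability space: for each $i$, let $\bar Z^i_t$ solve the McKean--Vlasov SDE~\eqref{eq:45Q} driven by the same Brownian motion $W^i_t$ and started from the same initial datum $Z^i_0$. Then $\{\bar Z^i_t\}_{i\in\NN}$ are i.i.d.\ copies with covariance $Q(t)$ by Proposition~\ref{th:riccati2}, and the empirical covariance $\bar Q_n(t) := \tfrac1n \sum_{j=1}^n \bar Z^j_t (\bar Z^j_t)^\top$ converges to $Q(t)$ almost surely by the entry-wise strong law of large numbers. It therefore suffices to show $\hat Q_n(t) - \bar Q_n(t) \to 0$ a.s. The difference $\Delta^i_t := Z^i_t - \bar Z^i_t$ satisfies a pathwise ODE (the two $\sigma\,dW^i_t$ terms cancel by construction),
\[
  \dot\Delta^i_t = B(t)\Delta^i_t + \tfrac12 Q(t) A(t) \Delta^i_t + \tfrac12\bigl(\hat Q_n(t) - Q(t)\bigr) A(t) Z^i_t\,,
\]
driven precisely by the discrepancy we want to control. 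Combining the elementary bound $\|\hat Q_n - \bar Q_n\| \le n^{-1}\sum_i |\Delta^i|\,(|Z^i| + |\bar Z^i|)$ with uniform $L^2$-moment bounds on the $Z^i$ and $\bar Z^i$ (which follow on the compact horizon $[0,T]$ from the linear-Gaussian structure), a Gronwall argument on $\max_i \EE |\Delta^i_t|^2$ yields $\max_i \EE |\Delta^i_t|^2 = O(1/n)$ uniformly on $[0,T]$.

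To upgrade from $L^2$ control to the almost-sure claim, I would promote the Gronwall estimate to higher moments $\max_i \EE |\Delta^i_t|^{2p} = O(n^{-p})$ for any $p\ge 1$, and combine with the analogous bound on $\bar Q_n - Q$ to obtain $\EE \|\hat Q_n(t) - Q(t)\|^{2p} = O(n^{-p})$. Choosing $p\ge 2$ makes $\sum_n \PP(\|\hat Q_n(t)-Q(t)\|>\eps)$ finite by Chebyshev, so Borel--Cantelli delivers a.s.\ convergence pointwise in $t$. The main technical obstacle is precisely this bootstrap: the mean-field drift $\hat Q_n A Z^i$ is quadratic in the particle positions, so a single application of Gronwall to $\EE|\Delta^i|^{2p}$ does not close, because the right-hand side involves $\EE\|\hat Q_n - Q\|^{2p} |Z^i|^{2p}$ which a priori contains terms of order $p$ coupling all particles. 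The resolution is to first establish the $L^2$ estimate $\EE \|\hat Q_n - Q\|^2 = O(1/n)$, insert it into the $L^{2p}$ Gronwall inequality using Cauchy--Schwarz to split off the high-moment factor of $|Z^i|$, and iterate. Under the stated regularity of $A$ and $B$ and over the finite horizon $[0,T]$, this bootstrap is standard, and the conclusion then follows by the argument above.
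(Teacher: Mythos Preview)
Your proposal is correct and takes essentially the same approach as the paper, which simply states that the result is a direct consequence of a propagation-of-chaos argument applied to~\eqref{eq:50}, citing Sznitman. You have in fact supplied far more detail than the paper does, including the synchronous-coupling construction and the moment bootstrap needed to cope with the quadratic (hence non-globally-Lipschitz) interaction through $\hat Q_n$.
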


\begin{proof}
  The proposition is a direct consequence of a `propagation of chaos'
  argument (see e.g~\cite{sznitman91}) applied to~\eqref{eq:50}.
\end{proof}

\section{Link between the forward and backward Riccati
  equations}
\label{sec:riccatilink}

We have:
 \begin{proposition}
   \label{th:riccati1link}
   Using the same notations as in Proposition~\ref{th:riccati1}, let
   $W:[0,T]\to \RR^{n\times n}$ and $Q:[0,T]\to \RR^{n\times n}$ solve
\begin{equation}
  \label{eq:46bb}
  \left\{
    \begin{aligned}
      & \dot W + B^T(t) W  + WB(t) +  W aW + A(t) =0, && W(T) = W_T\,,\\
      & \dot Q  = Q B^T(t)  + B(t)Q +  a + Q A(t) Q =0, && Q(0)=Q_0\,,
    \end{aligned}
    \right.
  \end{equation}
  where $W_T\in \RR^{n\times n}$ and $Q_0\in \RR^{n\times n}$, both
  symmetric. Then the following identity holds
  \begin{equation}
    \label{eq:39}
    \det \left(\text{Id} - W_T Q(T)\right) =  \det \left(\text{Id} - W(0) Q_0\right)
    \exp\left( \int_0^T \tr (A(t) Q(t) - W(t) a) dt\right) 
  \end{equation}
\end{proposition}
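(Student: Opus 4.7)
The plan is to show that the matrix $M(t) := \mathrm{Id} - W(t) Q(t)$ evolves in such a way that $\log \det M(t)$ has a derivative equal exactly to $\tr(A(t) Q(t) - W(t) a)$, so that the identity~\eqref{eq:39} drops out by integration and evaluation at the endpoints, where $M(0) = \mathrm{Id} - W(0) Q_0$ and $M(T) = \mathrm{Id} - W_T Q(T)$.

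First I would differentiate $M(t)$ using the two Riccati equations~\eqref{eq:46bb}, substituting $\dot W = -B^T W - W B - W a W - A$ and $\dot Q = Q B^T + B Q + a + Q A Q$ into $\dot M = -\dot W\, Q - W\, \dot Q$. The key observation is that the cross terms reorganize nicely once one substitutes $W Q = \mathrm{Id} - M$ wherever possible: the quadratic pieces $W a W Q$ and $W Q A Q$ combine with the linear pieces $-W a$ and $A Q$ to produce factors of $M$ on one side, and the terms $B^T W Q$ and $-W Q B^T$ likewise produce $[M, B^T]$. A short computation along these lines should yield
\begin{equation}
  \dot M = M B^T - B^T M - W a M + M A Q.
\end{equation}

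Next I would apply Jacobi's formula, $\tfrac{d}{dt}\log \det M = \tr(M^{-1} \dot M)$, which is legitimate as long as $M(t)$ stays invertible along $[0,T]$. Inserting the expression for $\dot M$ above and using the cyclic property of the trace, the commutator $M B^T - B^T M$ contributes zero, while $\tr(M^{-1} W a M) = \tr(W a)$ and $\tr(M^{-1} M A Q) = \tr(A Q)$, leaving
\begin{equation}
  \tfrac{d}{dt}\log \det M(t) = \tr\bigl(A(t) Q(t) - W(t) a\bigr).
\end{equation}
Integrating from $0$ to $T$ and exponentiating yields~\eqref{eq:39}.

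The only genuine obstacle is invertibility of $M(t)$ on all of $[0,T]$: if $M$ becomes singular at some interior time, the logarithmic derivative is not defined there. I would handle this by first proving the identity under a mild genericity assumption (e.g.\ $W_T$, $Q_0$, and the coefficients chosen so that $M(t)$ stays invertible, which is an open dense condition) and then extending the equality~\eqref{eq:39} to the general case by continuity, since both sides of~\eqref{eq:39} depend continuously (indeed analytically) on the data $W_T$, $Q_0$, $A$, $B$, $a$. The algebraic cancellation in the computation of $\dot M$ is the most delicate routine step, but is mechanical once the substitution $WQ = \mathrm{Id} - M$ is exploited consistently.
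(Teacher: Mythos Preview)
Your proposal is correct and follows essentially the same route as the paper: both compute the derivative of $\mathrm{Id}-WQ$ (the paper writes $\tfrac{d}{dt}(WQ) = (B^T + Wa)(\mathrm{Id}-WQ) - (\mathrm{Id}-WQ)(B^T + AQ)$, which is your $\dot M = MB^T - B^T M - WaM + MAQ$ with a sign flip), apply Jacobi's formula, and use cyclicity of the trace to reduce to $\tr(AQ - Wa)$. Your explicit handling of the possible non-invertibility of $M(t)$ via a density/continuity argument is a point the paper leaves implicit.
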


\begin{proof}
  From~\eqref{eq:46bb}, it is easy to see that the matrix $W(t)Q(t)$ satisfies
  \begin{equation}
    \label{eq:41}
    \frac{d}{dt} (WQ) = (B^T(t) + Wa) (\text{Id}-WQ) -
    (\text{Id}-WQ) ( B^T(t)+ A(t) Q)\,.
  \end{equation}
  As a result, using the Jacobi formula, we deduce that
  \begin{equation}
    \label{eq:52}
    \begin{aligned}
      \frac{d}{dt} \log \det \left (\text{Id} - WQ\right)  = - \tr \left(  (\text{Id} - WQ)^{-1}
        (d/dt)(WQ)\right) = \tr \left(A(t) Q -Wa\right)\,.
      \end{aligned}
    \end{equation}
    Integrating both side on $t\in[0,T]$ and taking the exponential of
    the result gives~\eqref{eq:39}.
\end{proof}

\section{The Radon's Lemma for the Riccati
  equation}
\label{sec:riccati}

It is well-known that a matrix-Riccati equation can be equivalently
represented by a linear matrix equation. Sometimes, this
transformation is called Radon's Lemma. Consider a differential
equation of the form
\begin{equation}
\frac{d}{dt} \begin{pmatrix} \Phi \\ \Theta \end{pmatrix}  = 
  \begin{pmatrix}
    M_{11} & M_{12} \\
    M_{21} & M_{22}
  \end{pmatrix}
  \begin{pmatrix} \Phi \\ \Theta \end{pmatrix}\,,
\end{equation}
and set $W=\Theta\Phi^{-1}$. Then
\begin{eqnarray*}
\dot W &=& \dot \Theta \Phi^{-1} + \Theta \dot \Phi^{-1} = \dot \Theta \Phi^{-1} - \Theta \Phi^{-1} \dot \Phi \Phi^{-1} \\
           &=& (M_{21}\Phi + M_{22}\Theta) \Phi^{-1} - \Theta\Phi^{-1}(M_{11}\Phi + M_{12}\Theta)\Phi^{-1} \\
           &=& M_{21} + M_{22}W - WM_{11} - WM_{12}W\,.
\end{eqnarray*}
We can apply this to the instanton matrix-Riccati equation by
choosing $M_{21} = -\< \nabla \nabla b,\theta\>$,
$M_{22} = - (\nabla b)^T$, $M_{11} = \nabla b$, and $M_{12}=a$ to
obtain:
\begin{equation}
  \label{eq:radon}
\frac{d}{dt} \begin{pmatrix} \Phi \\ \Theta \end{pmatrix}  = 
  \begin{pmatrix}
    \nabla b & a \\
    -\< \nabla \nabla b,\theta\> & -(\nabla b)^T
  \end{pmatrix}
  \begin{pmatrix} \Phi \\ \Theta \end{pmatrix}\,,
\end{equation}
and as final conditions we can choose $\Theta(T)=W(T)$ and
$\Phi(T)=\text{Id}$. While it seems appealing to solve the Riccati
equation this way, in practice the issue is that the equation for $\Phi$
is well-posed forward in time, whereas that for $\Theta$ is well-posed
backward in time. This means that the system~\eqref{eq:radon} has to
be solved iteratively, and the final condition are not simple to
impose. This is why we did not use~\eqref{eq:radon} in this paper.

\section{Derivation of
  $\det_\perp H = (\hat n^\top H^{-1} \hat n) \det H $}
\label{appendix:comp_det_perp}

Given an invertible, positive definite $H=H^\top\in \RR^{n\times n}$ and
a unit vector $\hat n$ we define $\det_\perp H$ via
\begin{equation}
  \label{eq:A1}
  (2\pi)^{(n-1)/2} \left| {\det_\perp} H \right|^{-1/2} =\int_{P}
  e^{-\frac12 \< y,H y\>} d\sigma(y) =: A
\end{equation}
where $P = \{y : \langle\hat n, y\rangle = 0\}$. For $m>0$, let 
\begin{equation}
  \label{eq:A2}
  H_m = H + m \hat n \hat n^\top 
\end{equation}
Clearly
\begin{equation}
  \label{eq:A3}
  A = \int_{P}
  e^{-\frac12 \< y,H_m y\>} d\sigma(y)
\end{equation}
since $\langle\hat n, y\rangle = 0$ in $P$. At the same time we have
\begin{equation}
  \label{eq:A4}
  \begin{aligned}
    (2\pi)^{n/2} \left| \det H_m \right|^{-1/2} & =\int_{\RR^n}
    e^{-\tfrac12 \< u,H_m u\>} du \\
    & = (\hat t \cdot \hat n) \int_{P} \int_{\RR} e^{-\tfrac12 \<
      (y+s\hat t) ,H_m  (y+s\hat t)\>}  ds d\sigma(y) \\
    & = (\hat t \cdot \hat n) \int_{P} e^{-\tfrac12 \< y ,H_m y\>}
    d\sigma(y) \int_{\RR} e^{-\tfrac12 s^2 \< \hat t,H_m \hat t\>} ds
  \end{aligned}
\end{equation}
where we used $u=y + s \hat t$ to change integration variable with
\begin{equation}
  \label{eq:A5}
  \hat t = \frac{H^{-1} \hat n }{|H^{-1} \hat n|} \quad
  \Leftrightarrow \quad \hat n = \frac{H \hat t }{|H \hat t|}
\end{equation}
Comparing~\eqref{eq:A1} and~\eqref{eq:A5} we deduce
\begin{equation}
  \label{eq:A6}
  \left| {\det_\perp} H \right| = (\hat n \cdot \hat
  t)^2 |\hat t^\top H_m \hat t|^{-1}  \left| \det H_m \right|
\end{equation}
Since
\begin{equation}
  \label{eq:A7}
  \begin{aligned}
    t^\top H_m \hat t & = \hat t^\top (H + m \, \hat n \, \hat n^\top)
    \hat t\\
    & = \hat t^\top H \hat t + m (\hat n \cdot \hat t)^2
  \end{aligned}
\end{equation}
we have
\begin{equation}
  \label{eq:A8}
  (\hat n \cdot \hat t)^2 | t^\top H_m \hat t |^{-1} = \frac{\hat n \cdot
  \hat t}{|H \hat t| + m (\hat n \cdot \hat t)}
\end{equation}
and~\eqref{eq:A6} can be written as
\begin{equation}
  \label{eq:A6b}
  {\det_\perp} H = \frac{\hat n \cdot
  \hat t}{|H \hat t| + m (\hat n \cdot \hat t)}  \, \det H_m 
\end{equation}
Next, write~\eqref{eq:A2} as
\begin{equation}
  \label{eq:A9}
  \begin{aligned}
    H_m & = H \left( \text{Id} + m H^{-1} \hat n \, \hat n^\top \right) \\
    & = H \left( \text{Id} + m | H^{-1} \hat n| \, \hat t \, \hat n^\top \right) 
  \end{aligned}
\end{equation}
so that 
\begin{equation}
  \label{eq:A10}
  \det H_m = \det H \det \left( \text{Id} + m | H^{-1} \hat n| \, \hat t
    \,  \hat n^\top \right) 
\end{equation}
The matrix $\text{Id} + m | H^{-1} \hat n| \hat t  \hat n^\top$ has $n-1$
eigenvectors perpendicular to $\hat n$, each with eigenvalue 1, and 
one eigenvector $\hat t$ with eigenvalue $1
  + m | H^{-1} \hat n| (\hat n \cdot \hat t)$ since
\begin{equation}
  \label{eq:A11}
  (\text{Id} + m | H^{-1} \hat n| \, \hat t  \, \hat n^\top) \hat t = \left(1
  + m | H^{-1} \hat n| (\hat n \cdot \hat t) \right)\hat t 
\end{equation}
Therefore
\begin{equation}
  \label{eq:A6c}
  {\det_\perp} H = \frac{(\hat n \cdot
    \hat t) \left(1
      + m | H^{-1} \hat n| (\hat n \cdot \hat t) \right)
  }{|H \hat t| + m (\hat n \cdot \hat t)}  \, \det H 
\end{equation}
Since $|H^{-1} \hat n| =|H\hat t|^{-1}$ this can be written as
\begin{equation}
  \label{eq:A12}
  \begin{aligned}
   {\det_\perp} H & = \frac{(\hat n \cdot
      \hat t)}{|H \hat t|} \left|\det H \right|\\
    & = (\hat n^\top H^{-1} \hat n)  \, \det H 
  \end{aligned}
\end{equation}

\section{General form of the instanton and Riccati equations for SPDEs}
\label{sec:general}
Let us generalize~\eqref{eq:SPDE-generic} into
\begin{equation}
  \label{eq:SPDE-generic00}
  \partial_t u = \mathcal B[u] + \sqrt{\eps}\eta\,,\quad u(0)=u_0\,,
\end{equation}
for $t\in[0,\infty)$, $x\in\Omega\subseteq \RR^d$, and
$u:[0,\infty)\times\Omega\to \RR$, and where  $\mathcal B[u]$ is a (possibly
nonlinear) differential operator in the spatial variable $x$ and the
noise $\eta$ is white-in-time Gaussian with covariance
\begin{equation}
  \label{eq:SPDE-noise00}
  \EE \eta(t,x) \eta(t',x') = \delta(t-t') \chi(x,x')\,.
\end{equation}
If we consider again
probabilities that a linear observable exceeds a certain threshold, 
\begin{equation}
  \label{eq:probaPDE00}
  P^{u_0}_\eps(T,z) = \mathbb P^{u_0} \Big(\int_\Omega \phi(x) u(T,x)\,dx\ge z\Big)\,,
\end{equation}
we formally obtain a proposition analogous to Proposition~\ref{thm:SPDEs}:
\begin{proposition}[Probabilities for SPDEs -- general case]
  \label{thm:SPDEs00}
  Let the fields $u(t,x)$, $\theta(t,x)$ solve the instanton
  equations 
  \begin{equation}
    \left\{\begin{aligned}
      \partial_tu &= \mathcal B[u] + \int_\Omega \chi(x,y)\theta(t,y)\,dy\,,&&u(0) = u_0\,,\\
      \partial_t \theta &= -\int_\Omega \frac{\delta\mathcal
        B[u](t,y)}{\delta u(x)}
      \theta(t,y)\,dy\,,&&\theta(T) = \phi\,,
    \end{aligned}\right.
  \end{equation}
  and let $\mathcal Q(t,x,y)$ solve
  \begin{equation}
    \label{eq:Q-SPDE00}
    \begin{aligned}
      \partial_t \mathcal Q =& \int_{\Omega^3} \mathcal Q(t,x,z_1)
      \mathcal K(t,z_1,z_2) \mathcal Q(t,z_2,y)\,dz_1\,dz_2\,dz_3\\& +
      \int_\Omega \frac{\delta \mathcal B[u](t,x)}{\delta u(t,z)}
      \mathcal Q(t,y,z)\,dz + \int_\Omega \frac{\delta \mathcal
        B[u](t,y)}{\delta u(t,z)} \mathcal Q(t,z,x)\,dz + \chi(x,y)\,,
    \end{aligned}
  \end{equation}
  with $\mathcal Q(0) = 0$ and where we denote
  \begin{equation}
    \mathcal K(t,x,y) = \int_\Omega \frac{\delta \mathcal B[u](t,z)}{\delta u(t,x) \delta u(t,y)} \theta(t,z)\,dz\,.
  \end{equation}
  Then the probability $P^{u_0}_\eps(T,z)$ in~\eqref {eq:probaPDE00} satisfies
  \begin{equation}
    \lim_{\eps\to0} \frac{P^{u_0}_\eps(T,z)}{\bar P^{u_0}_\eps(T,z)} = 1\,,
  \end{equation}
  where
  \begin{equation}
    P^{u_0}_\eps(T,z) = (2\pi)^{-1/2} \eps^{1/2} \mathcal V(T,u_0) \exp\left(-\frac1{2\eps} \int_0^T \int_{\Omega^2} \theta(t,x) \chi(x,y) \theta(t,y)\,dx\,dy\,dt\right)
  \end{equation}
  with
  \begin{equation}
    \mathcal V(T,u_0) = \left(\int_{\Omega^2} \theta(T,x) \mathcal Q(T,x,y) \theta(T,y)\,dx\,dy\right)^{1/2} \exp\left(\frac12 \int_0^T \int_{\Omega} \mathcal K(t,x,x) \mathcal Q(t,x,x)\,dx\,dt\right)\,.
  \end{equation}
\end{proposition}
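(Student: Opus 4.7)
The plan is to mimic the strategy used for Proposition~\ref{thm:probability} and its infinite-dimensional transcription in Proposition~\ref{thm:SPDEs}, but carried out at the generic level of the operator $\mathcal B[u]$. First I would reduce the computation of $P^{u_0}_\eps(T,z)$ to that of a pointwise density by considering the auxiliary Laplace-type expectation $C^{u_0}_\eps(\eta) = \EE^{u_0} \exp(\eps^{-1}\int_\Omega \eta(x) u(T,x)\,dx)$. The infinite-dimensional analogue of the proof of Proposition~\ref{thm:expectation} applied to $C^{u_0}_\eps(\eta)$ identifies its leading $\eps\to 0$ behaviour with the instanton action along the pair $(u,\theta)$ defined by the first system in the statement, with $\theta(T)=\eta$, and fixes the prefactor contribution in terms of a field $\mathcal W(t,x,y)$ satisfying a backward functional Riccati equation.

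Next I would substitute the WKB ansatz $u_\eps(t,v) = Z_\eps(t,v)\exp(\eps^{-1} S(t,v))$ into the formal backward Kolmogorov equation for the SPDE~\eqref{eq:SPDE-generic00}, whose generator contains the functional Laplacian $\tfrac12\int_{\Omega^2}\chi(x,y)\frac{\delta^2}{\delta v(x)\delta v(y)}$. At order $\eps^{-1}$ this yields the functional Hamilton--Jacobi equation
\begin{equation*}
  \partial_t S + \int_\Omega \frac{\delta S}{\delta v(x)}\,\mathcal B[v](x)\,dx + \tfrac12\int_{\Omega^2}\frac{\delta S}{\delta v(x)}\,\chi(x,y)\,\frac{\delta S}{\delta v(y)}\,dx\,dy = 0\,,
\end{equation*}
whose characteristics, with $\theta(t,x) = \delta S/\delta v(x)|_{v=u(t,\cdot)}$, are exactly the stated instanton equations. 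At order $\eps^0$, evaluating $Z_\eps$ along the characteristic, in the spirit of~\eqref{eq:find_G}, produces a scalar transport equation whose forcing is the trace $\int_\Omega\int_\Omega\chi(x,y)(\delta^2 S/\delta v(x)\delta v(y))\,dx\,dy$. Differentiating the Hamilton--Jacobi equation twice in $v$ and evaluating on the instanton then yields the backward Riccati equation for $\mathcal W$, and inverting the kernel, $\mathcal Q = \mathcal W^{-1}$, gives the forward equation~\eqref{eq:Q-SPDE00}. The initial condition $\mathcal Q(0)=0$ arises because at $t=0$ the density of $u$ concentrates on the deterministic flow from $u_0$, reproducing the argument for $Q_{x,y}(0)=0$ in Proposition~\ref{thm:pdfs}. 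The key identity needed at this stage is the infinite-dimensional analogue of Proposition~\ref{th:riccati1link}, which converts the logarithmic derivative $\tfrac12\int\chi(x,y)\mathcal W(t,x,y)\,dx\,dy$ appearing in the prefactor into $\tfrac12\int \mathcal K(t,x,x)\mathcal Q(t,x,x)\,dx$ modulo a boundary term that drops out because $\mathcal Q(0)=0$ and the observable is linear.

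Finally, having obtained a pointwise sharp estimate for $\rho^{u_0}_\eps(T,\cdot)$, I would evaluate $\int_{\{v:\int\phi v\ge z\}}\bar\rho^{u_0}_\eps(T,v)\,\mathcal D v$ by the infinite-dimensional Laplace method, exactly as in the proof of Proposition~\ref{thm:probability}. Because the observable is linear, the hypersurface $\{\int\phi v = z\}$ is a hyperplane, the curvature tensor analogous to $F(T,x)$ in~\eqref{eq:33} reduces to the identity, and one only needs a Gaussian integral in the directions perpendicular to $\theta(T,\cdot)=\phi$ together with a one-sided exponential integral in the parallel direction. The perpendicular Gaussian integral contributes the factor $\langle\phi,\mathcal Q(T)\phi\rangle^{-1/2}$ appearing in $\mathcal V(T,u_0)$ (via the infinite-dimensional version of Appendix~\ref{appendix:comp_det_perp}), the parallel integral supplies the $\eps^{1/2}$, and the exponent collects the instanton action $\tfrac12\int_0^T\int_{\Omega^2}\theta\chi\theta$.

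The hard part, as in Proposition~\ref{thm:SPDEs}, will be to make rigorous sense of the above in infinite dimensions: the functional Hamilton--Jacobi and Riccati equations, the formal ratio of functional determinants that cancels when restricting to the hyperplane, and the control of the WKB remainder all require the Gelfand--Yaglom type machinery discussed in~\cite{schorlepp2021gelfandyaglom}. In line with the paper's stated scope, I would not attempt this rigorous step but present the argument at the same formal level as the rest of Section~\ref{sec:infinited}, and omit the proof in the main text just as is done for Proposition~\ref{thm:SPDEs}.
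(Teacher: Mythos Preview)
Your proposal is correct and matches the paper's own treatment: the paper does not give a separate proof of Proposition~\ref{thm:SPDEs00}, treating it as the formal transcription to general $\mathcal B[u]$ of the finite-dimensional arguments behind Propositions~\ref{thm:expectation}, \ref{thm:pdfs}, and~\ref{thm:probability}, with the curvature factor $F$ dropping out because the observable is linear. Your outline reproduces exactly this chain of reasoning at the same formal level, including the explicit acknowledgment that the rigorous infinite-dimensional justification is omitted.
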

\end{document}